\def \figPath {./pic_corr}
\newtheorem{lemma}{Lemma}
\newtheorem{theorem}{Theorem}
\newtheorem{definition}{Definition}
\newtheorem{example}{Example}
\newtheorem{corollary}{Corollary}
\newtheorem{remark}{Remark}
\DeclareMathOperator*{\argmax}{arg\,max}
\newcommand{\eqdef}{\overset{\mathrm{def}}{=\joinrel=}}
\newcommand*{\indep}{%
        \mathbin{%
                \mathpalette{\@indep}{}%
        }%
}
\newcommand*{\nindep}{%
        \mathbin{
                \mathpalette{\@indep}{\not}
        }%
}
\newcommand*{\@indep}[2]{%
        \sbox0{$#1\perp\m@th$}
        \sbox2{$#1=$}
        \sbox4{$#1\vcenter{}$}
        \rlap{\copy0}
        \dimen@=\dimexpr\ht2-\ht4-.2pt\relax
        \kern\dimen@
        {#2}%
        \kern\dimen@
        \copy0 
}
\newenvironment{alignSmall}{\nobreak\small\noindent\align}{\endalign}
\newenvironment{alignFootnotesize}{\nobreak\footnotesize\noindent\align}{\endalign}
\newenvironment{alignScriptsize}{\nobreak\scriptsize\noindent\align}{\endalign}
\newcommand{\cyrev}[1]{\textcolor{black}{#1}}
\begin{document}

%
\title{Quantifying Differential Privacy  in Continuous Data Release under Temporal Correlations}
%
%
%
%

\author{Yang~Cao,
        Masatoshi~Yoshikawa,
       Yonghui~Xiao, 
        and Li~Xiong 
\IEEEcompsocitemizethanks{
        \IEEEcompsocthanksitem Yang Cao, Xiong Li are with the Department
of Math and Computer Science, Emory University, Atlanta, GA, 30322.\protect\\
E-mail: \{ycao31, lxiong\}@emory.edu.
\IEEEcompsocthanksitem Masatoshi Yoshikawa is with the Department
of Social Informatics, Kyoto University, Kyoto, Japan, 606-8501.\hfil\break 
E-mail: yoshikawa@i.kyoto-u.ac.jp.
\IEEEcompsocthanksitem Yonghui Xiao is with Google Inc., Mountain View, CA, 94043\hfil\break 
E-mail: yohuxiao@gmail.com.
}
\thanks{This paper is a long version of \cite{cao_quantifying_2017}.}
}

%
%

\markboth{Journal of \LaTeX\ Class Files,~Vol.~XX, No.~X, August~201X}%
{Shell \MakeLowercase{\textit{et al.}}: Bare Demo of IEEEtran.cls for Computer Society Journals}
%



\IEEEtitleabstractindextext{%
\begin{abstract}
Differential Privacy (DP) has received increasing attention as a rigorous privacy framework.
Many existing studies employ traditional DP mechanisms (e.g., the Laplace mechanism) as primitives to continuously release private data for protecting privacy at each time point (i.e., event-level privacy), which assume that the data at different time points are independent, or that adversaries do not have knowledge of  correlation between data.
However, continuously generated data  tend to be temporally correlated, and such correlations can be acquired by adversaries.
In this paper, we investigate the potential privacy loss of  a traditional DP mechanism under temporal correlations.
First, we analyze the privacy leakage of a DP mechanism under temporal correlation that can be modeled using Markov Chain.
Our analysis reveals that, the event-level privacy loss of a DP mechanism may \textit{increase over time}.
We call the unexpected privacy loss \textit{temporal privacy leakage} (TPL).
Although TPL may increase over time, we find that its supremum may exist in some cases.
Second, we design  efficient algorithms for calculating TPL.
Third, we propose data releasing mechanisms that convert any existing DP mechanism into one against TPL.
Experiments  confirm that our approach is efficient and effective. 
\end{abstract}

\begin{IEEEkeywords}
Differential Privacy, Correlated data, Markov Model, Time series, Streaming data.
\end{IEEEkeywords}}

\maketitle

\IEEEdisplaynontitleabstractindextext

%
\IEEEpeerreviewmaketitle

\IEEEraisesectionheading{\section{Introduction}\label{sec:introduction}}

%
%
%
%
\IEEEPARstart{W}{ith} the development of IoT technology, vast amount of temporal data generated by individuals are being collected, such as trajectories and web page click streams. 
The {continual publication} of statistics from these temporal data has the potential for significant social benefits such as disease surveillance
, real-time traffic monitoring
and web mining.
However, privacy concerns hinder the wider use of these data.
To this end, \textit{differential privacy under continual observation}
\cite{acs_case_2014}
\cite{bolot_private_2013}
\cite{chan_private_2011}
\cite{dwork_differential_2010-1}
\cite{dwork_differential_2010}
\cite{fan_fast:_2013}
\cite{kellaris_differentially_2014}
\cite{xiao_dpcube:_2014} 
has received increasing attention because differential priavcy provides a rigorous privacy guarantee.
Intuitively, differential privacy (DP)\cite{dwork_differential_2006} ensures that the change of any single user's data has a ``slight'' (bounded in $ \epsilon $) impact on the change in outputs.
The parameter $ \epsilon $ is defined to be a positive real number to control the  level of privacy guarantee.
Larger values of $ \epsilon $ result in larger privacy leakage.
\begin{figure}[t]
        \centering
        \includegraphics[scale=0.3]{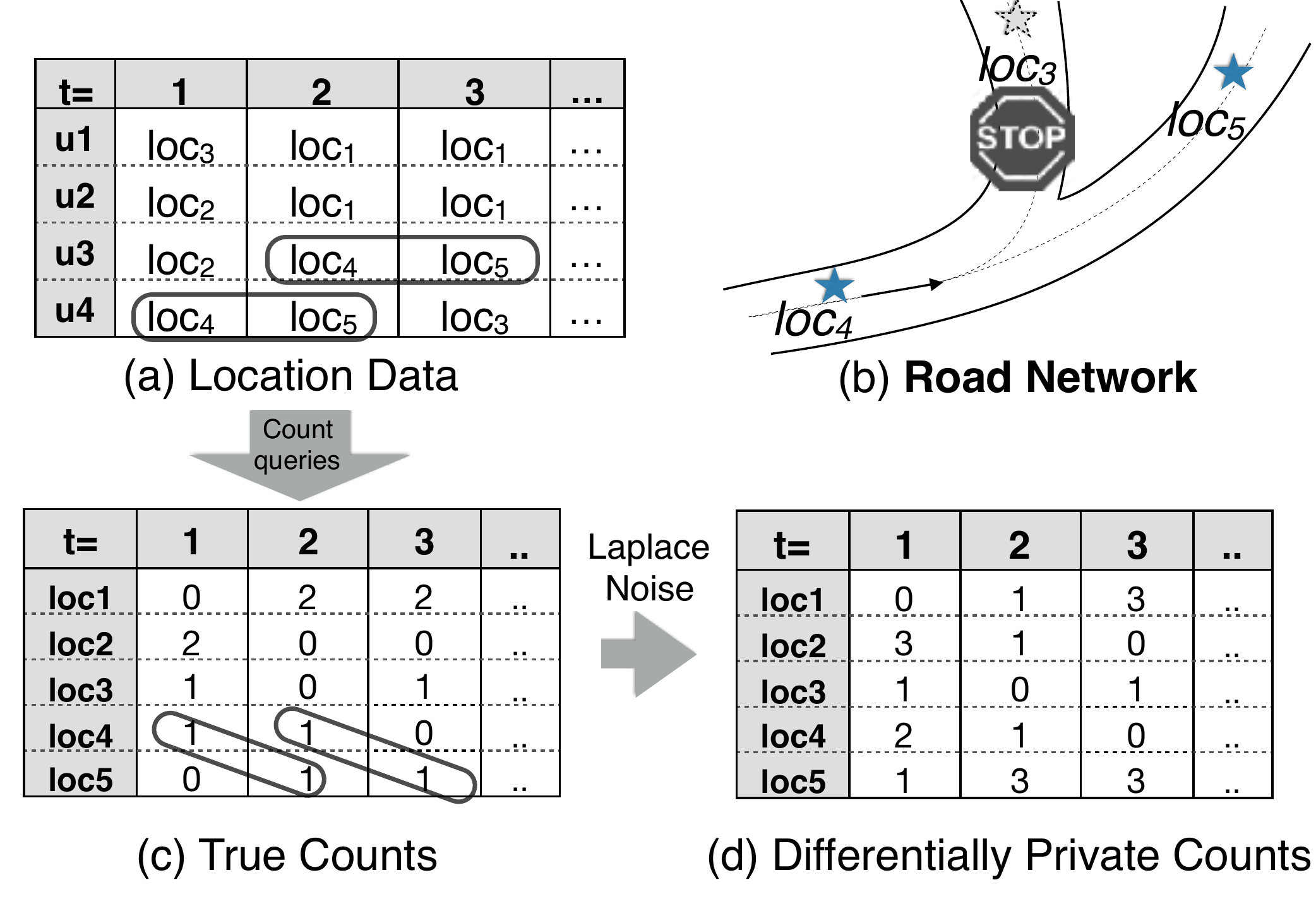} 
        \vspace{-12pt}
        \caption{Continuous Data Release with DP under Temporal Correlations.}
        \label{fig:corr_example}
        \vspace{-15pt}
\end{figure}
However, most existing works on differentially private continuous data release assume data at different time points are independent, or attackers do not have knowledge of correlation between data.
 Example \ref{example} shows that temporal correlations may degrade the expected privacy guarantee of  a DP mechanism.

\begin{example}
\label{example}
Consider the scenario of continuous data release with DP illustrated in Figure \ref{fig:corr_example}.
A trusted server collects users' locations at each time point in Figure \ref{fig:corr_example}(a) and tries to continuously publish differentially private aggregates (i.e., the private counts of people at each location).
Suppose that each user appears at only one location at each time point.
\cyrev{
According to the Laplace mechanism\cite{dwork_calibrating_2006}, adding $ Lap(2/\epsilon)$ noise\footnote{$Lap({b})$ denotes a Laplace distribution with variance $ 2b^2 $.} to perturb each count in Figure\ref{fig:corr_example}(c) can achieve $ \epsilon $-DP. 
It is because  the modification of each cell (raw data) in Figure \ref{fig:corr_example}(a) affects at most two cells (true counts) in Figure \ref{fig:corr_example}(c), i.e., the global sensitivity is $ 2 $.
However, it may not be true under the existence of temporal correlations.
For example, due to the nature of  road networks,  user  may have a particular mobility pattern such as ``always arriving at $ loc_5 $ after visiting $ loc_4 $''  (shown in Figure \ref{fig:corr_example}(b)), leading to the patterns illustrated in solid lines of Figure \ref{fig:corr_example}(a)(c). 
Such temporal correlation can be represented as {\footnotesize $ \Pr(l^{t}=loc_5|l^{t-1}=loc_4) =1$}  where $ l^t $ is the location of a user at time $ t $.
In this case,  adding $ Lap(2/\epsilon)$ noise only achieves  $2 \epsilon $-DP because the change of one cell in Figure \ref{fig:corr_example}(a) may affect four cells in Figure \ref{fig:corr_example}(c) in the worst case, i.e., the global sensitivity is $ 4 $.
}
\end{example}


Few studies in the literature investigated such potential privacy loss of event-level $ \epsilon $-DP  under temporal correlations as shown in Example \ref{example}.
A direct method (without finely utilizing the {probability} of correlation) involves amplifying the perturbation in terms of \textit{group differential privacy}\cite{chen_correlated_2014}\cite{dwork_calibrating_2006}, i.e., protecting the correlated data as a group.
In Example \ref{example}, for temporal correlation {\footnotesize $ \Pr(l^{t}=loc_5|l^{t-1}=loc_4) =1$}, we can protect the counts of $ loc_4 $ at time $ t -1$ and $ loc_5 $ at time $ t $ in a group  by increasing the scale of the perturbation (using $ Lap(4/\epsilon) $) at each time point.
However, this technique is not suitable for \textit{probabilistic} correlations to finely prevent privacy leakage and may over-perturb the data as a result.  
For example, regardless of whether {\footnotesize $ \Pr(l^{t}=loc_i|l^{t-1}=loc_i) $} is 1 or 0.1, it always protects the correlated data in a bundle.

\cyrev{
Although a few studies investigated the issue of differential privacy under {probabilistic}  correlations,  existing works are not appropriate for \textit{continuous data release} because of two reasons.
First, most of existing works focused on correlations between users (i.e., user-user correlation) rather than correlation between data at different time points (i.e., temporal correlations).
Yang et al. \cite{yang_bayesian_2015} proposed Bayesian differential privacy,  which measures the privacy leakage under probabilistic correlations between users using a Gaussian Markov Random Field without taking time factor into account.
Liu et al. \cite{changchang_liu_dependence_2016} proposed \textit{dependent differential privacy} by introducing two parameters of \textit{dependence size} and \textit{probabilistic dependence relationship} between tuples.
It is not clear whether we can specify these parameters for temporally correlated data since it is not commonly used probabilistic model.
A concurrent work \cite{song_pufferfish_2017} proposed Markov Quilt Mechanism when data correlation is represented by a Bayesian Network (BN).
Although BN is possible to model both user-user correlation (when the nodes in BN are individuals) and temporal correlation (when the nodes  in BN are  individual data at different time points), [17] assumes the private data are released only at one time.
This is the second major difference between our work and all above works: they focus on the setting of ``one-shot'' data release, which means the private data are released only at one time.
To the best of our knowledge, no study reported the dynamical change of the privacy guarantee  in continuous data release.
}

In this work, we call the adversary with knowledge of probabilistic temporal correlations \textit{adversary$ _\mathcal{T} $}.
Rigorously quantifying and bounding the privacy leakage against adversary$ _\mathcal{T} $ in continuous data release remains a challenge.
Therefore, our goal is to solve the following problems:
\begin{itemize}
        \item How do we \textit{analyze the privacy loss} of DP mechanisms against \textit{adversary$ _\mathcal{T} $}? (Section \ref{sec:privacy_analysis})
        \item How do we \textit{calculate} such privacy loss efficiently?  (Section \ref{sec:calculate_tpl})
        \item How do we \textit{bound} such privacy loss in continuous data release?  (Section \ref{sec:release_mechanism})
\end{itemize}

\subsection{Contributions}
Our contributions are summarized as follows.

First, we show that the privacy guarantee of data release \textit{at a single time} is not on its own or static; instead, due to the existence of temporal correlation, it may be increasing with previous release and even future release.
We first adopt a commonly used model Markov Chain to describe temporal correlations between data at different time points, which includes \textit{backward} and \textit{forward} correlations, i.e., $ \Pr(l_i^{t-1}|l_i^{t}) $ and $ \Pr(l_i^{t}|l_i^{t-1}) $ where $ l_i^t $ denotes the value (e.g., location) of user $ i $ at time $ t $.
We then define \textit{Temporal Privacy Leakage (TPL)}  as the privacy loss of a DP mechanism at time $ t $ against adversary$_\mathcal{T} $ who has knowledge of the above Markov model and observes the continuous data release.
We show that TPL includes two parts: \textit{Backward Privacy Leakage (BPL)} and \textit{Forward Privacy leakage (FPL)}.
Intuitively, BPL at time $ t $ is affected by previously releases that are from time $ 1 $ to $ t-1 $, and FPL at time $ t $ will be affected by future releases that are from time $ t+1 $ to the end of release.
We define $ \alpha $-\textit{differential privacy under temporal correlation}, denoted as $ \alpha$-$DP_\mathcal{T}$, to formalize the privacy guarantee of a DP mechanism against adversary$_\mathcal{T} $, which implies the temporal privacy leakage should be bounded in $ \alpha $.
We prove a new form of sequential composition theorem for {\small $ \alpha$-$DP_\mathcal{T}$}, which reveals interesting connections among event-level DP, $ w $-event DP \cite{kellaris_differentially_2014} and user-level DP{ \cite{dwork_differential_2010-1}\cite{dwork_differential_2010}}.

Second, we design efficient algorithms to calculate TPL under given backward and forward temporal correlations.
Our idea is to transform such calculation into finding an optimal solution of a \textit{Linear-Fractional Programming} problem.
This type of optimization can be solved by well-studied methods, e.g., simplex algorithm,  in exponential time.
By exploiting the constraints of this problem, we propose fast algorithms to obtain the optimal solution without directly solving the problem.
Experiments show our algorithms outperform the off-the-shelf optimization software (CPLEX) by several  orders of magnitude with the same optimal answer.

Third, we design novel data release mechanisms against  TPL effectively.
Our scheme is to carefully calibrate the privacy budget of the traditional DP mechanism at each time point to make them satisfy $ \alpha $-DP$_\mathcal{T} $.
A challenge is that TPL is dynamically changing due to previous and even future allocated privacy budgets so that $ \alpha $-DP$_\mathcal{T} $ is hard to achieve.
In our first solution, we prove that, even though TPL may increase over time,  its supremum may exist when allocating a calibrated privacy budget at each time.
That is to say, TPL will never be greater than such supremum $ \alpha $ no matter when the release ends.
However, when the releases are too short (TPL is far from reaching its supremum), 
we may over-perturb the data.
In our second solution, we design another budget allocation scheme that exactly achieves $ \alpha $-DP$_\mathcal{T} $ at each time point.

Finally, experiments confirm the efficiency and effectiveness of our TPL quantification algorithms and data release mechanisms. 
We also demonstrate the impact of different degree of temporal correlations on privacy leakage.

\section{Preliminaries}

\subsection{Differential Privacy}
Differential privacy\cite{dwork_differential_2006} is a formal definition of data privacy.
Let $ D $  be a database and $ D' $ be a copy of $ D $ that is different in any one tuple.
$ D $ and $ D' $ are \textit{neighboring databases}. 
A differentially private output from $ D $ or $ D' $  should exhibit little difference.

\begin{definition}[$ \epsilon $-DP]
        \label{def:dp}
        $\mathcal{M}$ is a randomized mechanism that takes as input $ D $ and outputs $ \bm{r} $, i.e., {\small $ \mathcal{M}(D)=\bm{r}$}.
        $\mathcal{M}$ satisfies $ \epsilon $-differential privacy if the following inequality is true for any pair of neighboring databases $ D, D'$ and all possible outputs $ \bm{r} $.
        \begin{myAlignS}
                \log\frac{\Pr(\bm{r} \in \cyrev{Range(M)}|D)}{\Pr(\bm{r} \in \cyrev{Range(M)}|D')} \leq \epsilon.  \label{eq:dp}
        \end{myAlignS}
\end{definition}

\vspace{-12pt}
The parameter {\small $ \epsilon $}, called the \textit{privacy budget}, represents the degree of privacy offered.
A commonly used method to achieve $ \epsilon $-DP is the Laplace mechanism as shown below.

\begin{theorem}[Laplace Mechanism]
        Let {\footnotesize $ Q:D\rightarrow \mathbb{R} $} be a statistical query on database $ D $.
        The sensitivity of {\footnotesize $ Q $}  is defined as the maximum $ L_1 $ norm between {\footnotesize $ Q(D) $} and {\footnotesize $ Q(D') $}, i.e., {\footnotesize $ \Delta=\max_{D,D'}||Q(D)-Q(D')||_1$}.
        We can achieve $ \epsilon $-DP by adding Laplace noise with scale {\footnotesize $ \Delta/\epsilon $}, i.e., {\footnotesize $ Lap(\Delta/\epsilon) $}.
\end{theorem}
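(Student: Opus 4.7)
The plan is to verify directly from Definition~\ref{def:dp} that the mechanism $\mathcal{M}(D) \eqdef Q(D) + Y$, with $Y \sim Lap(\Delta/\epsilon)$, satisfies $\epsilon$-DP. Since $Q(D)$ is a real-valued query and the noise has a density everywhere positive on $\mathbb{R}$, it suffices to bound the pointwise ratio of output densities, and then integrate to handle arbitrary measurable output sets in $Range(\mathcal{M})$.

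First I would write down the density of the Laplace distribution, namely $f_{Y}(y) = \frac{\epsilon}{2\Delta}\exp\!\bigl(-\epsilon|y|/\Delta\bigr)$, so that the density of $\mathcal{M}(D)$ evaluated at output $r$ is $\frac{\epsilon}{2\Delta}\exp\!\bigl(-\epsilon|r-Q(D)|/\Delta\bigr)$, and similarly for $D'$. Taking the ratio cancels the normalizing constants and yields
\begin{equation*}
\frac{\Pr(\mathcal{M}(D)=r)}{\Pr(\mathcal{M}(D')=r)} = \exp\!\left(\frac{\epsilon}{\Delta}\bigl(|r-Q(D')|-|r-Q(D)|\bigr)\right).
\end{equation*}

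Next I would invoke the reverse triangle inequality $\bigl||r-Q(D')|-|r-Q(D)|\bigr| \leq |Q(D)-Q(D')|$, and then apply the sensitivity bound $\|Q(D)-Q(D')\|_1 \leq \Delta$ (which reduces to the absolute value here since $Q$ is scalar-valued). This gives that the exponent is at most $\epsilon$, hence the density ratio is at most $e^{\epsilon}$ pointwise. Finally, to recover Definition~\ref{def:dp} as stated for a set $\bm{r} \in Range(\mathcal{M})$, I would integrate the pointwise bound over any measurable output region, which preserves the $e^{\epsilon}$ factor and, after taking logarithms, yields \eqref{eq:dp}.

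I do not expect a genuine obstacle: the proof is standard and the only subtlety is correctly using the triangle inequality in the right direction and noting that Definition~\ref{def:dp} is phrased over output sets, so a short integration step is needed to lift the density-level bound to a probability-level bound. If $Q$ were vector-valued, I would apply the same argument coordinatewise with independent Laplace noise in each coordinate, using the $L_1$ definition of $\Delta$ to combine the per-coordinate bounds multiplicatively.
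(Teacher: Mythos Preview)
Your argument is correct and is the standard proof of the Laplace mechanism. The paper, however, does not supply its own proof of this theorem: it is stated in the Preliminaries as a foundational result from \cite{dwork_calibrating_2006} and left unproved. So there is nothing to compare against; your write-up simply fills in the classical derivation that the paper takes for granted.
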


\subsection{Privacy Leakage}
\label{subsec:pl}
\cyrev{In this section, we define the privacy leakage of a DP mechanism against  a type of adversaries $ A_i $,
who targets user $ i$'s value in the database, i.e., {\small $ l_i \in [loc_1,\ldots,loc_n]$}, and has knowledge of {\small $ D_\mathcal{K} =D-\{l_i\}$}.
}
The adversary $ A_i $ observes the private output $ \bm{r} $ and attempts to distinguish whether user $ i$'s value  is  {\small $ loc_j$}  or {\small $ loc_k$}  where {\small $  loc_j, loc_k \in [loc_1,\ldots,loc_n]$}.
%
%
We define the privacy leakage of a DP mechanism w.r.t. such adversaries as follows.
\begin{definition}
        \label{def:pl}
        The \textit{privacy leakage} of a DP mechanism $\mathcal{M}$ against one $ A_i $ with a specific $ i $ and all $ A_i, i\in\bm{U}$ are defined, respectively, as follows in which $ l_i $ and $ l_i' $ are two different possible values of  $ i $-th data in the database.
        \begin{myAlignSSS}
                &PL_0(A_i,\mathcal{M}) \eqdef \sup_{\bm{r},l_i,l_i'} \log \frac{\Pr(\bm{r}|l_i,D_\mathcal{K})}{\Pr(\bm{r}|l_i',D_\mathcal{K})} \nonumber\\
                &PL_0(\mathcal{M}) \eqdef \max_{\forall A_i,i\in\bm{U}} PL_0(A_i,\mathcal{M}) = \sup_{\bm{r},D,D'} \log  \frac{\Pr(\bm{r}|D)}{\Pr(\bm{r}|D')} \nonumber
        \end{myAlignSSS}
\end{definition}
\vspace{-10pt}
In other words, the privacy budget of a DP mechanism can be considered as a metric of \textit{privacy leakage}.
The larger {\small $ \epsilon $}, the larger the privacy leakage.
We note that a ${\epsilon}'$-DP mechanism automatically satisfies $ \epsilon $-DP if $  {\epsilon}' < {\epsilon}$. 
For convenience, in the rest of this paper, when we say that $ \mathcal{M} $ satisfies $ \epsilon $-DP, we mean that the supremum of privacy leakage is  $ \epsilon $.

\vspace{-12pt}
\subsection{Problem Setting}
\label{sec:problem_setting}
We attempt to quantify  the potential privacy loss of a DP mechanism under temporal correlations in the context of \textit{continuous data release} (e.g., releasing private counts at each time as shown in Figure \ref{fig:corr_example}).
\cyrev{For the convenience of analysis, let us assume the length of release time
is $ T $. 
Note that we do not need to know the exact $ T $ in this paper.}
Users in the database, denoted by $ \bm{U} $,  are generating data continuously.
Let {\small $ \bm{loc}=\{loc_1,\ldots,loc_n\} $} be all possible values of user's data.
We denote the value of user $ i $ at time $ t $ by $ l_i^t $. 
A trusted server collects the data of each user into the database {\small $ D^t  =\{l_1^t,\ldots,l_{|\bm{U}|}^t\} $} at each time $t$ (e.g., the columns in Figure \ref{fig:corr_example}(a)).
\cyrev{Without loss of generality, we assume that each user contributes only one data point in $ D^t $.
DP mechanisms $ \mathcal{M}^t $ release differentially private outputs $ \bm{r}^t$ independently at different time $ t $.
For simplicity, we let  $ \mathcal{M}^t $  to be the same DP mechanism  but may with different privacy budgets at each $ t \in[1,T]$.
}
Our goal is to quantify and bound the potential privacy loss of $ \mathcal{M}^t $ against adversaries with knowledge of temporal correlations. 
We note that while we use location data in Example \ref{example}, the problem setting is general for temporally correlated data.
We summarize  notations used in this paper  in Table \ref{tb:notation}.

Our problem setting is identical to \textit{differential privacy under continual observation} in the literature
\cite{acs_case_2014}
\cite{bolot_private_2013}
\cite{chan_private_2011}
\cite{dwork_differential_2010-1}
\cite{dwork_differential_2010}
\cite{fan_fast:_2013}
\cite{kellaris_differentially_2014}
\cite{xiao_dpcube:_2014}.
In contrast to ``one-shot'' data release over a static database, the attackers can observe multiple  private outputs, i.e., {\scriptsize $ \bm{r}^1,\ldots,\bm{r}^t$}.
There are typically two different privacy goals in the context of continuous data release: \textit{event-level} and \textit{user-level}
\cite{dwork_differential_2010-1}
\cite{dwork_differential_2010}. 
The former protects each user's single data point at time $ t $ (i.e., the neighboring databases are {\scriptsize $ D^t $} and {\scriptsize $ {D^t}' $}), whereas the latter protects the presence of a user with all her data on the timeline (i.e., the neighboring databases are {\scriptsize $\{ D^1 ,\ldots,D^t \}$} and {\scriptsize $\{ {D^1}',\ldots,{D^t}' \}$}).
In this work, we start from examining event-level DP under temporal correlations, and we also extend the discussion to user-level DP by studying the sequential composability of the privacy leakage.


\renewcommand{\arraystretch}{1.25}

\begin{table}[t]
	\cyrev{
        \centering
        \scriptsize
        \caption{Summary of Notations.}
        \vspace{-10pt}
        \begin{tabularx}{8.5cm}{|p{1cm}|X|}
                \hline $ \bm{U}$ & The set of users in the database \\
                \hline $ i$ & The $ i $-th user where $ i \in [1, |\bm{U}|] $\\
                \hline $ \bm{loc}$ & Value domain $ \{loc_1,\ldots, loc_n\} $ of all user's data \\ 
                \hline $ l_i^t, {l_i^t}' $&  The data of user $ i $ at time $ t $, $ l_i^t \in \bm{loc} $, $ l_i^t \neq {l_i^t}' $\\
                \hline ${D}^t$ & The database at time $ t$, ${D}^t=\{l_1^t,\ldots,l_n^t\} $\\
                \hline $ \mathcal{M}^t $ & Differentially private mechanism over $ D^t $\\
                \hline $ \bm{r}^t $ & Differentially private output  at time $t$ \\ 
                \hline $ A_i $ &   Adversary who targets user $i $ without temporal correlations \\ 
                \hline $ A_i^{\mathcal{T}} $ &  Adversary $ A_i $ with temporal correlations  \\ 
                \hline $P_i^B$ &  Transition matrix that represents {\scriptsize$ \Pr(l_i^{t-1}|l_i^t) $}, \newline i.e., backward temporal correlation, known to $ A_i^\mathcal{T} $ \\  
                \hline $P_i^F$ &  Transition matrix that represents {\scriptsize$ \Pr(l_i^{t}|l_i^{t-1}) $}, \newline i.e., forward temporal correlation, known to $ A_i^\mathcal{T} $ \\  
                \hline ${D}_\mathcal{K}^t$ & The subset of database $ D^t - \{l_i^t\} $, known to $A_i^\mathcal{T} $\\
                \hline 
        \end{tabularx}
        \label{tb:notation} 
    }
\vspace{-10pt}
\end{table}

\vspace{-1pt}
\section{Analyzing Temporal Privacy Leakage}
\label{sec:privacy_analysis}

\subsection{Adversay with Knowledge of Temporal Correlations}
\label{subsec:adv}
\vspace{-2pt}
\textbf{Markov Chain for Temporal Correlations.}
The \hyphenation{Mar-kov} Markov chain (MC) is extensively used in modeling user mobility.
For a time-homogeneous first-order MC, a user's current value only depends on the previous one.
The parameter of the MC is the \textit{transition matrix}, which describes the probabilities for transition between values.
The sum of the probabilities in each row of the transition matrix is $ 1 $.
A concrete example of transition matrix and time-reversed one  for location data  is shown in Figure \ref{fig:mc}.
As shown in Figure \ref{fig:mc}(a), if user $ i $ is at $ loc_1 $ now (time $ t $); then, the probability of coming from $ loc_3 $  (time $ t-1 $) is $0.7$, namely, {\small $ \Pr( l_i^{t-1} = loc_3 | l_i^{t} =loc_1 ) = 0.7$}.
As shown in Figure \ref{fig:mc}(b), if user $ i $ was at $ loc_3 $ at the previous time $ t-1 $, then the probability of being at $ loc_1 $ now (time $ t $) is $0.6$; namely, {\small $ \Pr( l_i^{t} = loc_1 | l_i^{t-1} =loc_3 ) = 0.6$}.
We call the transition matrices in Figure \ref{fig:mc}(a) and (b) as backward temporal correlation and forward temporal correlation, respectively.

\begin{definition}[Temporal Correlations]
        \label{def:temporal_corr}
        The backward and forward temporal correlations between user $ i $'s data  $ l_i^{t-1} $ and $  l_i^{t} $ are described by  transition matrices {\small ${P}_i^B,{P}_i^F \in \mathbb{R}^{n \times n}$}, representing {\small $ \Pr( l_i^{t-1} | l_i^{t}) $} and {\small $ \Pr( l_i^{t} | l_i^{t-1}) $}, respectively.
\end{definition}

\vspace{-1pt}
It is reasonable to consider that the backward and/or forward temporal correlations could be acquired by adversaries.
For example, the adversaries can learn them from user's historical trajectories (or the reversed trajectories) by well studied methods such as 
Maximum Likelihood estimation (supervised) or Baum-Welch algorithm (unsupervised). 
Also, if the initial distribution of {\small $ l_i^1 $}  is known (i.e., {\small $ \Pr(l_i^1) $}), the backward temporal correlation (i.e., {\small $ \Pr(l_i^{t-1}|l_i^{t}) $}) can be derived from the forward temporal correlation (i.e., {\small $ \Pr(l_i^{t}|l_i^{t-1}) $}) by Bayesian inference. 
We assume the attackers' knowledge about temporal correlations is given in our framework.

We now define  an enhanced version of $ A_i $ (in Definition \ref{def:pl}) with knowledge of temporal correlations.

\begin{definition}[Adversary$ _\mathcal{T }$]
        \label{def:adversary_t}
        Adversary$ _\mathcal{T }$ is a class of adversaries who have knowledge of (1)  all other users' data {\small $ D_\mathcal{K}^t $} at  each time $ t $ except the one of the targeted victim, i.e., {\small $ D_\mathcal{K}^t=D^t-\{l_i^t\} $}, and (2) backward and/or forward temporal correlations represented as transition matrices {\footnotesize$ P_i^B $} and {\footnotesize $ P_i^F $}.
        We denote Adversary$ _\mathcal{T }$ who targets user $ i $ by {\footnotesize $ A_i^\mathcal{T}({P}_i^B, {P}_i^F)$}.
\end{definition}

\begin{figure}[t]
        \centering
        \includegraphics[scale=0.25]{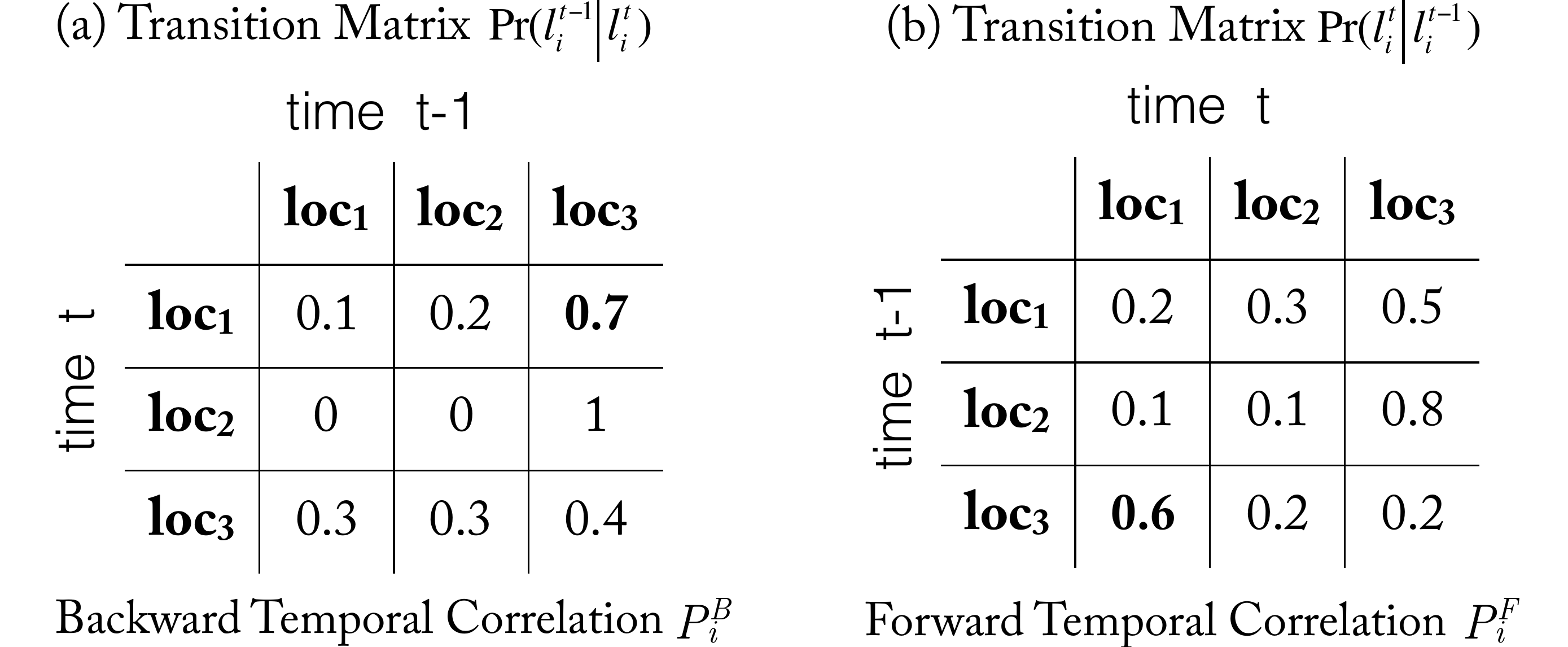} 
           \vspace{-8pt}
        \caption{Examples of Temporal Correlations.}
        \label{fig:mc}
        \vspace{-15pt}
\end{figure}

\vspace{-5pt}
There are three types of adversary$ _\mathcal{T} $: 
(i) {\footnotesize $ A_i^\mathcal{T}({P}_i^B, \emptyset) $}, (ii) {\footnotesize $ A_i^\mathcal{T}(\emptyset, {P}_i^F)$},  (iii) {\footnotesize $ A_i^\mathcal{T}({{P}_i^B,{P}_i^F})$},
where $ \emptyset $ denotes that  the corresponding correlations are not known to the adversaries.
For simplicity, we denote types (i) and (ii) as {\footnotesize $ A_i^\mathcal{T}({P}_i^B) $} and  {\footnotesize $ A_i^\mathcal{T}({P}_i^F)$}, respectively.
We note that {\footnotesize $ A_i^\mathcal{T}(\emptyset, \emptyset) $} is the same as the traditional DP adversary $ A_i $ without any knowledge of temporal correlations. 
\begin{figure*}
        \centering
        \includegraphics[scale=0.38]{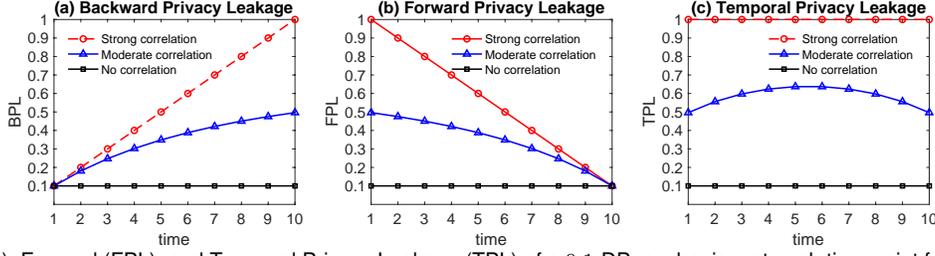} 
          \vspace{-12pt}
        \caption{Backward (BPL), Forward (FPL), and Temporal Privacy Leakage (TPL) of a $ 0.1$-DP mechanism at each time point for Example \ref{example:bpl} and  \ref{example:fpl}.}
        \label{fig:tpl}
        \vspace{-15pt}
\end{figure*}




\vspace{-1pt}
\subsection{Temporal Privacy Leakage}
\label{subsec:tpl} 
We now define the privacy leakage w.r.t. adversary$ _\mathcal{T} $.
The adversary {\small $ A_i^\mathcal{T} $} observes the differentially private outputs $ \bm{r}^t $, which is released by a traditional differentially private mechanisms $\mathcal{M}^t$ (e.g., Laplace mechanism) at each time point $ t \in [1,T] $, and attempts to infer the possible value of user $ i $'s  data at  $ t $, namely $ l_i^t $.
Similar to Definition \ref{def:pl}, we define the privacy leakage in terms of event-level differential privacy in the context of continual data release as described in Section \ref{sec:problem_setting}.

\begin{definition}[Temporal Privacy Leakage, TPL]
        \label{def:tpl}
        Let {\small $ {D^t}' $} be a neighboring database of {\small $ D^t $}.
        Let {\small $D_{\cal K}^t $} be the tuple knowledge of {\small $ A_i^\mathcal{T} $}.
        We have {\footnotesize $ {D^t}'=D_{\cal K}^t  \cup \{{l_i^t}\} $} and {\footnotesize $ {D^t}'=D_{\cal K}^t  \cup \{{l_i^t}'\} $} where $l_i^t  $ and $ {l_i^t}' $ are two different values of user $ i $'s data at time $ t $.
        Temporal Privacy Leakage (TPL)  of {\small $ \mathcal{M}^t $} w.r.t. a single {\small $ A_i^\mathcal{T} $}  and all {\small $ A_i^\mathcal{T}, i\in \bm{U}$}  are defined,  respectively, as follows. 
        \begin{myAlignSSS}
                {\textit{TPL}}&(A_i^\mathcal{T} ,\mathcal{M}^t) &\eqdef &
                \sup_{\substack{ l_i^t, {l_i^t}', \bm{r}^1, \ldots, \bm{r}^T} } \log
                {{\Pr ({\bm{r}^1}, \ldots, {\bm{r}^{T}}|{l_i^t,D_\mathcal{K}^t})} \over {\Pr ({\bm{r}^{1}}, \ldots ,{\bm{r}^{T}}|{l_i^t}',{D_\mathcal{K}^t})}}. \label{eq:tpl1} 
                \\
                {\textit{TPL}}&(\mathcal{M}^t) &\eqdef &\max_{\forall A_i^\mathcal{T}, i\in \bm{U}} {TPL}(A_i^\mathcal{T} ,\mathcal{M}^t) \label{eq:tpl2}
                \\
                &&=& \sup_{\substack{ D^t, {D^t}', \bm{r}^1, \ldots, \bm{r}^T} } \log
                {{\Pr ({\bm{r}^1}, \ldots, {\bm{r}^{T}}|{D^t})} \over {\Pr ({\bm{r}^{1}}, \ldots ,{\bm{r}^{T}}|{D^t}')}}. \label{eq:tpl3}
        \end{myAlignSSS}
\end{definition}

We first analyze {\small $ {\textit{TPL}}(A_i^\mathcal{T} ,\mathcal{M}^t) $} (i.e., Equation \eqref{eq:tpl1}) because it is key to solve Equation \eqref{eq:tpl2} or \eqref{eq:tpl3}.
\cyrev{
\begin{theorem}
	\label{thm:analysis_tpl}
	We can rewrite  {\small $ {\textit{TPL}}(A_i^\mathcal{T} ,\mathcal{M}^t) $} as follows.
	\begin{myAlignSSS}
		&
		\underbrace{
			\sup_{\substack{\bm{r}^1,...,\bm{r}^t,\\{l_i^t},{l_i^t}'}} 
			\log \frac{\Pr(\bm{r}^1,...,\bm{r}^t|{l_i^t,D_\mathcal{K}^t})}{\Pr({\bm{r}^1,...,\bm{r}^t|{{l_i^t}',D_\mathcal{K}^t}})}
		}_{\text{{\footnotesize backward privacy leakage}}}
		+
		\underbrace{
			\sup_{\substack{\bm{r}^t,...,\bm{r}^T,\\ {l_i^t},{l_i^t}'}} 
			\log \frac{\Pr(\bm{r}^t,...,\bm{r}^T|{l_i^t,D_\mathcal{K}^t})}{\Pr(\bm{r}^t,...,{\bm{r}^T|{{l_i^t}',D_\mathcal{K}^t}})}
		}_{\text{{\footnotesize forward privacy leakage}}}  \nonumber\\
		&-
		\underbrace{
			\sup_{\bm{r}^t,{l_i^t},{l_i^t}'} 
			\log \frac{\Pr(\bm{r}^t|{l_i^t,D_\mathcal{K}^t})}{\Pr({\bm{r}^t|{{l_i^t}',D_\mathcal{K}^t}})}
		}_{\textit{PL}_0(A_i^\mathcal{T},\mathcal{M}^t)}  \label{eq:tpl_expanded}
	\end{myAlignSSS}
\end{theorem}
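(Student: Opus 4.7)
The plan is to exploit two conditional-independence facts built into the model to factorize the joint output likelihood, rewrite the log-ratio as a backward-plus-forward-minus-current-time contribution, and then show that the supremum over outputs and neighboring data pairs distributes across these three pieces in the claimed way. The two independence facts are that the mechanisms $\mathcal{M}^1,\ldots,\mathcal{M}^T$ inject independent fresh noise at each time step, and that the first-order Markov chain makes $l_i^{1},\ldots,l_i^{t-1}$ conditionally independent of $l_i^{t+1},\ldots,l_i^{T}$ given $l_i^t$. Together, conditional on $l_i^t$ and $D_\mathcal{K}^t$, the past block $\bm{r}^1,\ldots,\bm{r}^{t-1}$, the present output $\bm{r}^t$, and the future block $\bm{r}^{t+1},\ldots,\bm{r}^{T}$ are mutually independent.

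Concretely, I would first write
\begin{equation*}
\Pr(\bm{r}^1,\ldots,\bm{r}^T\mid l_i^t,D_\mathcal{K}^t) = \Pr(\bm{r}^1,\ldots,\bm{r}^{t-1}\mid l_i^t,D_\mathcal{K}^t)\,\Pr(\bm{r}^{t}\mid l_i^t,D_\mathcal{K}^t)\,\Pr(\bm{r}^{t+1},\ldots,\bm{r}^T\mid l_i^t,D_\mathcal{K}^t),
\end{equation*}
then multiply and divide the right-hand side by $\Pr(\bm{r}^t\mid l_i^t,D_\mathcal{K}^t)$ to regroup it as $\Pr(\bm{r}^1,\ldots,\bm{r}^t\mid\cdot)\,\Pr(\bm{r}^t,\ldots,\bm{r}^T\mid\cdot)/\Pr(\bm{r}^t\mid\cdot)$. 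Taking logs of the ratio against a neighboring value ${l_i^t}'$ then yields, pointwise in $(\bm{r}^1,\ldots,\bm{r}^T, l_i^t, {l_i^t}')$, exactly the sum of the three log-ratios appearing in the theorem statement.

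Finally, I would take the supremum in two stages. For a fixed neighbor pair, the three summands depend on disjoint blocks of coordinates (namely $\bm{r}^1,\ldots,\bm{r}^{t-1}$, $\bm{r}^t$, and $\bm{r}^{t+1},\ldots,\bm{r}^T$, once the redundant $\bm{r}^t$ factors cancel), so the supremum over $\bm{r}^1,\ldots,\bm{r}^T$ distributes exactly across them. The step I expect to be the main obstacle is the remaining supremum over $(l_i^t,{l_i^t}')$: in general $\sup_{l,l'}[F(l,l')+G(l,l')-H(l,l')]$ does not equal $\sup_{l,l'} F+\sup_{l,l'} G-\sup_{l,l'} H$. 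To bridge this gap, I would show that the three component sups are attained at a common worst-case adjacent pair --- the pair that, through the Markov factors $P_i^B,P_i^F$ and the known $D_\mathcal{K}^t$, simultaneously maximizes the direct sensitivity of $\mathcal{M}^t$ at time $t$ and the induced sensitivities of the mixture mechanisms it drives at times $s\neq t$. Once that alignment is in place, the pointwise decomposition lifts directly to the supremum identity $\textit{TPL}(A_i^\mathcal{T},\mathcal{M}^t)=\textit{BPL}+\textit{FPL}-\textit{PL}_0(A_i^\mathcal{T},\mathcal{M}^t)$.
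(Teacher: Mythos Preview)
Your proposal follows essentially the same route as the paper: factorize via the Markov/independent-noise conditional independence, regroup into the backward/forward/current pieces, and then argue that the supremum distributes. You correctly flag the only non-trivial step, namely pushing the $\sup_{l_i^t,{l_i^t}'}$ across the sum.

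The one thing you leave unspecified is \emph{why} a common maximizing pair $(l_i^t,{l_i^t}')$ exists; you only say you ``would show'' it. The paper's argument for this is not a general alignment lemma but rather the standing assumption (stated in Section~\ref{sec:problem_setting}) that $\mathcal{M}^t$ is the \emph{same} DP mechanism at every time point, with identical output range. Under that assumption, the pair $(\bar{l_i^t},\bar{l_i^{t'}})$ that maximizes one term $\sup_{\bm r^k}\log\frac{\Pr(\bm r^k\mid l_i^t,D_\mathcal{K}^t)}{\Pr(\bm r^k\mid {l_i^t}',D_\mathcal{K}^t)}$ simultaneously maximizes every other term, so the sup distributes exactly. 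The paper even notes in a footnote that if the mechanisms differ across time, Equation~\eqref{eq:tpl_expanded} is only an upper bound, not an equality. So your plan is right, but the concrete device that closes the gap is this ``same mechanism, same range'' hypothesis rather than a structural property of $P_i^B,P_i^F$.
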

}

It is clear that {\footnotesize $ \textit{PL}_0(A_i^\mathcal{T},\mathcal{M}^t)  = \textit{PL}_0(A_i,\mathcal{M}^t) $} because {\small $ \textit{PL}_0 $}  indicates the privacy leakage w.r.t. one output $ \bm{r} $ (refer to Definition \ref{def:pl}). 
As annotated in the above equation, we define backward and forward privacy leakage as follows.

\begin{definition}[Backward Privacy Leakage, BPL]
        The privacy leakage of {\footnotesize $ \mathcal{M}^t $} caused by {\footnotesize $ \bm{r}^1,...,\bm{r}^t $} w.r.t. {\footnotesize $ A_i^\mathcal{T} $} is called backward privacy leakage, defined as follows.
        \begin{myAlignSSS}
                {\textit{BPL}}&(A_i^\mathcal{T} ,\mathcal{M}^t) &\eqdef &
                \sup_{\substack{ l_i^t, {l_i^t}', \bm{r}^1, \ldots, \bm{r}^t} } \log
                {{\Pr ({\bm{r}^1}, \ldots, {\bm{r}^{t}}|{l_i^t,D_\mathcal{K}^t})} \over {\Pr ({\bm{r}^{1}}, \ldots ,{\bm{r}^{t}}|{l_i^t}',{D_\mathcal{K}^t})}}. \label{eq:bpl1} 
                \\
                {\textit{BPL}}&(\mathcal{M}^t) &\eqdef & \max_{\forall A_i^\mathcal{T},i \in \bm{U}} {\textit{BPL}}(A_i^\mathcal{T} ,\mathcal{M}^t) . \label{eq:bpl2} 
        \end{myAlignSSS}
\end{definition}

\begin{definition}[Forward Privacy Leakage, FPL]
        The privacy leakage of {\small $ \mathcal{M}^t $} caused by {\small $ \bm{r}^t,...,\bm{r}^T$} w.r.t. {\small $ A_i^\mathcal{T} $} is called forward privacy leakage, defined by follows.
        \vspace{-3pt}
        \begin{myAlignSSS}
                {\textit{FPL}}&(A_i^\mathcal{T} ,\mathcal{M}^t) &\eqdef &
                \sup_{\substack{ l_i^t, {l_i^t}', \bm{r}^t, \ldots, \bm{r}^T} } \log
                {{\Pr ({\bm{r}^t}, \ldots, {\bm{r}^{T}}|{l_i^t,D_\mathcal{K}^t})} \over {\Pr ({\bm{r}^{t}}, \ldots ,{\bm{r}^{T}}|{l_i^t}',{D_\mathcal{K}^t})}}. \label{eq:fpl1} 
                \\
                {\textit{FPL}}&(\mathcal{M}^t) &\eqdef & \max_{\forall A_i^\mathcal{T},i \in \bm{U}} {\textit{FPL}}(A_i^\mathcal{T} ,\mathcal{M}^t) . \label{eq:fpl2} 
        \end{myAlignSSS}
\end{definition}
\vspace{-2pt}
\noindent
By substituting Equation \eqref{eq:bpl1} and \eqref{eq:fpl1} into \eqref{eq:tpl_expanded}, we have
\begin{myAlignSSS}
        \textit{TPL}(A_i^\mathcal{T} ,\mathcal{M}^t)
        =
        \textit{BPL}(A_i^\mathcal{T} ,\mathcal{M}^t)
        +
        \textit{FPL}(A_i^\mathcal{T} ,\mathcal{M}^t)
        -
        \textit{PL}_0( A_i^\mathcal{T} ,\mathcal{M}^t). \label{eq:tpl_comp1}
\end{myAlignSSS}

\noindent
Since the privacy leakage is considered as the worst case among all users in the database, by Equations \eqref{eq:bpl2} and \eqref{eq:fpl2}, we have

\vspace{3pt}
\begin{myAlignSS}
        \textit{TPL}(\mathcal{M}^t)
        =
        \textit{BPL}(\mathcal{M}^t)
        +
        \textit{FPL}(\mathcal{M}^t)
        -
        \textit{PL}_0(\mathcal{M}^t). \label{eq:tpl_comp2}
\end{myAlignSS}

\noindent
Intuitively, BPL, FPL and TPL are the privacy leakage w.r.t. the adversaries {\footnotesize $ A_i^{\mathcal{T}}(P_i^B)$ }, {\footnotesize $ A_i^{\mathcal{T}}(P_i^F)$ } and {\footnotesize $ A_i^{\mathcal{T}}(P_i^B, P_i^F)$}, respectively.
In Equation \eqref{eq:tpl_comp2}, we need to minus {\footnotesize $ PL_0(\mathcal{M}^t) $} because it is counted in both BPL and FPL.
In the following, we will dive into the analysis of BPL and FPL.

\vspace{2pt}

\textbf{BPL over time.}
For BPL, we first expand and simplify Equation \eqref{eq:bpl1} by Bayesian theorem,  {\small \textit{BPL}$(A_i^\mathcal{T}, \mathcal{M}^t) $} is equal to
\begin{myAlignSSS}
        &
        \sup_{\substack{ l_i^t,{l_i^t}',  \\ \bm{r}^1,\ldots,\bm{r}^{t-1}} }  \log
        {
                {\sum_{l_i^{t-1}} \Pr ({\bm{r}^1}, \ldots ,{\bm{r}^{t-1}}|{l_i^{t-1}},D_\mathcal{K}^{t-1}) \Pr(l_i^{t-1}|{l_i^{t}})   }
                \over 
                {\sum_{{l_i^{t-1'}}} 
                        \underbrace{ \Pr ({\bm{r}^1},\ldots, {\bm{r}^{t-1}}|{l_i^{t-1}}',D_\mathcal{K}^{t-1})}_{ {\text{(i) } {\textit{BPL}}}(A_i^\mathcal{T}, \mathcal{M}^{t-1})} 
                        \underbrace{\Pr({l_i^{t-1'}}|{l_i^{t}}')}_{\text{(ii) } P_i^B} } 
        } \nonumber \\
        &+ \sup_{\substack{ l_i^t,{l_i^t}', \bm{r}^t} }  \log \frac{\Pr({\bm{r}^{t}}|{l_i^{t}},D_\mathcal{K}^{t})}{\underbrace{\Pr({\bm{r}^{t}}|{l_i^{t}}',D_\mathcal{K}^{t})}_{\text{(iii) } \textit{PL}_0(A_i^\mathcal{T},\mathcal{M}^{t})}}.  \label{eq:bpl_cal}
\end{myAlignSSS}

\noindent
We now discuss the three annotated terms in the above equation.
The first term indicates BPL at the previous time {\small $ t-1 $}, the second term is the backward temporal correlation determined by $ P_i^B $, and the third term is equal to the privacy leakage w.r.t. adversaries in traditional DP (see Definition \ref{def:pl}).
Hence, BPL at time $ t $ depends on (i) BPL at time $ t-1 $, (ii) the backward temporal correlations, and (iii) the (traditional) privacy leakage of {\small $ \mathcal{M}^t $} (which is related to the privacy budget allocated to {\small $ \mathcal{M}^t $}).
By Equation \eqref{eq:bpl_cal}, we know that 
if {\small $ t=1 $},  {\small $  \textit{BPL}(A_i^\mathcal{T},\mathcal{M}^1 ) = \textit{PL}_0(A_i,\mathcal{M}^1)$};
if {\small $ t>1 $}, we have the following, where {\small $ \mathcal{L}^B(\cdot) $} is a \textit{backward temporal privacy loss function} for calculating the accumulated privacy loss.

\begin{myAlignS}
        \textit{BPL}(A_i^\mathcal{T}, \mathcal{M}^t ) = \mathcal{L}^B \big( \textit{BPL}(A_i^\mathcal{T}, \mathcal{M}^{t-1}) \big)+\textit{PL}_0(A_i, \mathcal{M}^{t}) \label{eq:bpl_f}
\end{myAlignS}

\vspace{-3pt}
\noindent Equation \eqref{eq:bpl_f} reveals that  {\small BPL} is calculated recursively and may \textit{accumulate over time}, as shown in Example \ref{example:bpl} (Fig.\ref{fig:tpl}(a)).

\begin{example}[\textbf{BPL due to previous releases}]
        \label{example:bpl}
        Suppose that a DP mechanism {\small $ \mathcal{M}^t$} satisfies {\small $ \textit{PL}_0(\mathcal{M}^t)=0.1 $} for each time $ t\in[1,T] $, i.e., 0.1-DP at each time point.
        We now discuss BPL at each time point w.r.t. {\scriptsize $ A_i^\mathcal{T} $} with knowledge of backward temporal correlations {\scriptsize $P_{i}^B$}.
        In an extreme case, if {\scriptsize $ P_i^B  $} indicates the strongest correlation, say,  {\scriptsize $ P_i^B=\big(\begin{smallmatrix} 1&0\\ 0&1\end{smallmatrix} \big)$}, then, at time {\small $t $}, {\scriptsize $ A_i^\mathcal{T}$} 
        knows {\scriptsize $ l_i^{t} =l_i^{t-1}=\cdots=l_i^1$}, i.e., {\scriptsize $ D^{t} =D^{t-1}=\cdots=D^1$} because of  {\scriptsize $ D^t=\{l_i^t\} \cup D_\mathcal{K}^t $} for any {\scriptsize $ t \in [1,T] $}.
        Hence, the continuous data release {\scriptsize $ \bm{r}^1,\ldots,\bm{r}^t $} is equivalent to releasing the \textit{same} database multiple times; the privacy leakage at each time point will accumulate from previous time points and increase linearly (the red line with circle marker in Figure \ref{fig:tpl}(a)).
        In another extreme case, if there is no backward temporal correlation that is known to {\scriptsize $ A_i^\mathcal{T} $} (e.g., for the $ A_i $ in Definition \ref{def:pl} or {\scriptsize $ A_i^\mathcal{T}(P_i^F) $}),
        BPL at each time point is {\footnotesize $ \textit{PL}_0(\mathcal{M}^t) $}, as shown in Figure \ref{fig:tpl}(a), the black line with rectangle marker.
        The blue line with triangle marker in Figure \ref{fig:tpl}(a) depicts the backward privacy leakage caused by {\small $ P_i^B=\big(\begin{smallmatrix} 0.8&0.2\\ 0&1\end{smallmatrix} \big)$}, which can be finely quantified using our method (Algorithm \ref{algo:cal_bpl})  in Section \ref{sec:calculate_tpl}.
\end{example}

\textbf{FPL over time.}
For FPL,  similar to the analysis of BPL,  we expand and simplify Equation \eqref{eq:bpl1} by Bayesian theorem,  {\small \textit{FPL}$(A_i^\mathcal{T}, \mathcal{M}^t) $} is equal to
\vspace{5pt}
\begin{myAlignSSS}
        &
        \sup_{\substack{ l_i^t,{l_i^t}',  \\ \bm{r}^{t+1},\ldots,\bm{r}^{T}} }  \log
        {
                {\sum_{l_i^{t+1}} \Pr (\bm{r}^{t+1}, \ldots ,{\bm{r}^{T}}|{l_i^{t+1}},D_\mathcal{K}^{t+1}) \Pr(l_i^{t+1}|{l_i^{t}})   }
                \over 
                {\sum_{{l_i^{t+1'}}} 
                        \underbrace{ \Pr ({\bm{r}^{t+1}},\ldots, {\bm{r}^{T}}|{l_i^{t+1}}',D_\mathcal{K}^{t+1})}_{ {\text{(i) } {\textit{FPL}}}(A_i^\mathcal{T}, \mathcal{M}^{t+1})} 
                        \underbrace{\Pr({l_i^{t+1'}}|{l_i^{t}}')}_{\text{(ii) } P_i^F} } 
        } \nonumber \\
        &+ \sup_{\substack{ l_i^t,{l_i^t}', \bm{r}^t} }  \log \frac{\Pr({\bm{r}^{t}}|{l_i^{t}},D_\mathcal{K}^{t})}{\underbrace{\Pr({\bm{r}^{t}}|{l_i^{t}}',D_\mathcal{K}^{t})}_{\text{(iii) } \textit{PL}_0(A_i^\mathcal{T},\mathcal{M}^{t})}}.  \label{eq:fpl_cal}
\end{myAlignSSS}
\vspace{-5pt}
\noindent
By Equation \eqref{eq:fpl_cal}, we know that 
if {\small $ t=T $},  {\footnotesize $  \textit{FPL}(A_i^\mathcal{T},\mathcal{M}^T ) = \textit{PL}_0(A_i,\mathcal{M}^T)$};
if {\small $ t<T $}, we have the following, where {\small $ \mathcal{L}^F(\cdot) $} is a \textit{forward temporal privacy loss function} for calculating the increased privacy loss due to FPL at the next time.

\begin{myAlignSSS}
        \textit{FPL}(A_i^\mathcal{T}, \mathcal{M}^t ) = \mathcal{L}^F \big( \textit{FPL}(A_i^\mathcal{T}, \mathcal{M}^{t+1}) \big)+\textit{PL}_0(A_i, \mathcal{M}^{t}) \label{eq:fpl_f}
\end{myAlignSSS}

\vspace{-8pt}
\noindent Equation \eqref{eq:fpl_f} reveals that FPL is calculated recursively and may \textit{increase over time}, as shown in Example \ref{example:fpl} (Fig.\ref{fig:tpl}(b)).

\begin{example}[\textbf{FPL due to future releases}]
        \label{example:fpl}
        Considering the same setting in Example \ref{example:bpl}, we now discuss FPL at each time point w.r.t. {\footnotesize $ A_i^\mathcal{T} $} with knowledge of forward temporal correlations {\footnotesize $P_{i}^F$}.
        In an extreme case, if {\footnotesize $ P_i^F  $} indicates the strongest correlation, say,  {\footnotesize $ P_i^F=\big(\begin{smallmatrix} 1&0\\ 0&1\end{smallmatrix} \big)$}, then, at time {\small $t $}, {\footnotesize $ A_i^\mathcal{T}$} 
        knows {\footnotesize $ l_i^{t} =l_i^{t+1}=\cdots=l_i^T$}, i.e., {\footnotesize $ D^{t} =D^{t+1}=\cdots=D^T$} because of  {\footnotesize $ D^t=\{l_i^t\} \cup D_\mathcal{K}^t $} for any {\footnotesize $ t \in [1,T] $}.
        Hence, the continuous data release {\footnotesize $ \bm{r}^t,\ldots,\bm{r}^T $} is equivalent to releasing the same database multiple  times; 
        the privacy leakage at time $ t $ will increase when every time new release (i.e., {\footnotesize $ \bm{r}^{t+1} $},{\footnotesize $ \bm{r}^{t+2} $},...) happens, as shown in the red line with circle marker in Figure \ref{fig:tpl}(b).
       We see that contrary to BPL, the FPL at time 1 is the highest (due to future releases at time 1 to 10) while FPL at time 10 is the lowest (since there is no future release with respect to time 10 yet). 
        \ul{If $ \bm{r}^{11} $ is released, all FPL at time $ t \in [1,10] $ will be updated}.
        In another extreme case, if there is no forward temporal correlation that is known to {\footnotesize $ A_i^\mathcal{T} $} (e.g., for the $ A_i $ in Definition \ref{def:pl} or {\footnotesize $ A_i^\mathcal{T}(P_i^B) $}),
        then the forward privacy leakage at each time point is {\footnotesize $ \textit{PL}_0(\mathcal{M}^t) $}, as shown in the black line with rectangle marker Figure \ref{fig:tpl}(b).
		The blue line with triangle marker in Figure \ref{fig:tpl}(b) depicts the forward privacy leakage caused by {\small $ P_i^F=\big(\begin{smallmatrix} 0.8&0.2\\ 0&1\end{smallmatrix} \big)$}, which can be finely quantified using our method (Algorithm \ref{algo:cal_bpl}) in Section \ref{sec:calculate_tpl}.
\end{example}

\vspace{-10pt}
\begin{remark}
        \label{remark}
        The extreme cases shown in Examples \ref{example:bpl} and \ref{example:fpl}  are the upper and lower bound  of BPL and FPL.
        Hence, the temporal privacy loss functions {\scriptsize $ \mathcal{L}^B(\cdot) $} and {\scriptsize $ \mathcal{L}^F(\cdot) $}  in Equations \eqref{eq:bpl_f} and  \eqref{eq:fpl_f} satisfy  {\scriptsize $ 0  \leq \mathcal{L}^B(x)  \leq  x$}, where $ x $ is  BPL at the previous time, and {\scriptsize $ 0  \leq \mathcal{L}^F(x)  \leq  x$}, where $ x $ is FPL at the next time, respectively.
\end{remark}

\vspace{-5pt}

From Examples \ref{example:bpl} and \ref{example:fpl}, we know that: backward temporal correlation (i.e.,{\footnotesize $ P_i^B $}) does not affect FPL, and forward temporal correlation (i.e.,{\footnotesize $ P_i^F $}) does not affect BPL. 
In other words, adversary {\footnotesize $A_i^\mathcal{T}(P_i^B)  $} only causes BPL;  {\footnotesize $A_i^\mathcal{T}(P_i^F)  $} only causes FPL; while  {\footnotesize $A_i^\mathcal{T}(P_i^B,P_i^F)  $} poses a risk on both BPL and FPL. 
The composition of BPL and FPL is shown in Equations \eqref{eq:tpl_comp1} and \eqref{eq:tpl_comp2}.
Figure \ref{fig:tpl}(c) shows TPL, which can be calculated using Equation \eqref{eq:tpl_comp2}.


\vspace{-10pt}
\subsection{$ \alpha $-DP$ _\mathcal{T}$ and Its Composability}
\label{subsec:comp}
In this section, we define $ \alpha $-DP$ _\mathcal{T}$ (differential privacy under temporal correlations) to provide a privacy guarantee against temporal privacy leakage.
We prove its sequential composition theorem and discuss the connection between $ \alpha $-DP$ _\mathcal{T}$ and  $ \epsilon $-DP in terms of event-level/user-level privacy\cite{dwork_differential_2010-1}\cite{dwork_differential_2010} and $ w $-event privacy\cite{kellaris_differentially_2014}.

\begin{definition}[$ \alpha \text{-} DP_\mathcal{T} $, differential privacy under temporal correlations]
        \label{def:DP}
        If TPL of a differentially private mechanism  is less than or equal to $ \alpha $, we say that such mechanism satisfies $ \alpha$-Differential Privacy under Temporal correlation, i.e., {\small $ \alpha \text{-} DP_\mathcal{T} $}.
\end{definition}

\noindent
{\small DP$_\mathcal{T} $} is an enhanced version of differential privacy on time-series data.
If the data are temporally independent (i.e., for all user $ i $, both {\small $ P_i^B $} and {\small $ P_i^F $} are $ \emptyset $), an $ \epsilon $-DP  mechanism satisfies  $ \epsilon $-DP$_\mathcal{T} $.
If the data are temporally correlated (i.e., existing user $ i $ whose  {\small $ P_i^B $} or {\small $ P_i^F $} is not $ \emptyset $), an $ \epsilon $-DP  mechanism satisfies  $ {\alpha}$-DP$_\mathcal{T} $ where $ {\alpha}$ is the increased privacy leakage and can be quantified in our framework.

One may wonder, for a sequence of DP$_\mathcal{T} $ mechanisms on the timeline, what is the overall privacy guarantee.
In the following,  we suppose that {\footnotesize $\mathcal{M}^t $} is a $ \epsilon_t $-DP mechanism at time $ t\in [1,T]$  and poses risks of BPL and FPL as {\footnotesize $ \alpha_t^B $} and {\footnotesize $ \alpha_t^F $}, respectively.
That is, {\footnotesize $\mathcal{M}^t $} satisfies  {\footnotesize $ ( \alpha_t^B + \alpha_t^F - \epsilon_t )$}-DP$ _\mathcal{T} $ at time $ t $ according to Equation \eqref{eq:tpl_comp2}.
Similar to definition of TPL w.r.t. a DP mechanism at a single time point, we define TPL of a sequence of DP mechanisms at consecutive time points as follows.
\begin{definition}[TPL of a sequence of DP mechanisms]
        \label{def:combined_pl}
        The temporal privacy leakage of DP mechanisms {\footnotesize $ \{\mathcal{M}^t,\ldots,\mathcal{M}^{t+j}\} $} where $ j \geq 0 $ is defined as follows.
        \begin{myAlignSSS}
                \textit{TPL}\big( \{\mathcal{M}^t,\ldots,\mathcal{M}^{t+j}\} \big) \eqdef
                \sup_{\substack{ D^t,...,D^{t+j},\\ {D^t}',...,{D^{t+j}}',\\ \bm{r}^1, \ldots, \bm{r}^T} } \log
                {{\Pr ({\bm{r}^1}, \ldots, {\bm{r}^{T}}|{D^t,\ldots,D^{t+j}})} \over {\Pr ({\bm{r}^{1}}, \ldots ,{\bm{r}^{T}}|{D^t}',\ldots, {D^{t+j}}')}} \nonumber 
        \end{myAlignSSS}
\end{definition}
\vspace{-5pt}
It is easy to see that, if {\footnotesize $ j=0 $}, it is event-level privacy; if {\footnotesize $ t=1 $} and {\footnotesize $ j=T-1 $}, it is user-level privacy.

\begin{theorem}[Sequential Composition under Temporal Correlations]
        \label{thm:composition}
        A sequence of DP mechanisms {\footnotesize $\{\mathcal{M}^t,\ldots, \mathcal{M}^{t+j} \}$} satisfies 
        \begin{myAlignSSS}
                \begin{cases}
                        {\scriptsize (\alpha_{t}^B + \alpha_{t+1}^F )}\textit{-DP}_\mathcal{T}  & j=1 \\
                        {\scriptsize \big(\alpha_{t}^B + \alpha_{t+j}^F + \sum_{k=t+1}^{k=t+j-1}\epsilon_{k} \big)}\textit{-DP}_\mathcal{T}  &   j \geq 2 
                \end{cases}
        \end{myAlignSSS}
\end{theorem}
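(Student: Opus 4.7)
The plan is to factor the conditional joint distribution appearing in Definition \ref{def:combined_pl} so that its log-ratio cleanly decomposes into three contributions (or two, when $j=1$), each bounded by a quantity already established earlier in the paper: a backward contribution bounded by $\alpha_t^B$, a forward contribution bounded by $\alpha_{t+j}^F$, and, for $j\geq 2$, intermediate single-step contributions each bounded by $\epsilon_k$.

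Two structural observations drive the factorization. First, each mechanism $\mathcal{M}^k$ draws $\bm{r}^k$ from $D^k$ alone, so the releases $\bm{r}^1,\ldots,\bm{r}^T$ are conditionally mutually independent given the database sequence $D^1,\ldots,D^T$. Second, because the temporal correlations form a first-order Markov chain, conditioning on the contiguous block $D^t,\ldots,D^{t+j}$ renders the past $D^1,\ldots,D^{t-1}$ and the future $D^{t+j+1},\ldots,D^T$ mutually independent, with the past depending only on $D^t$ and the future only on $D^{t+j}$. Marginalizing out the unconditioned databases (mirroring the derivations leading to Equations \eqref{eq:bpl_cal} and \eqref{eq:fpl_cal}) yields
\begin{myAlignSSS}
\Pr(\bm{r}^1,\ldots,\bm{r}^T | D^t,\ldots,D^{t+j}) = \Pr(\bm{r}^1,\ldots,\bm{r}^t | D^t) \cdot \prod_{k=t+1}^{t+j-1} \Pr(\bm{r}^k | D^k) \cdot \Pr(\bm{r}^{t+j},\ldots,\bm{r}^T | D^{t+j}).
\end{myAlignSSS}
Applying the identical identity to the primed sequence and subtracting logs splits the log-ratio of Definition \ref{def:combined_pl} into a disjoint sum of (at most three) pieces.

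Each summand is then controlled by material already in the paper. The first log-ratio is exactly the expression whose supremum defines $\textit{BPL}(\mathcal{M}^t)$ in Equation \eqref{eq:bpl2} and is therefore at most $\alpha_t^B$; each intermediate log-ratio involves a single release with a standard pair of neighboring databases, hence is at most $\epsilon_k$ since $\mathcal{M}^k$ is $\epsilon_k$-DP; and the last log-ratio is exactly $\textit{FPL}(\mathcal{M}^{t+j})$ from Equation \eqref{eq:fpl2} and is at most $\alpha_{t+j}^F$. Because the three pieces depend on disjoint subsets of both the outputs and the database arguments, the supremum over the joint argument distributes additively over them, giving $\alpha_t^B + \alpha_{t+1}^F$ when $j=1$ (empty middle sum) and $\alpha_t^B + \alpha_{t+j}^F + \sum_{k=t+1}^{t+j-1} \epsilon_k$ when $j\geq 2$, exactly matching the two cases in the theorem.

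The main step I expect to require care is the marginalization: I must verify that conditioning on the full block $D^t,\ldots,D^{t+j}$, rather than only on the endpoints $D^t$ and $D^{t+j}$ separately, does not alter the past-side marginal $\Pr(\bm{r}^1,\ldots,\bm{r}^{t-1}|D^t)$ or the future-side marginal $\Pr(\bm{r}^{t+j+1},\ldots,\bm{r}^T|D^{t+j})$. This uses the Markov property from Section \ref{subsec:adv} in exactly the manner the single-time recurrences \eqref{eq:bpl_f} and \eqref{eq:fpl_f} already exploit, but must now be carried out jointly for the two halves separated by the conditioned block rather than for a single boundary point.
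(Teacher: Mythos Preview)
Your proposal is correct and follows essentially the same route as the paper's proof: factor $\Pr(\bm{r}^1,\ldots,\bm{r}^T\mid D^t,\ldots,D^{t+j})$ into a backward block (releases $1$ through $t$, depending only on $D^t$), a forward block (releases $t+j$ through $T$, depending only on $D^{t+j}$), and single-release middle terms, then bound each piece by $\alpha_t^B$, $\alpha_{t+j}^F$, and $\epsilon_k$ respectively. The paper simply writes this decomposition as a chain of equalities without justification, whereas you explicitly invoke the Markov property and the conditional independence of the releases given the database sequence; your flagged ``careful step'' about the block conditioning not altering the endpoint marginals is exactly the ingredient the paper leaves implicit.
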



\noindent
In Theorem \ref{thm:composition}, when {\scriptsize $ t=1 $} and {\scriptsize $ j=T-1 $}, we have   Corollary \ref{col:user-level}.
\begin{corollary}
        \label{col:user-level}
        The temporal privacy leakage of a combined mechanism {\small $\{\mathcal{M}^1,\ldots, \mathcal{M}^{T} \}$} is  {\footnotesize $ \sum_{k=1}^{k=T}\epsilon_{k} $} where $\epsilon_{k}  $ is the privacy budget of $ \mathcal{M}^{k} $, $ k\in[1,T] $. 
\end{corollary}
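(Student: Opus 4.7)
The plan is to reduce the claim to a direct application of Theorem \ref{thm:composition} with $t=1$ and $j=T-1$, and then to simplify the resulting bound using the boundary behaviour of BPL and FPL at the two endpoints of the time horizon.

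First, I would instantiate Theorem \ref{thm:composition} in the case $j=T-1 \geq 2$ (the small cases $T=1,2$ follow similarly using the first branch of the theorem or Equation \eqref{eq:tpl_comp2} directly). This gives that $\{\mathcal{M}^1,\ldots,\mathcal{M}^T\}$ satisfies $\bigl(\alpha_1^B + \alpha_T^F + \sum_{k=2}^{T-1}\epsilon_k\bigr)$-DP$_{\mathcal{T}}$. The remaining task is therefore to identify $\alpha_1^B$ and $\alpha_T^F$ explicitly.

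Next, I would invoke the base cases of the recursions derived earlier. From the analysis culminating in Equation \eqref{eq:bpl_f}, the recursion for BPL collapses at $t=1$ to $\textit{BPL}(A_i^{\mathcal{T}},\mathcal{M}^1)=\textit{PL}_0(A_i,\mathcal{M}^1)$, because there is no earlier release for the adversary to leverage through the backward transition matrix $P_i^B$. Taking the supremum over users gives $\alpha_1^B = \epsilon_1$. Symmetrically, Equation \eqref{eq:fpl_f} collapses at $t=T$ to $\textit{FPL}(A_i^{\mathcal{T}},\mathcal{M}^T)=\textit{PL}_0(A_i,\mathcal{M}^T)$, since there are no future releases to propagate forward through $P_i^F$, yielding $\alpha_T^F = \epsilon_T$.

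Finally, I would substitute these two identities into the bound from Theorem \ref{thm:composition}, obtaining
\begin{align*}
\alpha_1^B + \alpha_T^F + \sum_{k=2}^{T-1}\epsilon_k \;=\; \epsilon_1 + \epsilon_T + \sum_{k=2}^{T-1}\epsilon_k \;=\; \sum_{k=1}^{T}\epsilon_k,
\end{align*}
which is exactly the claimed total privacy loss. The only subtle step is the boundary argument: one must justify that no ``phantom'' privacy loss from releases outside the window $[1,T]$ is counted, i.e., that in the user-level setting the entire release stream is $\bm{r}^1,\ldots,\bm{r}^T$ so that the recursions for BPL and FPL genuinely terminate at the endpoints. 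This is immediate from the problem setting of Section \ref{sec:problem_setting} and is where the proof relies most on the framing rather than on computation.
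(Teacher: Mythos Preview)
Your proposal is correct and follows exactly the paper's approach: the paper derives the corollary simply by instantiating Theorem \ref{thm:composition} with $t=1$ and $j=T-1$, and your added justification that $\alpha_1^B=\epsilon_1$ and $\alpha_T^F=\epsilon_T$ via the base cases of the BPL/FPL recursions (stated in the paper just before Equations \eqref{eq:bpl_f} and \eqref{eq:fpl_f}) is precisely the implicit step the paper leaves to the reader.
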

\noindent
It shows that \textit{temporal correlations do not affect the user-level privacy} (i.e., protecting all the data on the timeline of each user), which is in line with the idea of group differential privacy: protecting all the correlated data in a bundle.

\vspace{2pt}
\textbf{Comparison  between DP and DP$ _\mathcal{T} $.}
As we mentioned in Section \ref{sec:problem_setting}, there are typically two privacy notions in continuous data release: \textit{event-level} and \textit{user-level}
\cite{dwork_differential_2010-1}
\cite{dwork_differential_2010}.
Recently, $ w $-event privacy\cite{kellaris_differentially_2014} is proposed to merge the gap between event-level and user-level privacy.
It protects the data in any $ w $-length sliding window by utilizing the  sequential composition theorem of DP.

\begin{theorem}[Sequential composition on independent data\cite{mcsherry_privacy_2009}]
        \label{thm:sequential_composition}
        Suppose that {\footnotesize $ \mathcal{M}^t $} satisfies {\footnotesize $ \epsilon_t $}-DP for each {\footnotesize $ t\in[1,T] $}.
        A combined mechanism {\footnotesize $\{\mathcal{M}^t,\ldots, \mathcal{M}^{t+j} \}$} satisfies {\footnotesize $ \sum_{k=t}^{k=t+j} \epsilon_{k} $}-DP.
\end{theorem}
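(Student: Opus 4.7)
The plan is to reduce the joint release to a product of independent releases, apply the per-time $\epsilon_t$-DP guarantees factor by factor, and then collect the exponents. Since the excerpt states (Section~\ref{sec:problem_setting}) that ``DP mechanisms $\mathcal{M}^t$ release differentially private outputs $\bm{r}^t$ independently at different time $t$,'' the random coins used by $\mathcal{M}^t$ and $\mathcal{M}^{t'}$ for $t\neq t'$ are mutually independent. This independence is the one structural fact the whole argument rests on.

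First, I would fix an arbitrary pair of neighbouring database sequences that differ in a single tuple, together with an arbitrary output tuple $(\bm{r}^t,\ldots,\bm{r}^{t+j}) \in \mathrm{Range}(\mathcal{M}^t)\times\cdots\times\mathrm{Range}(\mathcal{M}^{t+j})$, and expand the joint likelihood. By the independence of the $\mathcal{M}^k$'s conditional on the input databases, the joint density factorises as
\begin{equation*}
\Pr\big(\bm{r}^t,\ldots,\bm{r}^{t+j}\,\big|\,D^t,\ldots,D^{t+j}\big)=\prod_{k=t}^{t+j}\Pr\big(\bm{r}^k\,\big|\,D^k\big),
\end{equation*}
and the analogous factorisation holds for the neighbouring sequence $D^t{}',\ldots,D^{t+j}{}'$. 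Taking the log-ratio then turns a single high-dimensional ratio into a sum of per-time log-ratios.

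Second, I would apply Definition~\ref{def:dp} to each factor: since $\mathcal{M}^k$ is $\epsilon_k$-DP, the ratio $\Pr(\bm{r}^k|D^k)/\Pr(\bm{r}^k|D^k{}')$ is at most $e^{\epsilon_k}$ for every pair of neighbouring $D^k,D^k{}'$ and every $\bm{r}^k$. Summing the logs across $k=t,\ldots,t+j$ gives
\begin{equation*}
\log\frac{\Pr(\bm{r}^t,\ldots,\bm{r}^{t+j}\,|\,D^t,\ldots,D^{t+j})}{\Pr(\bm{r}^t,\ldots,\bm{r}^{t+j}\,|\,D^t{}',\ldots,D^{t+j}{}')}\;\le\;\sum_{k=t}^{t+j}\epsilon_k,
\end{equation*}
which is exactly the $\sum_{k=t}^{t+j}\epsilon_k$-DP guarantee for the combined mechanism after taking supremum over the databases and outputs.

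I do not anticipate a serious obstacle; the only point to handle carefully is the neighbouring-database convention for the combined release. In the event-level reading relevant here, a ``neighbour'' of the sequence $(D^t,\ldots,D^{t+j})$ should be understood as differing in one tuple at a \emph{single} time index, so that applying per-time $\epsilon_k$-DP leaves all other factors in the product identical and contributing a factor of $1$. Under this convention the sum $\sum_k \epsilon_k$ actually collapses to the single $\epsilon_k$ at the changed time, which is the tight bound; the stated bound in the theorem is the loose but standard one obtained by taking the worst case over which index is modified, which is the form needed to compare with the $w$-event and user-level notions discussed immediately before the theorem.
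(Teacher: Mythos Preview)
Your argument is the standard factorise-and-bound proof of sequential composition and is correct. Note, however, that the paper does \emph{not} provide its own proof of this statement: it is quoted as a known result from McSherry~\cite{mcsherry_privacy_2009} and used without further argument, so there is nothing in the paper to compare your proposal against. Your closing remark about the neighbouring-database convention is apt in this paper's context---under event-level neighbours only one factor in the product changes and the rest contribute~$1$---but the theorem as stated (and as used in the $w$-event discussion that follows it) records the uniform bound $\sum_k\epsilon_k$, so you have matched what was needed.
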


For ease of exposition, suppose that {\footnotesize $ \mathcal{M}^t $} satisfies {\footnotesize $ \epsilon $}-DP for each {\footnotesize $ t \in[1,T]$}.
According to Theorem \ref{thm:sequential_composition}, it achieves $ T\epsilon $-DP on user-level and $ w\epsilon $-DP on $ w $-event level.
We compare the  privacy guarantee of $ \mathcal{M}^t $ on independent data and temporally correlated data in Table \ref{tb:composition}.

\renewcommand{\arraystretch}{1.25}
\begin{table}[h]
        \centering
        \footnotesize
        \vspace{-10pt}
        \caption{The privacy guarantee of $ \epsilon $-DP mechanisms on different types of data.}
         \vspace{-10pt}
        \begin{tabularx}{8.6cm}{|p{2.8cm}|p{1.4cm}|X|}
                \hline  
                \diagbox[width=3.2cm]{{Privacy Notion}}{{Data}} &  independent &  temporally correlated  \\
                \hline event-level{ \cite{dwork_differential_2010-1}\cite{dwork_differential_2010}}& $ \epsilon $-DP &  ${\alpha} $-DP$ _\mathcal{T}$ ($\epsilon \leq \alpha \leq T\epsilon $)  \\
                \hline $ w $-event\cite{kellaris_differentially_2014} & $w \epsilon $-DP & see Theorem \ref{thm:composition}  \\
                \hline user-level{ \cite{dwork_differential_2010-1}\cite{dwork_differential_2010}} & $ T\epsilon $-DP & $ T\epsilon $-DP$ _\mathcal{T} $ (by Corollary \ref{col:user-level})\\ 
                \hline 
        \end{tabularx}
        \label{tb:diff} 
        \label{tb:composition}
\end{table}

It reveals that \textit{temporal correlations may blur the boundary between event-level privacy and user-level privacy}.
As shown in Table \ref{tb:composition}, the  privacy leakage of a $ \epsilon $-DP mechanism at a single time point (event-level) on temporally correlated data may range from $ \epsilon $ to $ T\epsilon $ which depends on the strength of temporal correlation.
For example, as shown in Figure \ref{fig:tpl}(c), TPL of a $ 0.1 $-DP mechanism under strong temporal correlation at each time point (event-level) is $  T*0.1$, which is equal to  the privacy leakage of a sequence of $ 0.1 $-DP mechanisms that satisfies user-level privacy.
When the temporal correlations is moderate, TPL of a $ 0.1 $-DP mechanism  at each time point (event-level) is less than $  T*0.1$ but still larger than $ 0.1 $, as shown in the blue line with triangle markers in Figure \ref{fig:tpl}(c).

%



\cyrev{
	\subsection{Connection with Pufferfish Framework}
	\label{subsec:pufferfish}
	$ \alpha $-DP$_\mathcal{T} $ is highly related to Pufferfish framework \cite{kifer_rigorous_2012}\cite{kifer_pufferfish:_2014} and its instantiation \cite{song_pufferfish_2017}.
	Pufferfish provides rigorous and customizable privacy framework which consists of three components: a set of secrets, a set of discriminative pairs and data evolution scenarios (how the data were generated, or how the data are correlated).
	Note that the data evolution scenarios is essentially the adversary's background knowledge about data, such as data correlations.
	 In Pufferfish framework \cite{kifer_rigorous_2012} \cite{kifer_pufferfish:_2014}, they prove that when secrets to be all possible values of tuples, discriminative pairs to be the set of all pairs of potential secrets, and tuples to be independent, it is equivalent to DP; hence, Pufferfish becomes $DP_\mathcal{T} $  under the above setting of secrets and discriminative pairs, but with temporally correlated tuples.}

\cyrev{	
A recent work  \cite{song_pufferfish_2017} proposes Markov Quilt Mechanism for Pufferfish framework when  data correlations can be modeled by Bayesian Network (BN).
They further design efficient mechanisms MQMExact and MQMApprox when the structure of BN is Markov Chain.
Although we also use Markov Chain as temporal correlation model, the settings of two studies are essentially different.
In the motivating example of \cite{song_pufferfish_2017}, i.e., Physical Activity Monitoring, the private histogram is ``one-shot'' (see Section \ref{sec:problem_setting}) release:
adapting to our setting in Figure \ref{fig:corr_example}, their example is equivalent to release location access statistics for a one-user database only at a given time $ T $.
Whereas, we focus on quantifying the privacy leakage in continuous data release.
One important insight of our study is that, \textit{the privacy guarantee of one-shot data release is not on its own or static; instead, it may be affected by previous release and even future release}, which are defined as Backward and Forward Privacy Leakage in our work.
}

\vspace{-8pt}
\subsection{Discussion}
\label{subsec:discussion}
We make a few important observations regarding our privacy analysis.
First, the temporal privacy leakage is defined in a personalized way. 
That is, the privacy leakage may be different for users with distinct temporal patterns (i.e., {\footnotesize $ P_i^B $} and {\footnotesize $ P_i^F $}).
We define the overall temporal privacy leakage as the maximum one for all users, so that $ \alpha $-DP$ _\mathcal{T} $ is compatible with the traditional $ \epsilon $-DP mechanisms (using one parameter to represent the overall privacy level) and we can convert them  to protect against TPL.
On the other hand, our definitions is also compatible with personalized differential privacy (PDP) mechanisms\cite{jorgensen_conservative_2015}, in which the personalized privacy budgets, i.e., a vector {\footnotesize $ [\epsilon_1, \ldots,\epsilon_n]$}, are allocated to each user.
In other words, we can convert a PDP mechanism to bound the temporal privacy leakage for each user.

Second, in this paper, we focus on the temporally correlated data and assume that the adversary has knowledge of temporal correlations modeled by Markov chain.
However, it is possible that the adversary has knowledge about more sophisticated temporal correlation model or other types of correlations.
\cyrev{Especially, if the the assumption about Markov model is not the ``ground truth''.
We may not protect against TPL appropriately.
Song et al. \cite{song_pufferfish_2017} provided a formalization of the upper bound of privacy leakage if a set of possible data correlations $  \Theta $ is given, which is \textit{max-divergence} between any two conditional distributions of $ \tilde{\theta} \in \Theta $  and $ \theta \in \Theta$ given any secret that we want to protect.
In their Markov Quilt Mechanism that models correlation using Bayesian Network (BN), the privacy leakage can be represented by \textit{max-influence} between nodes (tuples).
Hence, given a set of  BNs as possible data correlations, the upper bound of privacy leakage is the largest max-influence under different BNs.
Similarly, in our case, given a set of Transition Matrices (TM), we can calculate the maximum TPL w.r.t. different TMs.
However, it remains an open question how to find a set of  possible data correlations $  \Theta $ that includes the adversarial knowledge or the ground truth in a high probability,  which may depend on the application scenarios.
We believe that this question is also related to ``how to identify appropriate concepts of neighboring databases in different scenarios'' since properly defined neighboring databases can circumvent the affect of data correlations on privacy leakage in a certain extent (as we show in Section \ref{subsec:comp},  the difference between event-level privacy, w-event privacy, and user-level privacy). 
}

\section{Calculating Temporal Privacy Leakage}
\label{sec:calculate_tpl}
In this section, we design algorithms for computing backward privacy leakage (BPL) and forward privacy leakage (FPL).
We first show that both of them  can be transformed to the optimal solution of a \textit{linear-fractional programming} problem\cite{bajalinov_linear-fractional_2003} in Section \ref{sec:formulation}.
Traditionally, this type of problem can be solved by simplex algorithm in exponential time \cite{bajalinov_linear-fractional_2003}.
By exploiting the constraints in this problem, we then design a polynomial algorithm to calculate it in Section \ref{sec:polynomial}.
Further, based on our observation of some repeated computation  when the polynomial algorithm runs on different time points, we precompute such  common results and design quasi-quadratic and sub-linear algorithms for calculating temporal privacy leakage at each time point in Section \ref{sec:log-linear} and  Section \ref{sec:sub-linear}, respectively.

\vspace{-5pt}
\subsection{Problem formulation}
\label{sec:formulation}
According to the privacy analysis of BPL and FPL in Section \ref{subsec:tpl}, we need to solve the backward  and forward temporal privacy loss functions $ \mathcal{L}^B(\cdot) $ and $ \mathcal{L}^F (\cdot)$ in Equations \eqref{eq:bpl_f} and \eqref{eq:fpl_f}, respectively.
By observing the structure of the first term in Equations \eqref{eq:bpl_cal} and \eqref{eq:fpl_cal}, we can see that the calculations for recursive functions $ \mathcal{L}^B (\cdot)$ and $ \mathcal{L}^F (\cdot)$ are virtually in the same way.
They calculate the increment of the input values (the previous BPL or the next FPL) based on temporal correlations (backward or forward).
Although different degree of correlations result in different privacy loss functions, the methods for analyzing them are the same.

We now quantitatively analyze the temporal privacy leakage.
In the following, we demonstrate the calculation of $ \mathcal{L}^B (\cdot)$, i.e., the first term of Equation \eqref{eq:bpl_cal}.
\begin{myAlignSSS}
        \sup_{\substack{ l_i^t,{l_i^t}',  \\ \bm{r}^1,\ldots,\bm{r}^{t-1}} }  \log
        {
                {\sum_{l_i^{t-1}} \Pr ({\bm{r}^1}, \ldots ,{\bm{r}^{t-1}}|{l_i^{t-1}},D_\mathcal{K}^{t-1}) \Pr(l_i^{t-1}|{l_i^{t}})   }
                \over 
                {\sum_{{l_i^{t-1'}}} 
                        \underbrace{ \Pr ({\bm{r}^1},\ldots, {\bm{r}^{t-1}}|{l_i^{t-1}}',D_\mathcal{K}^{t-1})}_{ { {\textit{BPL}}}(A_i^\mathcal{T}, \mathcal{M}^{t-1})} 
                        \underbrace{\Pr({l_i^{t-1'}}|{l_i^{t}}')}_{P_i^B} } 
        } \label{eq:bpl_cal_1st}
\end{myAlignSSS}

We now simplify the notations in the above formula.
Let two arbitrary (distinct) rows in $ P_i^B $ be vectors {\footnotesize $ \bm{q}=(q_1,...,q_n)$} and {\footnotesize $\bm{d}=(d_1,...,d_{n}) $}.
For example, suppose that {\small $ \bm{q} $} is the first row in the transition matrix of Figure \ref{fig:mc}(b);  then, the elements in {\small $ \bm{q} $} are:  {\footnotesize $q_1= \Pr(l_i^{t-1}=loc_1|l_i^t=loc_1) $}, {\footnotesize $q_2= \Pr(l_i^{t-1}=loc_2|l_i^t=loc_1) $}, {\footnotesize $q_3= \Pr(l_i^{t-1}=loc_3|l_i^t=loc_1) $}, etc.
Let {\footnotesize $ \bm{x}=(x_1,...,x_{n})^\mathrm{T}$} be a vector whose elements indicate {\footnotesize $\Pr ({\bm{r}^1},... ,{\bm{r}^{t-1}}|l_i^{t-1}, {D_{\mathcal{K}}^{t-1}})$}  with distinct values of {\small $ l_i^{t-1} \in \bm{loc} $}, e.g., {\footnotesize $ x_1 $} denotes {\footnotesize $\Pr ({\bm{r}^1},... ,{\bm{r}^{t-1}}|{l_i^{t-1}}=loc_1, {D_{\mathcal{K}}^{t-1}})$}.
We obtain the following by expanding {\small $ l_i^{t-1},l_i^{t-1'} \in \bm{loc} $} in Equation \eqref{eq:bpl_cal_1st}.

\begin{myAlignSSS}
        \mathcal{L}^B \big( \textit{BPL}(A_i^\mathcal{T}, \mathcal{M}^{t-1})\big) 
        &= \sup_{ \bm{q}, \bm{d} \in   P_i^B } \log \frac{q_1x_1+\cdots+q_n x_n}{d_1 x_1+\cdots+d_n x_n} \nonumber\\
        &= \sup_{ \bm{q}, \bm{d} \in   P_i^B } \log  \frac{ {\boldsymbol{q}  \boldsymbol{x}}}{{\boldsymbol{d}   \boldsymbol{x}}} \label{eq:lfp_M}
\end{myAlignSSS}

Next, we formalize the objective function and constraints. 
Suppose that {\small $ \textit{BPL}(A_i^\mathcal{T}, \mathcal{M}^{t-1}) ={{\alpha}^B_{t-1}}$}.
According to the definition of BPL (as the supremum), for any {\footnotesize $ x_j,x_k \in \bm{x} $}, we have {\footnotesize $e^{-{{\alpha}^B_{t-1}}}  \leq \frac{x_j}{x_k} \leq e^{{{\alpha}^B_{t-1}}}$}.
Given $ \bm{x} $ as the variable vector and $ \bm{q},\bm{d} $ as the coefficient vectors,  {\footnotesize $ \mathcal{L}^B ({{\alpha}^B_{t-1}}) $}  is equal to the logarithm of the objective function \eqref{eq:lfp} in the following maximization problem.
\begin{myAlignSSS}
        \textnormal{maximize }  & \text{ }    \frac{ {\boldsymbol{q}  \boldsymbol{x}}}{{\boldsymbol{d}   \boldsymbol{x}}}  \label{eq:lfp}\\
        \textnormal{subject to  } 
        & \text{ } e^{-{{\alpha}^B_{t-1}}} \leq \frac{x_j}{x_k} \leq e^{{{\alpha}^B_{t-1}}},   \label{eq:lfp_end} \\
        &\text{ }  0 < {x_j}< 1 \text{ and }  0<x_k<1, \label{eq:lfp_end2}  \\
        & \text{ where } x_j,x_k \in \bm{x}, \text{ } j,k \in  [1,n]. \nonumber
\end{myAlignSSS}

\vspace{-10pt}
The above is a form of \textit{Linear-Fractional Programming}\cite{bajalinov_linear-fractional_2003} (i.e., LFP), where the objective function is a ratio of two linear functions and the constraints are linear inequalities or equations.
A linear-fractional programming problem can be converted into a sequence of linear programming problems and then solved using the simplex algorithm in time {\small $ O(2^{n}) $} \cite{bajalinov_linear-fractional_2003}.
According to Equation \eqref{eq:lfp_M}, given $ P_i^B $ as an $  n\times n $ matrix, finding $ \mathcal{L}^B(\cdot) $ involves solving $ n(n-1) $ such LFP problems w.r.t. different permutation of choosing two rows $ \boldsymbol{q} $ and $ \boldsymbol{d} $ from the transition matrix $ P_i^B $.
Hence, the overal time complexity using simplex algorithm is $ O(n^2 2^n) $.

As we mentioned previously, the calculations of $ \mathcal{L}^B(\cdot) $ and  $ \mathcal{L}^F(\cdot) $ are identical.
For simplicity, in the following part of this paper, we use $ \mathcal{L(\cdot)} $ to represent the privacy loss function $ \mathcal{L}^B(\cdot) $ or $ \mathcal{L}^F(\cdot) $,  use $ \alpha $ to denote $ \alpha_{t-1}^B $ or $ \alpha_{t+1}^F $, and use $ P_i $ to denote $  P_i^B $ or $  P_i^F$.

\vspace{-15pt}
\subsection{Polynomial Algorithm}
\label{sec:polynomial}
\cyrev{
In this section, we design an efficient algorithm to calculate TPL, i.e., to solve the linear-fractional program in Equations \eqref{eq:lfp} to \eqref{eq:lfp_end2}.
Intuitively, we prove that the optimal solutions always satisfy some conditions (Theorem \ref{thm:lfp}), which enable us to design an efficient algorithm to obtain the optimal value without directly solving the optimization problem.
}

\textbf{Properties of the optimal solutions.}
From Inequalities \eqref{eq:lfp_end} and \eqref{eq:lfp_end2}, we know that the feasible region of the constraints are not empty and bounded; hence, the optimal solution exists.
We prove Theorem \ref{thm:lfp}, which enables the optimal solution to be found in time {\small $ O(n^2) $}.

We first define some notations that will be frequently used  in our theorems.
Suppose that the variable vector $ \bm{x} $ consists of two parts (subsets):  $ \bm{x}^+$ and $ \bm{x}^-$.
Let the corresponding coefficients vectors be {\small $ \bm{q}^+, \bm{d}^+ $} and {\small $ \bm{q}^-, \bm{d}^-$}.
Let {\small $ q=\sum{\bm{q}^+} $} and {\small $ d=\sum{\bm{d}^+} $}.
For example, suppose that  {\small $ \bm{x}^+=[x_1,x_3] $} and {\small $ \bm{x}^-=[x_2,x_4,x_5] $}. 
Then, we have {\small $ \bm{q}^+=[q_1,q_3] $}, {\small $ \bm{d}^+=[d_1,d_3] $}, {\small $ \bm{q}^-=[q_2,q_4,q_5] $}, and {\small $ \bm{d}^-=[d_2,d_4,q_5] $}. In this case, {\small $ q=q_1+q_3 $} and {\small $ d=d_1+d_3 $}.

\vspace{2pt}
\begin{theorem}
	\label{thm:lfp}
	If  the following Inequalities \eqref{eq:cond1} and \eqref{eq:cond2} are satisfied, the maximum value of the objective function in the problem \eqref{eq:lfp}$ \sim $\eqref{eq:lfp_end2} is {\footnotesize $ \frac{q(e^{{\alpha}_{t-1}^B} -1) + 1 }{d(e^{{\alpha}_{t-1}^B} - 1) + 1} $}.
	\begin{myAlignSS}
		&\frac{q_j}{d_j} > \frac{q(e^{{\alpha}_{t-1}^B} -1) + 1 }{d(e^{{\alpha}_{t-1}^B} - 1) + 1}, & \forall j\in[1,n] \text{ where } q_j \in \bm{q}^+, d_j \in \bm{d}^+ \label{eq:cond1}\\
		&\frac{q_k}{d_k} \leq \frac{q(e^{{\alpha}_{t-1}^B} -1) + 1 }{d(e^{{\alpha}_{t-1}^B} - 1) + 1}, & \forall k\in[1,n] \text{ where } q_k \in \bm{q}^-, d_k \in \bm{d}^- \label{eq:cond2}
	\end{myAlignSS}
\end{theorem}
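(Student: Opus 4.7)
The plan is to establish the claim by exhibiting a feasible point that attains the bound and then proving the bound cannot be exceeded. Since the objective $\bm{q}\bm{x}/\bm{d}\bm{x}$ is invariant under positive rescaling of $\bm{x}$, I first normalize so that $\min_j x_j = 1$; the box constraints \eqref{eq:lfp_end}--\eqref{eq:lfp_end2} then collapse to $x_j \in [1, e^{\alpha_{t-1}^B}]$ for every $j$.

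For attainability, I evaluate the objective at the candidate point $x_j = e^{\alpha_{t-1}^B}$ for every $j$ with coordinate in $\bm{x}^+$ and $x_j = 1$ for every $j$ with coordinate in $\bm{x}^-$. Using that $\bm{q}$ and $\bm{d}$ are rows of a stochastic matrix, so $\sum_{k\in\bm{x}^-} q_k = 1-q$ and $\sum_{k\in\bm{x}^-} d_k = 1-d$, a direct calculation gives $\bm{q}\bm{x} = q(e^{\alpha_{t-1}^B}-1)+1$ and $\bm{d}\bm{x} = d(e^{\alpha_{t-1}^B}-1)+1$, whose ratio is exactly the claimed value $V := \frac{q(e^{\alpha_{t-1}^B}-1)+1}{d(e^{\alpha_{t-1}^B}-1)+1}$.

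For the upper bound, I introduce the weights $w_j := q_j - V d_j$ so that $\bm{q}\bm{x}/\bm{d}\bm{x} \le V$ is equivalent to $\sum_j w_j x_j \le 0$. Hypotheses \eqref{eq:cond1} and \eqref{eq:cond2} say precisely that $w_j > 0$ on $\bm{x}^+$ and $w_j \le 0$ on $\bm{x}^-$. For any feasible $\bm{x}$, the upper bound $x_j \le e^{\alpha_{t-1}^B}$ together with $w_j > 0$ yields $w_j x_j \le e^{\alpha_{t-1}^B} w_j$ on $\bm{x}^+$, while $x_j \ge 1$ together with $w_j \le 0$ yields $w_j x_j \le w_j$ on $\bm{x}^-$ (the sign flip of $w_j$ reverses the direction of the inequality as required). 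Summing both estimates and invoking $\sum_j q_j = \sum_j d_j = 1$ collapses the total to $e^{\alpha_{t-1}^B}(q - V d) + (1-q) - V(1-d) = \bigl[q(e^{\alpha_{t-1}^B}-1)+1\bigr] - V\bigl[d(e^{\alpha_{t-1}^B}-1)+1\bigr]$, which is zero by the very definition of $V$.

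The main obstacle is not any single calculation but keeping the sign-bookkeeping aligned: the proof depends on $w_j$ being correctly signed on each half of the partition, so one must check that the sufficient conditions \eqref{eq:cond1}--\eqref{eq:cond2} genuinely force those signs, which they do because $q_j/d_j \gtrless V$ is equivalent to $w_j \gtrless 0$ whenever $d_j > 0$. The only remaining loose end is the degenerate case $d_j = 0$, which is naturally read as $q_j/d_j = +\infty$ placing such an index in $\bm{x}^+$; the bound $w_j x_j \le e^{\alpha_{t-1}^B} w_j$ remains valid since $w_j = q_j \ge 0$ there. Combining attainability with the upper bound then yields $\max \bm{q}\bm{x}/\bm{d}\bm{x} = V$, as claimed.
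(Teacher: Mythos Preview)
Your proof is correct and rests on the same core computation as the paper's: linearize the ratio via $w_j = q_j - V d_j$, use conditions \eqref{eq:cond1}--\eqref{eq:cond2} to determine the sign of each $w_j$, and then maximize coordinate-wise over the box $[1,e^{\alpha_{t-1}^B}]$. The packaging differs slightly: the paper invokes Dinkelbach's theorem (which reduces optimality of $\bm{x}^*$ to $\max_{\bm x}\{Q(\bm x)-\lambda D(\bm x)\}=0$) together with a separately stated Lemma~\ref{lem:lfp_proof} that handles the resulting linear program, whereas you prove attainability and the upper bound directly without naming either result. Your route is more self-contained and arguably cleaner; the paper's route has the minor advantage of making explicit that the argument is an instance of a standard LFP technique. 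Either way, the substantive calculation---showing that the bound on $\sum_j w_j x_j$ collapses to zero via $\sum q_j = \sum d_j = 1$---is identical.
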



According to Equation \eqref{eq:lfp_M}, the increment of  temporal privacy loss, i.e., $ \mathcal{L}(\cdot) $, is the maximum value among the $ n(n-1) $ LFP problems, which are defined by different 2-permutations  $ \bm{q} $ and $ \bm{d} $ chosen from $ n $ rows of $ P_i$.

\begin{myAlignS}
	&\mathcal{L}
	\big( {{\alpha}} \big)  = \max_{ \bm{q}, \bm{d} \in   P_i } 
	\log\frac{ q(e^{{\alpha}}-1) +1}{ d(e^{{\alpha}}-1) +1} \label{eq:lfp_sol_b}  
\end{myAlignS}

Further, we give Corollary \ref{col:q_d} for finding {\small $ \bm{q}^+ $} and {\small $ \bm{d}^+ $}.
\begin{corollary}
	\label{col:q_d}
	If Inequalities \eqref{eq:cond1} and \eqref{eq:cond2} are satisfied, we have {\small $ q_j > d_j $} in which $ q_j \in \bm{q}^+ $ and  {\small $ d_j \in \bm{d}^+ $}.
\end{corollary}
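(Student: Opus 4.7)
The plan is to reduce the corollary to a one-line deduction from Inequality~\eqref{eq:cond1}. Setting $r \eqdef \frac{q(e^{{\alpha}}-1)+1}{d(e^{{\alpha}}-1)+1}$, Inequality~\eqref{eq:cond1} reads $q_j/d_j > r$ for every $j$ with $q_j \in \bm{q}^+$ and $d_j \in \bm{d}^+$. If I can show $r \geq 1$, then $q_j / d_j > 1$, which gives $q_j > d_j$ after multiplying through by $d_j \geq 0$. So the whole corollary boils down to the single inequality $r \geq 1$.

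Establishing $r \geq 1$ is the main step, and I would do it by exhibiting a feasible point of the linear-fractional program \eqref{eq:lfp}--\eqref{eq:lfp_end2} at which the objective already equals $1$. Because $\bm{q}$ and $\bm{d}$ are rows of the transition matrix $P_i$, each of them sums to $1$. Take the uniform choice $\bm{x} = c\,\bm{1}$ for any $c \in (0,1)$: all ratios $x_j/x_k$ equal $1$, which lies in $[e^{-{\alpha}}, e^{{\alpha}}]$, so $\bm{x}$ is feasible. At this point $\bm{q}\bm{x} = c\sum_i q_i = c$ and $\bm{d}\bm{x} = c\sum_i d_i = c$, so the objective evaluates to $1$. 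Since Theorem~\ref{thm:lfp} asserts that $r$ is the \emph{maximum} value of this objective under conditions \eqref{eq:cond1}--\eqref{eq:cond2}, we conclude $r \geq 1$.

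The only subtle point, and the step I would flag as the main obstacle, is the degenerate case $d_j = 0$ for some $j$ in the $+$-subset, which makes the quotient in \eqref{eq:cond1} formally undefined. I would handle this by reading \eqref{eq:cond1} in the natural extended-real sense: the inequality $q_j/d_j > r$ holds exactly when $q_j > 0$, in which case $q_j > 0 = d_j$ is already the desired conclusion. For $d_j > 0$ the clean multiplicative argument above applies, and the two cases together give $q_j > d_j$ for every $j$ in the $+$-subset. No further optimization analysis is required beyond what Theorem~\ref{thm:lfp} already furnishes; the corollary can be viewed as a structural consistency check on the partition $(\bm{x}^+, \bm{x}^-)$ used in the optimality characterization.
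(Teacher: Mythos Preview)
Your argument is correct but follows a different path from the paper. The paper argues by contradiction to establish the aggregate inequality $q>d$ first: assuming $q=d$ forces some index in the $+$-block to have $q_j/d_j\le 1$, contradicting~\eqref{eq:cond1}; assuming $q<d$ forces (via $\sum\bm{q}^- > \sum\bm{d}^-$, since both full rows sum to $1$) some index in the $-$-block to have $q_k/d_k>1$, contradicting~\eqref{eq:cond2}. From $q>d$ the paper deduces $r>1$ strictly and then applies~\eqref{eq:cond1}. Your route instead invokes Theorem~\ref{thm:lfp} to interpret $r$ as the optimum of the LFP and compares it against the uniform feasible point, obtaining $r\ge 1$ without any case analysis; combined with the strict inequality in~\eqref{eq:cond1} this already yields $q_j>d_j$. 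Your approach is shorter and more conceptual, and it never touches~\eqref{eq:cond2} directly (only through the hypotheses of Theorem~\ref{thm:lfp}); the paper's approach is more self-contained (it does not rely on the optimality conclusion of Theorem~\ref{thm:lfp}) and along the way proves the slightly stronger aggregate fact $q>d$, hence $r>1$ strictly, which is used later in the paper (e.g., the running assumption $q\neq d$ in Theorem~\ref{thm:sup}).
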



Now, the question is how do we find $ q $ and $ d $ (or, {\small $ \bm{q}^+ $} and {\small $ \bm{d}^+ $}) in Theorem \ref{thm:lfp}  that give the maximum value of objective function.
Inequalities \eqref{eq:cond1} and \eqref{eq:cond2} are sufficient conditions for obtaining such optimal value.
Corollary \ref{col:q_d} gives a necessary condition for satisfying Inequalities \eqref{eq:cond1} and \eqref{eq:cond2}.
Based on the above analysis, we design Algorithm \ref{algo:cal_bpl} for computing BPL or FPL.

\begin{algorithm}[h]
        \begin{scriptsize}
                \SetKwRepeat{doWhile}{do}{while}
                \SetKwComment{tcc}{//}{}
                \DontPrintSemicolon 
                \KwIn{$ P_i $  {\scriptsize (i.e., $  P_i^B$ or $  P_i^F$)};  $ \alpha $ (i.e., {\scriptsize ${ {{\alpha}^B_{t-1}}}$ or ${ {{\alpha}^F_{t+1}}}$}); $ \epsilon_t $ ({\scriptsize i.e., $\textit{PL}_0(\mathcal{M}^t) $}). }
                \KwOut{BPL or FPL at time $ t $}
                $ \mathcal{L} = 0$;  \tcc*{\scriptsize the value of Equation \eqref{eq:lfp_sol_b}.} 
                \ForEach{two rows permutation $ \bm{q} $, $ \bm{d}  $ from $ n $ rows in $ P_i $ }
                {  
                        \ForEach(\tcc*[f]{\scriptsize Corollary \ref{col:q_d}.}){$ q_j \in\bm{q},d_j \in\bm{d} $}    
                        { \lIf{$  q_j>d_j $} 
                                {add $ q_j $ to $ \bm{q}^+$; add $ d_j $ to $ \bm{d}^+  $; } 
                        }
                        $\textsf{\textit{update}} = false ;$\;
                        \doWhile(\tcc*[f]{\scriptsize find $ \bm{q}^+ $, $ \bm{d}^+ $ by Theorem \ref{thm:lfp}.} ) {\textsf{update}}{  
                                {\scriptsize $q = \sum{\bm{q}^+}  $};  {\scriptsize $ d = \sum{\bm{d}^+}; $} \tcc*{{\scriptsize update $ q $ and $ d $}.}
                                \ForEach{$ q_j \in\bm{q}^+,d_j \in\bm{d}^+ $}
                                {  \tcc{{\scriptsize if it does not satisfy Inequality \eqref{eq:cond1}.}} 
                                        \lIf{{\scriptsize $  q_j/d_j \leq \big(q*(e^{{\alpha}}-1)+1 \big)/\big(d*(e^{{\alpha}}-1)+1 \big) $} \label{algo:iterate} \;} 
                                        {\scriptsize    $ \bm{q}^+ \leftarrow \bm{q}^+ - q_j $;
                                                $ \bm{d}^+ \leftarrow \bm{d}^+ - d_j $;
                                                $\textsf{\textit{update}} = true $;
                                                \label{algo:findtpl:update}}  
                                }
                        }
                        \lIf{\scriptsize $ \mathcal{L}    <  \log \frac{q*(e^{{\alpha}} -1) +1}{d*(e^{{\alpha}} -1) +1} $  }
                        {\scriptsize $ \mathcal{L}    = \log \frac{q*(e^{{\alpha}} -1) +1}{d*(e^{{\alpha}} -1) +1} $;}
                }
                \Return{ {\scriptsize $  \mathcal{L} + \epsilon_t$}}   \tcc*{{\scriptsize by Equation \eqref{eq:bpl_f} or \eqref{eq:fpl_f}}.}
                \caption{Calculate BPL or FPL by Theorem \ref{thm:lfp}.}
                \label{algo:cal_bpl}
        \end{scriptsize}
\end{algorithm}

\textbf{Algorithm design}.
According to the definition of BPL and FPL, we need to find the maximum privacy leakage (Line 12) w.r.t. any 2-permutations selected from $ n $ rows of the given transition matrix $  P_i$ (Line 2).
Lines 3$ \sim $11 are to solve a single linear-fractional programming problem w.r.t two specific rows chosen from  $  P_i$.
In Lines 3 and 4, we divide the variable vector $ \bm{x} $ into two parts according to Corollary \ref{col:q_d}, which gives the necessary condition for finding the maximum solution:
if  a pair of coefficients with the same subscript {\footnotesize $ q_j \leq d_j $}, they are \textit{not} in $ \bm{q}^+ $ and $ \bm{d}^+ $ that satisfy Inequalities \eqref{eq:cond1} and \eqref{eq:cond2}.
In other words, if {\footnotesize $ q_j > d_j $}, they are ``candidates'' in $ \bm{q}^+ $ and $ \bm{d}^+ $ that gives the maximum objective function.
In Lines 5$ \sim $11, we check the candidates  $ \bm{q}^+ $ and $ \bm{d}^+ $  whether or not satisfying Inequalities \eqref{eq:cond1} and \eqref{eq:cond2}.
In Line 7, we update the values of $ q $ and $ d $ because they may be changed in the context. 
In Lines 8$ \sim $10, we check each bit in $ \bm{q}^+ $ and $ \bm{d}^+ $  whether or not satisfying Inequality \eqref{eq:cond1}.
If any bit is removed, we set a flag  \textsf{updated}  to $ true $ and do the loop again until every pairs of $ \bm{q}^+ $ and $ \bm{d}^+ $ satisfy Inequality \eqref{eq:cond1}.

A subtle question may arise regarding such ``update''.
In Lines 8$ \sim $10, the algorithm may remove \textit{several} pairs of $ q_j$ and $ d_j $, say, {\footnotesize $ \{q_{1}, d_{1}\} $} and {\footnotesize $ \{q_{2}, d_{2}\} $}, that do not satisfy Inequality \eqref{eq:cond1} in one loop.
However, one may wonder if it is possible that, after removing {\footnotesize $ \{q_{1}, d_{1}\} $} from {\footnotesize $ \bm{q}^+ $} and $ \bm{d}^+ $, Inequality \eqref{eq:cond1} can be satisfied for {\footnotesize $ \{q_{2}, d_{2}\} $} due to the update of $ q $ and $ d $, i.e., {\footnotesize $ \frac{q_2}{d_2} > \frac{(q-q_1)*(e^{{\alpha}} -1) +1}{(d-d1)*(e^{{\alpha}} -1) +1} $}.
We show that this is impossible.
If {\footnotesize $ \frac{q_1}{d_1} \leq \frac{q*(e^{{\alpha}} -1) +1}{d*(e^{{\alpha}} -1) +1}$},
we have  {\footnotesize $ \frac{q*(e^{{\alpha}} -1) +1}{d*(e^{{\alpha}} -1) +1} \leq \frac{(q-q_1)*(e^{{\alpha}} -1) +1}{(d-d1)*(e^{{\alpha}} -1) +1} $}.
Hence, {\footnotesize $ \frac{q_2}{d_2} \leq \frac{q*(e^{{\alpha}} -1) +1}{d*(e^{{\alpha}} -1) +1} \leq \frac{(q-q_1)*(e^{{\alpha}} -1) +1}{(d-d1)*(e^{{\alpha}} -1) +1} $}.
Therefore, we can remove  multiple pairs of {\footnotesize $ q_j$} and {\footnotesize $ d_j $} that do not satisfy Inequality \eqref{eq:cond1} at one time.

 Theorems \ref{thm:algo_extreme1} and \ref{thm:algo_extreme2} provide insights on transition matrices that lead to the extreme cases of  temporal privacy leakage, which are  in accordance with Remark \ref{remark}.


\begin{theorem}
 \label{thm:algo_extreme1}
If  for any two rows $  \bm{q }$ and $  \bm{d}$  chosen from $ P_i $ that satisfy $ q_i=d_i $ for $ i\in[1,n] $, we have $\mathcal{L}(\cdot)  = 0$.
\end{theorem}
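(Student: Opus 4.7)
The plan is to read the hypothesis as saying that every pair of rows of $P_i$ agrees component-wise, i.e., all $n$ rows of the transition matrix are identical. Intuitively this corresponds to a ``memoryless'' situation in which the conditional distribution $\Pr(l_i^{t-1}\mid l_i^t)$ (or $\Pr(l_i^{t}\mid l_i^{t-1})$) does not depend on the conditioning value at all, so the adversary gains nothing about $l_i^t$ from knowing the neighbouring time step, and the temporal privacy loss function should evaluate to $0$.

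First I would go back to the formulation of $\mathcal{L}(\alpha)$ in Equation \eqref{eq:lfp_M}, namely $\mathcal{L}(\alpha)=\sup_{\bm q,\bm d\in P_i}\log(\bm q\bm x/\bm d\bm x)$ subject to the constraints \eqref{eq:lfp_end}--\eqref{eq:lfp_end2}. Under the hypothesis $q_i=d_i$ for all $i\in[1,n]$, every pair of rows picked from $P_i$ satisfies $\bm q=\bm d$. Hence for any feasible $\bm x$ we have $\bm q\bm x=\bm d\bm x>0$, so the argument of the logarithm is identically $1$ and the objective is $0$ for every feasible point. The supremum of a constant zero function is $0$, so $\mathcal{L}(\alpha)=0$.

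As a sanity check I would also verify the conclusion through the closed-form expression \eqref{eq:lfp_sol_b}. With $\bm q=\bm d$ we have $q_j-d_j=0$ for every $j$, so by Corollary \ref{col:q_d} no index can be placed into the candidate sets $\bm q^{+}$ and $\bm d^{+}$; hence $q=\sum\bm q^{+}=0$ and $d=\sum\bm d^{+}=0$, and $\log\frac{q(e^\alpha-1)+1}{d(e^\alpha-1)+1}=\log\frac{1}{1}=0$. Taking the maximum over all $n(n-1)$ ordered pairs of rows still yields $0$, confirming $\mathcal{L}(\alpha)=0$.

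There is no real obstacle here: once the hypothesis is correctly translated into ``all rows of $P_i$ coincide'', both routes (the raw ratio $\bm q\bm x/\bm d\bm x$ and the closed form derived from Theorem \ref{thm:lfp}) collapse immediately. The only thing worth being careful about is that the two derivations are consistent with each other and with Remark \ref{remark}, which predicts this as the lower-bound extreme case of BPL/FPL; I would state this consistency explicitly to tie the result back to the earlier discussion.
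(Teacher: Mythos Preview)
Your proposal is correct. Your first route---arguing directly from the LFP objective \eqref{eq:lfp_M} that $\bm q\bm x/\bm d\bm x=1$ whenever $\bm q=\bm d$---is in fact more elementary than the paper's own proof, which instead traces the execution of Algorithm~\ref{algo:cal_bpl}: it argues that when $q_j=d_j$ the removal condition in Line~9 always fires, so every bit is stripped from $\bm q^+,\bm d^+$ and the algorithm terminates with $q=d=0$, whence $\mathcal L(\cdot)=0$. Your second route reaches the same endpoint $q=d=0$ but via Corollary~\ref{col:q_d} (the Line~3--4 filter never admits any index in the first place), which is a slightly cleaner way to say the same thing. The advantage of your direct LFP argument is that it does not depend on Theorem~\ref{thm:lfp} or on the algorithm being correct; the advantage of the paper's algorithm-tracing argument is that it doubles as a sanity check that Algorithm~\ref{algo:cal_bpl} handles this degenerate input as intended.
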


\begin{theorem}
\label{thm:algo_extreme2}
If  there exist two rows $  \bm{q }$ and $  \bm{d}$  in $ P_i $ that satisfy $ q_i=1 $ and $ d_i=0$ for a certain index $ i $, $\mathcal{L}(\cdot) $ in an identical function, i.e, $ \mathcal{L}(x)=x  $.
\end{theorem}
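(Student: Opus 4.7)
The plan is to combine an explicit lower bound on $\mathcal{L}(\alpha)$ produced by the hypothesized pair of rows with the general upper bound $\mathcal{L}(x) \leq x$ already recorded in Remark \ref{remark}.

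For the lower bound, I would instantiate the maximum in Equation \eqref{eq:lfp_sol_b} at the two specific rows guaranteed by the hypothesis: a row $\bm{q}$ with $q_i = 1$ and a row $\bm{d}$ with $d_i = 0$. Because rows of a transition matrix sum to $1$, the choice $q_i = 1$ forces $q_j = 0$ for every $j \neq i$, so the only index on which $\bm{q}$ strictly exceeds $\bm{d}$ is $i$. Following Corollary \ref{col:q_d} I would therefore take $\bm{q}^+ = \{q_i\} = \{1\}$ and $\bm{d}^+ = \{d_i\} = \{0\}$, giving $q = 1$ and $d = 0$. Substituting into the optimal-value expression of Theorem \ref{thm:lfp} yields $\log\frac{1\cdot(e^\alpha-1)+1}{0\cdot(e^\alpha-1)+1} = \log e^\alpha = \alpha$. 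Since $\mathcal{L}(\alpha)$ is the maximum over all ordered row pairs from $P_i$, this single witness establishes $\mathcal{L}(\alpha) \geq \alpha$.

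For the matching upper bound, Remark \ref{remark} already states $\mathcal{L}(x) \leq x$; if one prefers a self-contained argument, it follows directly from $\frac{\bm{q}\bm{x}}{\bm{d}\bm{x}} \leq \frac{x_{\max}}{x_{\min}} \leq e^\alpha$ together with the fact that each row of $P_i$ sums to $1$, so the coefficients $q_j$ and $d_j$ act as convex weights. Combining both inequalities yields $\mathcal{L}(x) = x$, i.e., the identity function.

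The only mild obstacle I anticipate is justifying the conditions of Theorem \ref{thm:lfp} when $d = 0$, since Inequality \eqref{eq:cond1} then involves the formal ratio $q_i/d_i = 1/0$. I would handle this either by treating it as a limiting case $d_i \to 0^+$ (the sufficient conditions hold strictly along the limit and the LFP objective is continuous in the coefficients), or by bypassing Theorem \ref{thm:lfp} altogether and evaluating the original objective $\bm{q}\bm{x}/\bm{d}\bm{x}$ directly at a feasible point where $x_i/x_j = e^\alpha$ for all $j \neq i$; both routes deliver the same value $\alpha$ without requiring any literal division by zero.
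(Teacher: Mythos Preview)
Your proposal is correct and follows essentially the same approach as the paper: both arguments use the hypothesized row pair to extract $q=1$, $d=0$ and hence the lower bound $\mathcal{L}(\alpha)\geq\alpha$, then invoke the general upper bound $\mathcal{L}(x)\leq x$ to conclude equality. Your treatment is in fact more careful than the paper's, which simply asserts that Algorithm~\ref{algo:cal_bpl} terminates with $q=1$, $d=0$ and that no other row pair can exceed $\alpha$; your explicit handling of the $d_i=0$ division issue and the self-contained convex-weight argument for the upper bound are welcome additions that the paper omits.
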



\textbf{Complexity}.
The time complexity for solving one linear-fractional programming problem (Lines 3$ \sim $11) w.r.t. two specific rows of the transition matrix is {\small $ O(n^2) $} because Line \ref{algo:iterate} may iterate {\small $ n(n-1) $} times in the worst case.
The overall time complexity of Algorithm \ref{algo:cal_bpl}  is {\small $ O(n^4) $} since there are $ n(n-1) $ permutations of different pairs of $ \bm{q} $ and $ \bm{d} $.


\subsection{Quasi-quadratic Algorithm}
\label{sec:log-linear}
When Algorithm \ref{algo:cal_bpl} runs on different time points for continuous data release, we have a constant $ P_i $ and different $ \alpha $ as inputs at each time point.
Our observation is that, there may exist some common computations when Algorithm  \ref{algo:cal_bpl} takes different inputs of $ \alpha $.
More specifically, we want to know that,  for  given  $ \bm{q} $ and $ \bm{d} $ which are two rows chosen from $ P_i $, whether or not the final $ q $ and $ d $ (when stopping update in Line 11 in Algorithm  \ref{algo:cal_bpl})  keep the same for any input of $ \alpha $,  so that we can precompute $ q $ and $ d $ and do not need Lines 5$ \sim $11 at next run with a different $ \alpha $.
Since  $  \bm{q},\bm{d}$ and $ \alpha $ indicate a specific LFP problem in Equation \eqref{eq:lfp}$ \sim $\eqref{eq:lfp_end2}, we attempt to directly obtain the $ q $ and $ p $ in Theorem \ref{thm:lfp}.
Unfortunately, we find that such $ q $ and $ d $ do not keep the same for different $ \alpha $ w.r.t. given $ \bm{q } $ and $ \bm{d } $.
However, interestingly we find that, for any input $ \alpha $, there are only several possible pairs of $ q $ and $ d $ when the update in Lines 6$ \sim $11 of Algorithm \ref{algo:cal_bpl} is terminated, as shown in Theorem \ref{thm:lfp_k_case}.

%

\begin{theorem}
	\label{thm:lfp_k_case}
	Let $ \bm{q} $ and $ \bm{d} $ be two vectors drawn from rows of a transition matrix.
	Assume $ q_i\neq d_i, i \in [1,n]$, for each $ q_i \in \bm{q}$ and $ d_i \in \bm{d}$.\footnote{The case of $ q_i = d_i $ is proven in Theorem \ref{thm:algo_extreme1}.}
	Without loss of generality, we assume {\footnotesize $ \frac{q_1}{d_1} > \cdots > \frac{q_k}{d_k} > 1 \geq  \frac{q_{k+1}}{d_{k+1}} >  \cdots > \frac{q_n}{d_n} $}, then there are only $ k $ pairs of $ q $ and $ d $ that satisfy Inequalities \eqref{eq:cond1} and \eqref{eq:cond2} for given $ \bm{q} $ and $ \bm{d} $.
	\begin{myAlignSSS}
		\text{case 1:  }&  \frac{q_k}{d_k} > \frac{q(e^\alpha -1)+1}{d(e^\alpha-1)+1} \geq 1 \text{ where } q= \sum\nolimits_{i=1}^{k}q_i, d= \sum\nolimits_{i=1}^{k}d_i; \nonumber \\
		\text{case 2:  }&  \frac{q_{k-1}}{d_{k-1}}> \frac{q(e^\alpha-1)+1}{d(e^\alpha-1)+1}  \geq \frac{q_k}{d_k}  \text{ where } q= \sum\nolimits_{i=1}^{k-1}q_i, d= \sum\nolimits_{i=1}^{k-1}d_i; \nonumber \\
		& \vdotswithin{=}  \vdotswithin{=--------------}   \vdotswithin{=------} \notag \\
		\text{case k:  }& \frac{q_1}{d_1} > \frac{q(e^\alpha-1)+1}{d(e^\alpha-1)+1} \geq \frac{q_2}{d_2}  \text{ where } q= q_1, d=d_1. \nonumber
	\end{myAlignSSS}
\end{theorem}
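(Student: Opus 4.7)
The plan is to combine the necessary condition of Corollary \ref{col:q_d} with the equilibrium characterization of Theorem \ref{thm:lfp} to force the optimal subset $\bm{q}^+$ to be a \emph{prefix} of $(q_1,\ldots,q_k)$ under the ratio ordering, which immediately limits the number of candidate $(q,d)$ pairs to exactly $k$.

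First, by Corollary \ref{col:q_d}, any $q_j\in\bm{q}^+$ must satisfy $q_j>d_j$, so under our ordering $\bm{q}^+\subseteq\{q_1,\ldots,q_k\}$ and the associated threshold $T := \frac{q(e^\alpha-1)+1}{d(e^\alpha-1)+1}$ satisfies $T\geq 1$ (from $q>d$ and $e^\alpha-1\geq 0$). Next, I would establish the prefix structure by contradiction: suppose $q_j\in\bm{q}^+$ for some $j$ but $q_i\notin\bm{q}^+$ for some $i<j$ with $i\leq k$. Inequality \eqref{eq:cond1} forces $\frac{q_j}{d_j}>T$ while Inequality \eqref{eq:cond2} forces $\frac{q_i}{d_i}\leq T$, contradicting the strict ordering $\frac{q_i}{d_i}>\frac{q_j}{d_j}$. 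Hence $\bm{q}^+=\{q_1,\ldots,q_m\}$ for some $m\in\{1,\ldots,k\}$, yielding exactly $k$ candidate pairs with $q=\sum_{i=1}^m q_i$ and $d=\sum_{i=1}^m d_i$; identifying prefix length $m$ with the theorem's ``case $k-m+1$'' recovers the stated pairs.

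Finally, I would derive the $T$-range for each case by locating the binding forms of \eqref{eq:cond1} and \eqref{eq:cond2}. For prefix length $m$, the binding instance of \eqref{eq:cond1} is $\frac{q_m}{d_m}>T$ (the smallest ratio inside the prefix), and the binding instance of \eqref{eq:cond2} is $T\geq\frac{q_{m+1}}{d_{m+1}}$ when $m<k$, because indices in $\{k+1,\ldots,n\}$ automatically satisfy $\frac{q_l}{d_l}\leq 1\leq T$. For the full prefix $m=k$ the lower bound collapses to $T\geq 1$, matching Case 1. The main obstacle will be book-keeping the boundary conventions (e.g.\ when $k=n$ the complement $\bm{q}^-$ may be empty and \eqref{eq:cond2} becomes vacuous) and confirming that the $k$ resulting $\alpha$-intervals tile $[0,\infty)$; the latter follows from monotonicity of $\alpha\mapsto T(\alpha)$ for a fixed prefix together with the boundary values $T(0)=1$ and $\lim_{\alpha\to\infty}T(\alpha)=q/d$.
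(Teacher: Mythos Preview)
Your proposal is correct and follows essentially the same approach as the paper: both invoke Corollary~\ref{col:q_d} to restrict attention to indices with $q_j>d_j$, then use Inequalities~\eqref{eq:cond1} and~\eqref{eq:cond2} against the ratio ordering to force $\bm{q}^+$ to be a prefix $\{q_1,\ldots,q_m\}$. The paper phrases the argument as a pigeonhole on the value of $T$ followed by case-by-case verification that including or excluding a given index would violate~\eqref{eq:cond1} or~\eqref{eq:cond2}, whereas you compress this into a single prefix-by-contradiction step; the content is the same.
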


More interestingly, we find that the values of $ q $ and $ d $ are monotonically decreasing with the increase of $ \alpha $. 
That is, when $ \alpha $ is increasing from $ 0 $ to $ \infty $,  the pairs of  $ q $ and $ d $ is transiting from case $ 1 $ to case $ 2 $,..., until to case $ k $.
In other words, the final $ q $ and $ d $ is constant in a certain range of $ \alpha $.
Hence, the mapping from a given $ \alpha $ to the optimal solution of a LFP problem w.r.t. $ \bm{q} $ and $ \bm{d} $ can be represented by a piecewise function as shown in Theorem  \ref{thm:lfp_seg_func}.

\begin{theorem}
	\label{thm:lfp_seg_func}
We can represent the function of the optimal solution of LPF problem w.r.t. given $ \bm{q} $ and $ \bm{d} $   by a piecewise function as follows, where $ q_i $ and $ d_i $ are the i-th elements of $ \bm{q} $ and $ \bm{d} $, respectively; and  {\scriptsize $ \alpha_j = \log \left( \frac{q_{k-j+1} - d_{k-j+1}}{q^{k-j+1}* d_{k-j+1} - d^{k-j+1}* q_{k-j+1}}  +1 \right)$} for $ j \in [1,k-1] $.
We call {\scriptsize $ \alpha_1,...,\alpha_{k-1} $} in the sub-domains as transition points; {\scriptsize $ q^1,\cdots,q^k $} and {\scriptsize $ d^1,\cdots,d^k $} as coefficients of the piecewise function.
	\begin{myAlignSSS}\label{eq:piecewise}
	&f_{\bm{q},\bm{d}}(\alpha)=   \frac{q(e^\alpha-1)+1}{d(e^\alpha-1)+1} \text{ where \space\space}  \nonumber \\  
	&	q=
	\begin{cases}
		q^k = \sum\nolimits_{i=1}^{k}q_i  \\
		q^{k-1} = \sum\nolimits_{i=1}^{k-1}q_i  \\
		\vdotswithin{=-----}     \\
		q^1 = q_1 .
	\end{cases}
	d=
	\begin{cases}
		d^k = \sum\nolimits_{i=1}^{k}d_i & 0 \leq \alpha <\alpha_1; \\
		d^{k-1}=\sum\nolimits_{i=1}^{k-1}d_i &  \alpha_1 \leq \alpha <\alpha_2; \\
		\vdotswithin{=-----} & \vdotswithin{=---}     \\
		d^1  =d_1 &  \alpha_{k-1} \leq \alpha .
	\end{cases}
\end{myAlignSSS}
\end{theorem}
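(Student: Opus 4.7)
The plan is to build directly on Theorem \ref{thm:lfp_k_case}, which already enumerates the $k$ candidate optimal pairs $(q,d)$, and to show that as $\alpha$ grows from $0$, the algorithm's active case transitions through them in order, with the breakpoints being precisely the stated $\alpha_j$. For each $j\in[1,k]$, I will write $f^j(\alpha) = \frac{q^j(e^\alpha-1)+1}{d^j(e^\alpha-1)+1}$ for the candidate value of the objective when the algorithm stops at the pair $(q^j,d^j)$.

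First, I would establish a monotonicity lemma for $f^j$. Direct differentiation gives $\frac{d f^j}{d\alpha} = \frac{e^\alpha(q^j-d^j)}{[d^j(e^\alpha-1)+1]^2}$. The row-ordering assumption $q_i/d_i>1$ for $i\leq k$ yields $q_i>d_i$ for all such $i$, hence $q^j>d^j$, so $f^j$ is strictly increasing on $[0,\infty)$ with $f^j(0)=1$ and $f^j(\alpha)\to q^j/d^j$ as $\alpha\to\infty$. This monotonicity is what makes the case-to-subinterval mapping unique.

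Second, I would derive each transition point. By Theorem \ref{thm:lfp_k_case}, the $j$-th case (using $(q^{k-j+1},d^{k-j+1})$) is active exactly while $f^{k-j+1}(\alpha)\in[\tfrac{q_{k-j+2}}{d_{k-j+2}},\tfrac{q_{k-j+1}}{d_{k-j+1}})$ (with the convention that the upper bound is $+\infty$ when $j=k$ and the lower bound is $1$ when $j=1$). The move from case $j$ to case $j+1$ happens at the unique $\alpha$ satisfying $f^{k-j+1}(\alpha)=\tfrac{q_{k-j+1}}{d_{k-j+1}}$. Cross-multiplying this scalar equation and solving the resulting linear equation in $e^\alpha$ gives
\[
\alpha_j \;=\; \log\!\left(\frac{q_{k-j+1}-d_{k-j+1}}{q^{k-j+1}d_{k-j+1}-d^{k-j+1}q_{k-j+1}}+1\right),
\]
which is exactly the stated formula. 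A quick sanity check establishes that the denominator equals $\sum_{i=1}^{k-j}(q_i d_{k-j+1}-d_i q_{k-j+1})$, every summand of which is positive because $i<k-j+1$ forces $q_i/d_i > q_{k-j+1}/d_{k-j+1}$, so $\alpha_j$ is well-defined and positive.

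Third, I would verify continuity of the piecewise function at each $\alpha_j$. Using $q^{k-j+1}=q^{k-j}+q_{k-j+1}$ and $d^{k-j+1}=d^{k-j}+d_{k-j+1}$ to expand the identity $f^{k-j+1}(\alpha_j)=\tfrac{q_{k-j+1}}{d_{k-j+1}}$, a short algebraic manipulation yields $f^{k-j}(\alpha_j)=\tfrac{q_{k-j+1}}{d_{k-j+1}}$ as well, so the two adjacent pieces agree at the breakpoint. Combined with the strict monotonicity of $f^j$, this also implies $\alpha_1<\alpha_2<\cdots<\alpha_{k-1}$, since each successive transition value $q_{k-j+1}/d_{k-j+1}$ is larger and is being attained by a strictly increasing function. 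The intervals therefore partition $[0,\infty)$ exactly as claimed. The main obstacle is not any single computation but the bookkeeping between the case index $j$ and the shifted row index $k-j+1$, together with the need to chain monotonicity and continuity to guarantee that the sub-intervals cover $[0,\infty)$ without overlap or gap; without both ingredients simultaneously, the ``piecewise'' structure of $f_{\bm{q},\bm{d}}$ could fail to be well-defined.
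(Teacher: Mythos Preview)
Your proposal is correct and follows essentially the same approach as the paper: start in case~1 at $\alpha=0$, use the monotone increase of $f^{k-j+1}(\alpha)$ to locate the exit point of case~$j$ as the unique solution of $f^{k-j+1}(\alpha)=q_{k-j+1}/d_{k-j+1}$, and solve for $\alpha_j$ explicitly. Your treatment is in fact more careful than the paper's, which omits the positivity of the denominator, the continuity of $f_{\bm{q},\bm{d}}$ at each $\alpha_j$, and the strict ordering $\alpha_1<\cdots<\alpha_{k-1}$ that you supply.
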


\vspace{-10pt}
For convenience, we let   {\scriptsize $qArr=[ q^1, ..., q^k ] $}, 
{\scriptsize $dArr=[ d^1, ...,   d^{k}] $}, and
{\scriptsize $aArr=[ \alpha_{k-1}, ...,  0 ] $}.
Then,  \textit{qArr, dArr, aArr}  determine  a piecewise function in Equation \eqref{eq:piecewise}.
Also, we let $ qM $, $ dM $ and $ aM $ be $ n(n-1) \times n $ matrices in which  rows are \textit{qArr}, \textit{dArr} and \textit{aArr} w.r.t. distinct 2-permutations of $ \bm{q} $ and $ \bm{d} $ from $ n $ rows of $ P_i $. 
In other words, the three matrices determine $ n(n-1) $ piecewise functions.

In the following, we first design Algorithm \ref{algo:precomp1} to obtain $ qM $, $ dM $ and $ aM $.
We then design Algorithm \ref{algo:calc_precomp} to utilize the precomputed $ qM, dM, aM $ for calculating backward or forward temporal privacy leakage at each time point.

\vspace{-10pt}
\begin{algorithm}[h]
	\begin{scriptsize}
		\SetKwRepeat{doWhile}{do}{while}
		\SetKwComment{tcc}{//}{}
		\DontPrintSemicolon 
		\KwIn{$ P_i $ (i.e., $  P_i^B$ or $  P_i^F$)}
		\KwOut{matrices  \textit{qM}; \textit{dM};  \textit{aM}.}
	          \ForEach{two rows permutation $ \bm{q} $, $ \bm{d}  $ from $ n $ rows in $ P_i $ }
	{  
		\ForEach{$ q_j \in\bm{q},d_j \in\bm{d} $}    
		{ 		
				\lIf{$  q_j \leq d_j $} {$ q_j=0; d_j=0 $; } 
		}
	
		\uIf(\tcc*[f]{when Theorem \ref{thm:algo_extreme1} is true.}){$ \bm{q}$ and $ \bm{d}$ are  $ 0 $} 
	{	  qArr =$ [0]^n $;  dArr =$ [0]^n $;  aArr =$ [0]^n $; } 
	
	\uElse{

		Permutate $ \bm{q} $ and $ \bm{d} $ in the same way that makes $ \frac{q_1}{d_1} > \cdots > \frac{q_k}{d_k} $ where $ q_1,...,q_k $ are larger than $ 0 $;\;
		Let $qArr=\left[  q_1, ...,  \sum\nolimits_{i=1}^{n}q_i  \right] $;\;
		Let $dArr=\left[  d_1, ...,  \sum\nolimits_{i=1}^{n}d_i  \right] $;\;
		\ForEach{$i \in [1,n]$}    
		{ 		
			 aArr[i] = $ \log \frac{q_i - d_i}{qArr[i]*d_i - dArr[i]*q_i} $;\;
		}
	}
	Append  \textit{qArr, dArr, aArr} into new rows of \textit{qM, dM ,aM}, respectively;\;
	}
	\Return{ \textit{qM, dM, aM}}   
	\caption{Precompute the parameters.}
		\label{algo:precomp1}
	\end{scriptsize}
\end{algorithm}

We now use an example of $ \bm{q}=[0.2,0.3,0.5] $ and $ \bm{d}=[0.1,0,0.9] $ to demonstrate two notable points in Algorithm \ref{algo:precomp1}.
First, with such $ \bm{q}$ and $ \bm{d}$,  Line 7 results in  $ \bm{q}=[0.3,0.2,0] $ and  $ \bm{q}=[0,0.1,0] $ since only $ q_1=0.3>0 $ and  $ q_2=0.2>0 $.
Second, after the operation of Lines 10 and 11, we have \textit{aArr}= [Inf,1.47,NaN].

\begin{algorithm}[h]
	\begin{scriptsize}
		\SetKwRepeat{doWhile}{do}{while}
		\SetKwComment{tcc}{//}{}
		\DontPrintSemicolon 
		\KwIn{ $ \alpha $ (i.e., {\scriptsize ${ {{\alpha}^B_{t-1}}}$ or ${ {{\alpha}^F_{t+1}}}$}); $ qM $; $ dM $; $ aM $; $ \epsilon_t $ ({\scriptsize i.e., $\textit{PL}_0(\mathcal{M}^t) $}). }
		\KwOut{BPL or FPL at time $ t $}
		$ \mathcal{L} = 0$ ;\;
		\ForEach{ $ j \in [1,n(n-1)] $ }    
		{
			\textit{qArr, dArr, aArr} as $ j $-th rows in \textit{qM, dM ,aM}, respectively;\;
			Binary Search {\textit{aArr}[k]} that  {\textit{aArr}[k]} >$  \alpha \geq $ {\textit{aArr}[k+1]};\;
			PL = $\log  \frac{\textit{qArr}[k](e^{\alpha}-1)+1}{\textit{qArr}[k](e^{\alpha}-1)+1}$ ;\;
			\lIf{PL>$ \mathcal{L}  $}{$ \mathcal{L}  $ = PL;}
		}
		
		\Return{  {\scriptsize $ \mathcal{L}   + \epsilon_t$}}  
		\caption{Calculate BPL or FPL by  Precomputation.}
		\label{algo:calc_precomp}
	\end{scriptsize}
\end{algorithm}

Algorithm \ref{algo:precomp1} needs {\small $ O(n^3) $} time for precomputing the parameters  \textit{qM}, \textit{dM} and \textit{aM},   which only needs to be run one time and can be done offline.  
Algorithm \ref{algo:calc_precomp} for calculating privacy leakage at each time point needs {\small $ O(n^2\log n) $} time.

\subsection{Sub-linear Algorithm}
\label{sec:sub-linear}
In this section, we further design a sub-linear privacy leakage quantification algorithm by investigating how to generate a function of $ \mathcal{L}(\alpha) $, so that, given an arbitrary $ \alpha $, we can directly calculate the privacy loss.

\begin{corollary}
	\label{col:L_func}
Temporal privacy loss function $ \mathcal{L(\alpha)} $ can be represented as a piecewise function: $ \max_{\bm{q},\bm{d}\in P_i}  \log f_{\bm{q},\bm{d}}(\alpha)$.
\end{corollary}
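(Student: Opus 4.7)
The plan is to obtain Corollary \ref{col:L_func} as a direct consequence of Equation \eqref{eq:lfp_sol_b} together with Theorem \ref{thm:lfp_seg_func}, using only the observation that the class of piecewise functions (with finitely many breakpoints) is closed under pointwise maximum and under the strictly increasing transformation $\log$.

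First I would recall that Equation \eqref{eq:lfp_sol_b} rewrites the privacy loss function as
\begin{equation*}
\mathcal{L}(\alpha) \;=\; \max_{\bm{q},\bm{d}\in P_i}\;\log\frac{q(e^\alpha-1)+1}{d(e^\alpha-1)+1},
\end{equation*}
where, for each 2-permutation $(\bm{q},\bm{d})$ of rows of $P_i$, the coefficients $q$ and $d$ are those provided by the optimal solution of the LFP problem \eqref{eq:lfp}--\eqref{eq:lfp_end2}. Theorem \ref{thm:lfp_seg_func} then identifies, for each such fixed pair $(\bm{q},\bm{d})$, exactly how $(q,d)$ depends on $\alpha$: the dependence is piecewise constant on the at most $k$ sub-intervals separated by the transition points $\alpha_1,\dots,\alpha_{k-1}$, and on each sub-interval the ratio equals $f_{\bm{q},\bm{d}}(\alpha)$. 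Hence for the fixed pair $(\bm{q},\bm{d})$ the inner quantity of the max above is exactly $\log f_{\bm{q},\bm{d}}(\alpha)$, a piecewise-defined function whose pieces are smooth functions of $\alpha$.

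Next I would take the maximum over the $n(n-1)$ ordered pairs $(\bm{q},\bm{d})$ drawn from the rows of $P_i$. Since this is a finite collection of piecewise functions and each has finitely many breakpoints (at most $n-1$ from Theorem \ref{thm:lfp_seg_func}), their pointwise maximum
\begin{equation*}
\mathcal{L}(\alpha) \;=\; \max_{\bm{q},\bm{d}\in P_i}\;\log f_{\bm{q},\bm{d}}(\alpha)
\end{equation*}
is itself piecewise-defined, the new breakpoints being either the original breakpoints of the individual $\log f_{\bm{q},\bm{d}}$ or the crossing points between distinct pairs. This is exactly the representation claimed by the corollary.

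The argument is almost bookkeeping, so the main potential obstacle is only notational: making sure that the ``maximum'' in Equation \eqref{eq:lfp_sol_b} and the ``piecewise'' parametrization in Theorem \ref{thm:lfp_seg_func} are compatible—specifically, that for each $(\bm{q},\bm{d})$ the $q,d$ that achieve the LFP optimum are indeed the piecewise ones of Theorem \ref{thm:lfp_seg_func} on every sub-interval of $\alpha$. Once this identification is made explicit (which Theorem \ref{thm:lfp_seg_func} already does by construction), the corollary follows immediately, and no extra analytic work (continuity, monotonicity, etc.) is needed beyond what Theorems \ref{thm:lfp} and \ref{thm:lfp_seg_func} have already established.
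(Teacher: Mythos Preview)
Your proposal is correct and follows essentially the same approach as the paper: the paper's proof is a one-liner citing Equation~\eqref{eq:lfp_sol_b}, and your argument unpacks that citation by invoking Theorem~\ref{thm:lfp_seg_func} to identify the inner term with $\log f_{\bm{q},\bm{d}}(\alpha)$ and then noting that a finite pointwise maximum of piecewise functions is again piecewise. The extra detail you provide is accurate but not required beyond what the paper already considers established.
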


\begin{figure}[t]
	\centering
	\vspace{-15pt}
	\includegraphics[scale=0.28]{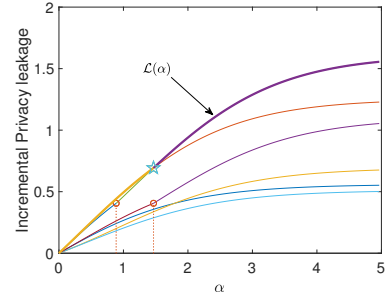} 
	\vspace{-10pt}
	\caption{ Illustration of function $ \mathcal{L(\alpha)} $ w.r.t. a $ 3 \times 3 $ transition matrix in Example \ref{example:L_func}. Each lines is $ f_{\bm{q},\bm{d}}(\alpha) $ w.r.t. distinct pairs of $ \bm{q}$ and $\bm{d} $. The bold line is $ \mathcal{L}(\alpha) $. The circle marks are transition points.}
	\label{fig:example_L_func}
	\vspace{-15pt}
\end{figure}

\begin{example}
	\label{example:L_func}
Figure \ref{fig:example_L_func} shows the $ \mathcal{L(\alpha)} $ function w.r.t.   {\scriptsize $ \big(\begin{smallmatrix} 0.1&0.2&0.7\\0.3&0.3&0.4\\ 0.5&0.3&0.2\end{smallmatrix} \big)$}.
Each line represents a piecewise function $ f_{\bm{q},\bm{d}}(\alpha) $ w.r.t. distinct pairs of $ \bm{q} $ and $  \bm{d}$  chosen from  $ P_i $.
Accoding to the definition of BPL and FPL,   $ \mathcal{L(\alpha)} $ function is   a piecewise function whose value is not less than any other functions for any $ \alpha $, e.g., the bold line in Figure \ref{fig:example_L_func}.
The pentagram, which is not any transition point, indicates an intersection of two piecewise functions.
\end{example}

Now, the challenge is how to find the ``top'' function which is larger than or equal to other piecewise functions.
A simple idea is to compare the piecewise functions in every sub-domains.
First, we list all transition points $ \alpha_1,...,\alpha_m $ of $ n(n-1) $ functions $ f_{\bm{q},\bm{d}}(\alpha) $  w.r.t. distinct pairs of $ \bm{q} $ and $ \bm{d} $ (i.e., all distinct values in \textit{aM}); then, we find the ``top'' piecewise function on each range between two consecutive transition points, which requires computation time $ O(n^3) $.
Despite the complexity, this  idea may not be correct because two piecewise functions may have an intersection between two consecutive transition points,  such as the pentagram  shown in Figure \ref{fig:example_L_func}.
Hence, we need a way to find such \textit{additional} transition points.
Our finding is that, if the top function is the same one at $ a_1 $ and $ a_2 $, then it is also the top function for any $ \alpha \in [a_1, a_2]$, which is formalized as the following theorem.

\begin{theorem}
\label{thm:gen_L_func}
Let {\small $ f(\alpha)=\frac{q(e^\alpha-1)+1}{d(e^\alpha-1)+1} $} and {\small $ f'(\alpha)=\frac{q'(e^\alpha-1)+1}{d'(e^\alpha-1)+1} $}.
If {\small $ f(a_1) \geq f'(a_1) $} and {\small $ f(a_2) \geq f'(a_2) $} in which {\small $ 0<a_1 < a_2 $}, we have {\small $ f(\alpha) \geq f'(\alpha) $} for any {\small $ a_1 \leq \alpha \leq a_2 $}.
\end{theorem}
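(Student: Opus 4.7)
The plan is to reduce the claim to a statement about a single-variable rational function and show that its sign can change at most once on the positive half-line, so that non-negativity at two endpoints forces non-negativity on the interval between them.

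First, I would make the substitution $u = e^{\alpha}-1$, which is strictly increasing on $\alpha > 0$, so it suffices to prove the inequality in the variable $u \in [u_1, u_2]$ where $u_j = e^{a_j}-1$. Writing the difference over a common denominator,
\begin{equation*}
f(\alpha) - f'(\alpha) \;=\; \frac{(qu+1)(d'u+1) - (q'u+1)(du+1)}{(du+1)(d'u+1)}.
\end{equation*}
Expanding the numerator and collecting terms, the constant terms cancel and a factor of $u$ appears, giving
\begin{equation*}
f(\alpha) - f'(\alpha) \;=\; \frac{u \cdot \bigl[(qd' - q'd)\,u + (q - q') - (d - d')\bigr]}{(du+1)(d'u+1)}.
\end{equation*}
Since $d,d' \in [0,1]$ and $u>0$, the denominator is strictly positive, so the sign of $f-f'$ on $u>0$ coincides with the sign of the bracketed linear function $h(u) := (qd'-q'd)u + (q-q')-(d-d')$.

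The key observation is that $h(u)$ is affine in $u$, so it has at most one root; in particular, it changes sign at most once on $(0,\infty)$. Consequently $f(\alpha)-f'(\alpha)$ also changes sign at most once on $u>0$. Suppose for contradiction that there exists $\alpha_0 \in (a_1, a_2)$ with $f(\alpha_0) < f'(\alpha_0)$, i.e.\ $h(u_0) < 0$ at the corresponding $u_0 \in (u_1, u_2)$. By hypothesis $h(u_1) \ge 0$ and $h(u_2) \ge 0$, so by continuity $h$ would need to vanish at least once in $(u_1, u_0)$ and at least once in $(u_0, u_2)$, giving two distinct sign changes of the affine function $h$ on $(0,\infty)$, which is impossible. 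Therefore $f(\alpha) \ge f'(\alpha)$ on the whole interval $[a_1,a_2]$.

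\textbf{Main obstacle.} The only delicate point is the algebraic reduction showing the numerator factors as $u$ times an affine function of $u$; the quadratic coefficient $qd'u^2$ on one side must exactly cancel with $q'du^2$ up to the combined term $(qd'-q'd)u^2$, and the constant ``$+1$'' terms must cancel completely, which together produce the crucial factor of $u$. Once this factoring is in hand, the ``at most one sign change'' argument is elementary. A minor subtlety is verifying $(du+1)(d'u+1) > 0$ on the domain of interest, which follows from $d,d' \ge 0$ (they are sums of transition probabilities) and $u > 0$.
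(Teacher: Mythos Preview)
Your proof is correct and follows essentially the same approach as the paper: both hinge on the observation that $f$ and $f'$ can intersect at most once for $\alpha>0$, so nonnegativity of $f-f'$ at two points forces nonnegativity in between. The paper merely asserts the ``at most one intersection'' fact and sketches a contradiction, whereas you supply the explicit algebraic reason (the numerator of $f-f'$ factors as $u$ times an affine function of $u=e^{\alpha}-1$), which makes your argument more complete.
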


Based on Theorem \ref{thm:gen_L_func}, we design Algorithm \ref{algo:gen_L_func} to generate a privacy loss function w.r.t. a given transition matrix.
Algorithm \ref{algo:gen_L_func} outputs a piecewise function of $ \mathcal{L(\alpha)}$:  $ qArr $, $ dArr $ are coefficients of the sub-functions, and $aArr $ contains the corresponding sub-domains.

We now analyze Algorithm \ref{algo:gen_L_func}, which  generate the privacy loss function $ \mathcal{L(\alpha)}$ in given a domain $ [a_1,a_m]$ by recursively find its sub-functions in sub-domains $ [a_1,a_k]$ and $ [a_k,a_m]$.
In Line 1, we check whether the parameters \textit{qM}, \textit{dM} and \textit{aM} are $ [0] $,  which implied that $ P_i $ is uniform and $ \mathcal{L}(\cdot) $ is 0 (see Lines 4 and 5 in Algorithm \ref{algo:precomp1}).
For convenience of the following analysis, we denote the definition of $ \mathcal{L}(\alpha) $  at point $\alpha = a_j $ by {\scriptsize $\mathcal{L}^j(\alpha) =\log \frac{q^j(e^{\alpha}-1)+1}{d^j(e^{\alpha}-1)+1}  $} with coefficients $ q^j $ and $ d^j $. 
In Lines 2, we obtain the definition of $ \mathcal{L}^1(\alpha) $  with coefficients $ q^1 $ and $ d^1 $.
In Line 3,  we obtain the definition of $ \mathcal{L}^m(\alpha) $ with coefficients $ q^m $ and $ d^m $.
From Line 4 to 20, we attempt to find the definition of $ \mathcal{L}(\alpha) $ at every point in $ [a_1, a_m] $.
There are two modules in this part.
The first one is from Line 5 to  12, in which we check that the definition of $ \mathcal{L}(\alpha) $ in $ (a_1, a_m) $ is $ \mathcal{L}^1(\alpha) $ or $ \mathcal{L}^m(\alpha) $.
The second module is from Line 13 to 19, in which we exam the definition of $ \mathcal{L}(\alpha) $ at point $ a_k $ (which may be the intersection of two sub-functions $ \mathcal{L}^1(\alpha) $ and $ \mathcal{L}^m(\alpha) $) and in the sub-domains $ [a_1, a_k] $ and $ [a_k, a_m] $.
Now, we dive into some details in these modules.
In Line 5, the condition is true when $ \mathcal{L}^1(\alpha) $ and $ \mathcal{L}^m(\alpha) $ are the same, or they intersect at $ \alpha = a_m $.
Then, $  \mathcal{L}^1  $ is the top function in $ [a_1, a_m] $ by Theorem \ref{thm:gen_L_func} because of {\scriptsize $  \mathcal{L}^1 (\alpha_1)\geq  \mathcal{L}^m(\alpha_1)$} and {\scriptsize $  \mathcal{L}^1 (\alpha_m) =  \mathcal{L}^m(\alpha_m) $}.
In Line 8,  the condition is true when  $ \mathcal{L}^1(\alpha) $ and $ \mathcal{L}^m(\alpha) $ intersect at $ a_1 $ or  they have no intersection in $ [a_1, a_m] $ (this implies {\scriptsize $ q^1=q^m $} and {\scriptsize $d^1 = d^m $}).
Then, $  \mathcal{L}^1  $ is the top function in $ [a_1, a_m] $ by Theorem \ref{thm:gen_L_func}.
In Lines 6$ \sim $8 and 10$ \sim $12, we store the coefficients and sub-domains in arrays.
From Line 14 to 19, we deal with the case of two functions intersecting at $ \alpha_k $ that {\scriptsize $ a_1 < \alpha_k < a_m $} by recursively invoking Algorithm \ref{algo:gen_L_func} to generate the sub-function of $ \mathcal{L}(\alpha) $ in {\scriptsize $ \left[\alpha_1, \alpha_k\right] $}  and {\scriptsize $ \left[\alpha_k, \alpha_m\right] $}.

\begin{algorithm}[h]
\begin{scriptsize}
\SetKwRepeat{doWhile}{do}{while}
\SetKwComment{tcc}{//}{}
\DontPrintSemicolon 
\KwIn{$ P_i $ (i.e., $  P_i^B$ or $  P_i^F$);  $ a_1 $; $ a_m $ ($ 0<a_1 \leq a_m$), \textit{qM}, \textit{dM}, \textit{aM}. }
\KwOut{vectors $ qArr $, $ dArr $ (i.e., coefficients), $  aArr $ (i.e., sub-domains).}
\lIf{ \textit{qM}, \textit{dM}, \textit{aM} are $ [0] $}{\Return \textit{qArr,dArr,aArr} as [0]}
Find  {\textit{maxPL1}} , $ {q^1} $,  $ {d^1} $ using \textbf{Algorithm \ref{algo:precomp1}} with  $ a_1 $, \textit{qM, dM, aM}, $ \epsilon_t=0 $;\;
	Find {\textit{maxPLm}} , $ {q^m} $,  $ {d^m} $ using \textbf{Algorithm \ref{algo:precomp1}} with $ a_m $, \textit{qM, dM, aM}, $ \epsilon_t=0 $; \;
$ k= \frac{q^m+d^1-q^1-d^m}{q^1*d^m-q^m*d^1} $;\;
\uIf{ $ a_1=a_m $ {\bf or} 
	{\scriptsize  {maxPLm} = $\log \frac{q^1(e^{a_m}-1)+1}{d^1(e^{a_m}-1)+1}$}
}{ 
$ aArr= \left[a_m\right]$; \tcc*{initialize vectors.}
$ qArr= \left[q^1\right] $; \; 
$ dArr= \left[d^1\right]$; \; }
\uElseIf{  
	 {\scriptsize {maxPL1} $  = \log \frac{q^m(e^{a_1}-1)+1}{d^m(e^{a_1}-1)+1}$ }  {\bf or}
 	k<=0			}
{
	$ aArr= \left[a_m\right]$; \tcc*{initialize vectors.}
	$ qArr= \left[q^m\right] $; \; 
	$ dArr= \left[d^m\right]$; \; }
\uElse{
$ a_k=\log(k +1)$; \;
Find  { $ qArr_k $, $ dArr_k $, $  aArr_k $} using \textbf{Algorithm \ref{algo:gen_L_func}} with  $ P_i, a_1, a_k $;\;
Find   $ qArr_m $, $ dArr_m $, $  aArr_m $ using \textbf{Algorithm \ref{algo:gen_L_func}} with $ P_i, a_k, a_m $;\;
$ aArr = \left[aArr_k, aArr_m \right]$; \tcc*{concatenate two vectors.}
$ qArr = \left[qArr_k, qArr_m \right] $;\;
$ dArr= \left[dArr_k, dArr_m \right] $;\;
}
\Return{ $ qArr $, $ dArr $, $  aArr $.}  
\caption{Generate Privacy Loss Function $ \mathcal{L}(\alpha) $.}
\label{algo:gen_L_func}
\end{scriptsize}
\end{algorithm}

When obtaining function $ \mathcal{L}(\alpha) $, we can directly calculate BPL or FPL as shown in Algorithm \ref{algo:cal_bpl_func}.
In Line 1 of Algorithm \ref{algo:cal_bpl_func}, we can perform a binary search because $  aArr $ is sorted.

\begin{algorithm}[h]
	\begin{scriptsize}
		\SetKwRepeat{doWhile}{do}{while}
		\SetKwComment{tcc}{//}{}
		\DontPrintSemicolon 
		\KwIn{ $ \alpha $ (i.e., {\scriptsize ${ {{\alpha}^B_{t-1}}}$ or ${ {{\alpha}^F_{t+1}}}$}); $ qArr $, $ dArr $, $  aArr $ (i.e., $ \mathcal{L}(\alpha) $); $ \epsilon_t $. }
		\KwOut{BPL or FPL at time $ t $}
		\lIf{ \textit{qArr}, \textit{dArr}, \textit{aArr} are $ [0] $}{ \Return $ \epsilon_t $}
		Binary Search $ \alpha_k $ in $  aArr $ that $ \alpha_k \geq \alpha $ and $ \alpha_{k+1}< \alpha$;\;
		\Return{ {\scriptsize $  \log \frac{qk(e^\alpha-1)+1}{dk(e^\alpha-1)+1} + \epsilon_t$}}  
		\caption{Calculate BPL or FPL by  $ \mathcal{L(\alpha)} $.}
		\label{algo:cal_bpl_func}
	\end{scriptsize}
\end{algorithm}

\textbf{Complexity}.
The  complexity of Algorithm \ref{algo:cal_bpl_func} for calculating privacy leakage at one time point is $ O(\log n) $, which makes it very efficient even for large $ n $ and $ T $.
 Algorithm \ref{algo:gen_L_func} itself requires $ O(n^2\log n+m\log m) $ time where $ m $ is the amount of transition points in $ [a_1, a_m] $, and its  parameters \textit{qM}, \textit{dM} and \textit{aM} need to be calculated by Algorithm \ref{algo:precomp1} which requires $ O(n^3) $ time; hence, in total,  generating $ \mathcal{L}(\cdot) $ needs $ O(n^3+m\log m) $ times.

\section{Bounding Temporal Privacy Leakage}
\label{sec:release_mechanism}
In this section, we design two privacy budget allocation strategies that can be used to convert a traditional DP mechanism into one protecting against TPL.

We first investigate the upper bound of BPL and FPL.
We have demonstrated that  BPL and FPL may increase over time as shown in Figure \ref{fig:tpl}.
A natural question is that: is there a limit of  BPL and FPL over time.

\begin{theorem}
        \label{thm:sup}
        Given a transition matrix {\footnotesize $ P_i^B $}  (or {\footnotesize $ P_i^F $}),
        let $ q $ and $ d $ be the ones that give the maximum value in Equation \eqref{eq:lfp_sol_b}  and {\small $ q \neq d $}.
        For {\small $ \mathcal{M}^t $} that satisfies $ \epsilon $-DP at each {\small $ t \in [1,T] $},
        there are four cases regarding the supremum of BPL (or FPL) over time.
        
\begin{myAlignSSS}
        \begin{cases}
                {\tiny \log \frac{{\sqrt { 4d{e^\epsilon }(1-q ) + {{( d + q{e^\epsilon } -1)}^2} }  + d + q{e^\epsilon } - 1}}{{2d}} }&   d \neq 0  \\
                \log \frac{(1-q) e^\epsilon}{1-q e^\epsilon }     &  {\footnotesize d=0 \text{ and }  q \neq 1 \text{ and }  \epsilon \leq \log(1/q)} \\
                \inf  &  {\footnotesize d=0 \text{ and }   q \neq 1 \text{ and }  \epsilon > \log(1/q)}  \\
                \inf &  d=0  \text{ and }  q=1 
        \end{cases} \nonumber
\end{myAlignSSS}
         
\end{theorem}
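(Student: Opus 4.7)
\textbf{Proof plan for Theorem~\ref{thm:sup}.}
My starting point is the recursion derived in Section~\ref{sec:calculate_tpl}: when every $\mathcal{M}^t$ satisfies $\epsilon$-DP, Equation~\eqref{eq:bpl_f} (resp.~\eqref{eq:fpl_f}) gives $\textit{BPL}(\mathcal{M}^t)=\mathcal{L}\bigl(\textit{BPL}(\mathcal{M}^{t-1})\bigr)+\epsilon$, and by Equation~\eqref{eq:lfp_sol_b} the maximizing pair $(q,d)$ yields $\mathcal{L}(\alpha)=\log\frac{q(e^\alpha-1)+1}{d(e^\alpha-1)+1}$. First I would observe that the map $f(\alpha):=\mathcal{L}(\alpha)+\epsilon$ is non-decreasing (differentiating each piece $\frac{q(e^\alpha-1)+1}{d(e^\alpha-1)+1}$ gives a derivative proportional to $q-d$, and the maximizing pair satisfies $q>d$ by Corollary~\ref{col:q_d}), so starting from $\textit{BPL}(\mathcal{M}^1)=\epsilon$ the iterates form a monotone-increasing sequence. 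Hence $\sup_t \textit{BPL}(\mathcal{M}^t)$ is either the least positive fixed point $\alpha^\star$ of $f$, or $+\infty$ if no such fixed point exists.

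Next I would set $y=e^{\alpha^\star}$ and rewrite the fixed-point equation $\alpha^\star=\mathcal{L}(\alpha^\star)+\epsilon$ as
\begin{equation*}
y\bigl(d(y-1)+1\bigr)=e^\epsilon\bigl(q(y-1)+1\bigr),
\end{equation*}
which rearranges to the quadratic
\begin{equation*}
dy^2+(1-d-qe^\epsilon)\,y-e^\epsilon(1-q)=0.
\end{equation*}
When $d\neq 0$, the product of the roots is $-e^\epsilon(1-q)/d\le 0$, so exactly one root is positive (and the positive one is at least $1$, as required because $\alpha^\star\ge\epsilon>0$). Solving via the quadratic formula and keeping the positive square root yields the first case of the theorem. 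I would then note that, since $f$ is continuous and non-decreasing and the iterates are bounded above by this unique positive root, the monotone-convergence argument delivers $\sup_t\textit{BPL}(\mathcal{M}^t)=\log y$, the stated supremum.

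For the remaining cases I would specialise to $d=0$, where the fixed-point equation collapses to the linear equation $(1-qe^\epsilon)\,y=e^\epsilon(1-q)$. If $q\ne 1$ and $\epsilon\le \log(1/q)$ (so that $qe^\epsilon<1$), solving gives $y=\frac{e^\epsilon(1-q)}{1-qe^\epsilon}$, which is the second case. If $q\ne 1$ and $\epsilon>\log(1/q)$, then $1-qe^\epsilon<0$ while the right-hand side is positive, so no positive $y$ exists; combined with monotonicity, the iterates must diverge, giving the third case. Finally, if $d=0$ and $q=1$ then $\mathcal{L}(\alpha)=\alpha$ is the identity (this is the extremal situation of Theorem~\ref{thm:algo_extreme2}), so the recursion becomes $\textit{BPL}(\mathcal{M}^t)=\textit{BPL}(\mathcal{M}^{t-1})+\epsilon$ and grows linearly without bound, giving the fourth case.

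The main obstacle I anticipate is the bookkeeping around the piecewise structure of $\mathcal{L}$: the maximizing pair $(q,d)$ of Equation~\eqref{eq:lfp_sol_b} may depend on the input $\alpha$, so I have to argue that once we localise to a neighbourhood of the candidate fixed point $\alpha^\star$, a single pair $(q,d)$ really does realise the maximum, and that the monotone iteration ultimately sees only this branch. I would handle this using Theorem~\ref{thm:lfp_seg_func}: the transition points are a finite set, the iterates eventually lie in one sub-domain, and within that sub-domain $f$ is a smooth, non-decreasing contraction whose unique positive fixed point is exactly the $\alpha^\star$ I solved for above.
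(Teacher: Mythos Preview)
Your approach is essentially the paper's: both set up the fixed-point equation $\alpha=\mathcal{L}(\alpha)+\epsilon$, substitute $y=e^{\alpha}$, and solve the resulting quadratic (or linear, when $d=0$) to read off the four cases. Your derivation of the quadratic $dy^2+(1-d-qe^\epsilon)y-e^\epsilon(1-q)=0$ is correct and yields exactly the closed form in the statement; the paper writes the linear coefficient with the opposite sign, but that appears to be a typo there since its stated root matches your equation.

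Where you go beyond the paper is in the justification that the supremum \emph{is} the fixed point: the paper simply asserts ``if BPL has a limit, then $\alpha=\mathcal{L}(\alpha)+\epsilon$'' and solves, without arguing monotonicity of the iterates or addressing the piecewise dependence of the maximizing pair $(q,d)$ on $\alpha$. Your monotone-convergence argument and your use of Theorem~\ref{thm:lfp_seg_func} to localise to a single sub-domain near $\alpha^\star$ fill genuine gaps that the paper leaves implicit, so your plan is if anything more complete than the published proof.
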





\setlength{\textfloatsep}{0pt}
\begin{figure}[t]
	\vspace{-15pt}
        \centering
        \hspace*{-5pt}
        \includegraphics[scale=0.3]{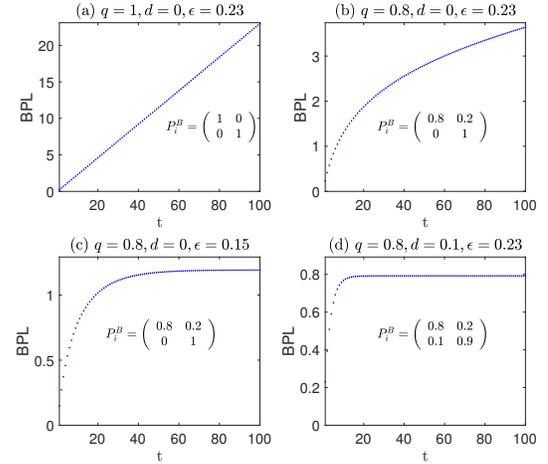} 
        \vspace{-15pt}
        \caption{Examples of the maximum BPL over time.}
        \label{fig:example_sup}
\end{figure}

\begin{example}[The supremum of  BPL over time]
        Suppose that $ \mathcal{M}^t $ satisfies $ \epsilon $-DP at each time point.
        In Figure \ref{fig:example_sup}, it shows the maximum BPL w.r.t. different  $ \epsilon $ and different transition matrices of {\small $ P_i^B $}.
        In (a) and (b), the supremum does not exist.
        In (c) and (d), we can calculate the supremum using Theorem \ref{thm:sup}.
        The results are in line with the ones from computing BPL step by step at each time point using Algorithm \ref{algo:cal_bpl}.
\end{example}

\begin{algorithm}[h]
	\begin{scriptsize}
		\SetKwRepeat{doWhile}{do}{while}
		\SetKwComment{tcc}{//}{}
		\DontPrintSemicolon 
		\KwIn{ $ \epsilon_t $ ; \textit{qM}; \textit{dM}. }
		\KwOut{the supremum of BPL or FPL over time, $ q $, $ d $}
		$ sup =0$; $ q =0$; $ d =0$ \;
		\ForEach{{$ q_c $} $ \in $ \textit{qM}, {$ d_c $} $ \in $\textit{dM}}
	{		Calculate a candidate $ sup_c $ using Theorem \ref{thm:sup} ;\;
			\lIf{ $ sup $ < $ sup_c $}{  $ sup $ = $ sup_c $; $ q $ = $ q_c $; $ d $ = $ d_c $}
	}
	
		\Return{$  sup $, $ q $, $ d $}  
		\caption{Find Supermum of BPL or FPL.}
		\label{algo:findSup}
	\end{scriptsize}
\end{algorithm}

Algorithm \ref{algo:findSup} for finding supreme of BPL or FPL is correct because, according to Theorem \ref{thm:lfp_k_case}, all possible $ q $ and $ d $ that give the maximum value in Equation \eqref{eq:lfp_sol_b}  are in the matrices $ qM $ and $ dM $.
Algorithm \ref{algo:findSup}  is useful not only for designing privacy budget allocation strategies in this section, but also for setting an appropriate parameter $ a_m $ as the input of Algorithm \ref{algo:gen_L_func} because we will show in experiments that a larger $ a_m $ makes  Algorithm \ref{algo:gen_L_func} time-consuming (while, too small $ a_m $ may result in failing  to calculate privacy leakage from $ \mathcal{L}(\alpha) $ if the input $ \alpha $ may be larger than $ a_m $).

\textbf{Achieving $ \alpha $-DP$ _\mathcal{T} $ by limiting upper bound.}
We now design a privacy budget allocation strategy utilizing Theorem \ref{thm:sup} to bound TPL.
Theorem \ref{thm:sup} tells us that, 
if it is not the strongest temporal correlation (i.e., {\small $ d=0 $} and {\small $ q=1 $}), 
we may bound  BPL or FPL within a desired value by allocating an appropriate privacy budget to a traditional DP mechanism at each time point.
In other words, we want a constant $ \epsilon_t $ that guarantee the supremum of TPL, which is equal to the sum of the supremum of BPL and the supremum of BPL subtracting  $ \epsilon_t $ by Equation \eqref{eq:tpl_comp2}, will not larger than $ \alpha $.
Based on this idea, we design Algorithm \ref{algo:release1} for solving such $ \epsilon_t $.

\vspace{-10pt}
\begin{algorithm}[h]
        \begin{scriptsize}
                \SetKwRepeat{doWhile}{do}{while}
                \SetKwComment{tcc}{//}{}
                \DontPrintSemicolon 
                \KwIn{ $  P_i^B$ and $  P_i^F$;  $ \alpha $ (desired privacy level). }
                \KwOut{Privacy budgets $ \epsilon_t$ satisfying $ \alpha $-DP$_\mathcal{T} $ at each  $ t $}
                	Find {$ qM_B $, $ dM_B $} using \textbf{Algorithm} \ref{algo:gen_L_func} with $ P_i^B $;\;
                	Find {$ qM_F$, $ dM_F $} using \textbf{Algorithm} \ref{algo:gen_L_func} with $ P_i^F $;\;
                	$ bingo = false $; $range= \alpha $;  $ e = 0.5*\alpha$;\;
                	\doWhile(\tcc*[f]{binary search.}){$ bingo = false $}
                	{
                		$ range = 0.5*range $;\;
                		Calculate $ sup_B $ by \textbf{Algorithm} \ref{algo:findSup} with $ e $ and $ qM_B $, $ dM_B $;\;
                		Calculate $ sup_F $ by \textbf{Algorithm} \ref{algo:findSup} with $ e $ and $ qM_F $, $ dM_F $;\;
                		
                		\lIf{$sup_B + sup_F - e =  \alpha$}{$ bingo = true $}
                		\lElseIf{$ sup_B + sup_F - e >  \alpha$}{ $ e = e- range $}
                		\lElse{$ e = e + range $}             		
                		
                	}
        
            
                \Return{ $ \epsilon_t = e$ }  
                \caption{{\small Achieving $ \alpha $-DP$ _\mathcal{T} $ by upper bound}}
                \label{algo:release1}
        \end{scriptsize}
\end{algorithm}
\vspace{-10pt}

\textbf{Achieving $ \alpha $-DP$ _\mathcal{T} $ by privacy leakage quantification.}
Algorithm \ref{algo:release1} allocates privacy budgets in a conservative way:
when $ T $ is short, the privacy leakage may not be increased to the supremum.
We now design Algorithm \ref{algo:release2} to overcome this drawback.
Observing the supremum of backward privacy loss in Figure \ref{fig:example_sup}(c)(d), BPL at the  first time point is much less than the supremum.
Similarly, it is easy to see that FPL at the last time point is much less than its supremum.
Hence, we attempt to allocate more privacy budgets to {\scriptsize $ \mathcal{M}^1 $} and {\scriptsize $ \mathcal{M}^T $} so that the temporal privacy leakage at every time points are \textit{exactly} equal to the desired level.
Specifically, if we want that BPL at two consecutive time points are exactly the same value {\footnotesize $ \alpha^B $}, i.e., {\scriptsize $  BPL(\mathcal{M}^t)= BPL(\mathcal{M}^{t+1}) = \alpha^B$}, we can derive that {\scriptsize $ \epsilon_t = \epsilon_{t+1} $} for {\scriptsize $ t\geq 2 $} (it is true for {\scriptsize $ t=1 $ } only when {\scriptsize $ \mathcal{L}^B(\cdot)=0 $}). 
Applying the same logic to FPL, we have a new strategy for allocating privacy budgets: assigning larger privacy budgets at time points $ 1 $ and $ T $, and constant values at time $ [2,T] $ to make sure that BPL at time points $ [1, T-1] $ are the same, denoted by $ \alpha^B $, and FPL at time $ [2, T] $ are the same, denoted by $ \alpha^F$.
Hence, we have {\scriptsize $ \epsilon_1  =\alpha^B$} and {\scriptsize $ \epsilon_T = \alpha^F $}.
Let  the value of privacy budget at $ [2,T] $ be $  \epsilon_m$.
We have (i) {\scriptsize $ \mathcal{L}^B(\alpha^B) + \epsilon_m = \alpha^B$}, (ii) {\scriptsize $ \mathcal{L}^F(\alpha^F) + \epsilon_m = \alpha^F$} and (iii)  {\scriptsize $ \alpha^B + \alpha^F - \epsilon_m = \alpha$}.
Combing (i) and (iii), we have Equation \eqref{eq:release2_a};  Combing (ii) and (iii), we have Equation \eqref{eq:release2_b}.
\begin{myAlignSS}
\mathcal{L}^B(\alpha^B) + \alpha^F = \alpha \label{eq:release2_a}\\
\mathcal{L}^F(\alpha^F) + \alpha^B = \alpha \label{eq:release2_b}
\end{myAlignSS}

\vspace{-8pt}
Based on the above idea, we design Algorithm \ref{algo:release2} to solve $ \alpha^B $  and $ \alpha^F $.
Since $ \alpha^B $ should be in $ [0, \alpha] $, we heuristically initialize $ \alpha^B $ with $ 0.5*\alpha $ in Line 3 and then use binary search to find appropriate $ \alpha^B $ and $ \alpha^F $ that satisfy Equations \eqref{eq:release2_a} and \eqref{eq:release2_b}.

\begin{algorithm}[h]
	\begin{scriptsize}
		\SetKwRepeat{doWhile}{do}{while}
		\SetKwComment{tcc}{//}{}
		\DontPrintSemicolon 
		\KwIn{ $  P_i^B$ and $  P_i^F$ ;  $ \alpha $ (desired privacy level for user $ i $). }
		\KwOut{Privacy budgets $ \epsilon_t, t\in [1,T] $ satisfying $ \alpha $-DP$_\mathcal{T} $ at each  $ t $}
			Find \textit{$ qM^B $, $ dM^B $, $ aM^B $} using \textbf{Algorithm} \ref{algo:gen_L_func} with  $  P_i^B$;\;
			Find \textit{$ qM^F $, $ dM^F$, $ aM^F $} using \textbf{Algorithm} \ref{algo:gen_L_func} with  $  P_i^F$;\;
			$ bingo = false $; $range= \alpha $;  $ a^B = 0.5* \alpha$;\;
			\doWhile(\tcc*[f]{binary search.}){$ bingo = false $}
			{
				$ range = 0.5*range $ ;\;
				Find $ L^B $ by \textbf{Algorithm} \ref{algo:cal_bpl_func} with $ a^B $, \textit{$ qM^B $, $ dM^B $, $ aM^B $}, $ \epsilon=0$;\;
				$ a^F = \alpha - L^B $; \tcc*{Equation \eqref{eq:release2_a}.}
				Find $ L^F $ by \textbf{Algorithm} \ref{algo:cal_bpl_func} with $ a^F$ with  \textit{$ qM^F $, $ dM^F$, $ aM^F $}, $ \epsilon =0$;\;
				\lIf{$  L^F +a^B = \alpha$}{$ bingo = true $}
				\lElseIf(\tcc*[f]{Equation \eqref{eq:release2_b}.}){$ L^F +a^B < \alpha$}{ $ a^B = a^B + range $}
				\lElse{$ a^B = a^B - range $}             		
				
			}
			
		\Return{ $ \epsilon_1 = a^B $; $ \epsilon_{t}= a^B+a^F - \alpha, t\in[2,T-1] $;  $ \epsilon_T = a^F $}  
		\caption{{\small Achieving $ \alpha $-DP$ _\mathcal{T} $ by quantification}}
		\label{algo:release2}
	\end{scriptsize}
\end{algorithm}

\vspace{-15pt}
\section{Experimental Evaluation}
\label{sec:exp}
In this section, we design experiments for the following:
(1) verifying the runtime and correctness of our privacy leakage quantification algorithms,
(2) investigating the impact of the temporal correlations on privacy leakage and
(3) evaluating the data release Algorithms \ref{algo:release1} and \ref{algo:release2}.
We implemented all the algorithms\footnote{Souce code: https://github.com/brahms2013/TPL} in Matlab2017b and conducted the experiments on a machine with an Intel Core i7 2.6 GHz CPU and 16G RAM  running macOS High Sierra.

%

\vspace{-10pt}
\subsection{Runtime of Privacy Quantification Algorithms}
\label{subsec:runtime}
In this section, we compare the runtime of our algorithms  with  \textsf{IBM ILOG CPLEX}\footnote{http://www-01.ibm.com/software/commerce/optimization/cplex-optimizer/. We use version 12.7.1.}, which is a well-known software for solving optimization problems, e.g., the linear-fractional programming problem \eqref{eq:lfp}$ \sim $\eqref{eq:lfp_end2} in our setting.

For verifying the correctness of  three privacy  quantifying algorithms, Algorithm \ref{algo:cal_bpl}, Algorithm \ref{algo:calc_precomp} and Algorithm \ref{algo:cal_bpl_func}, we generate $ 100 $ random transition matrices with dimension size $ n=30 $ and comparing the calculation results with the one solving LFP problem using \textsf{CPLEX}.
We verified that  all results obtained from our algorithms are identical to the one using \textsf{CPLEX}  w.r.t. the same transition matrix.

For testing the runtime of our algorithms, we run them $ 30 $ times with randomly generated transition matrices, and run \textsf{CPLEX} one time (because it is very time-consuming),
and then calculate the average runtime for each of them.
Since Algorithm \ref{algo:calc_precomp} needs  parameters that are precomputed by Algorithm \ref{algo:precomp1}, and Algorithm \ref{algo:cal_bpl_func} needs   $ \mathcal{L}(\cdot) $ that can be obtained using  Algorithm \ref{algo:gen_L_func}, we also test the runtime of these precomputations.
The results are shown in Figure \ref{fig:ex_runtime}.

\begin{figure*}[t]
        \centering
        \includegraphics[scale=0.38]{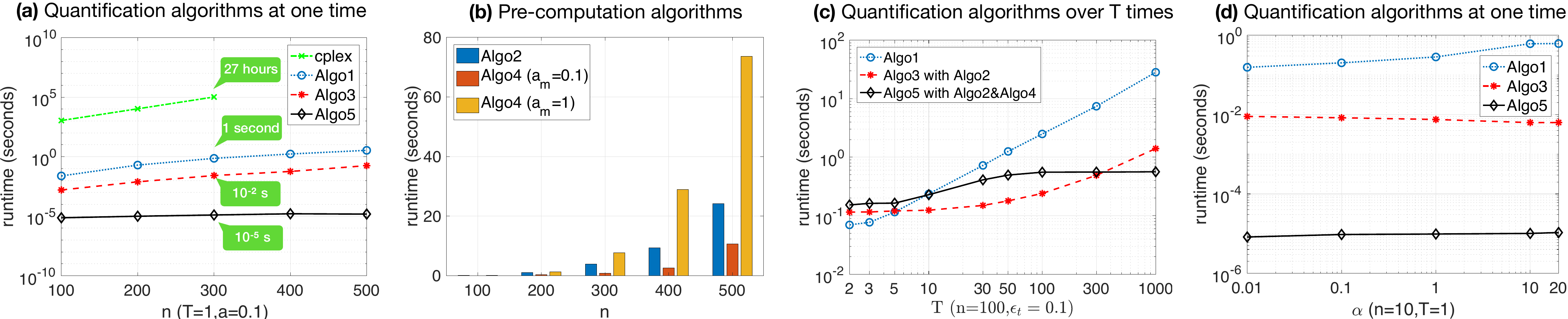} 
        \vspace{-10pt}
        \caption{Runtime of Temporal Privacy Leakage Quantification Algorithms.}
        \label{fig:ex_runtime}
        \vspace{-15pt}
\end{figure*}

\textbf{Runtime vs. $ n $}.
In Figures \ref{fig:ex_runtime}(a) and (b), we show the runtime of  privacy quantification algorithms and precomputation algorithms, respectively,
In Figure \ref{fig:ex_runtime}(a), each algorithm takes inputs of {\normalsize $ \alpha=0.1 $} and {\normalsize $ n \times n $} random probability matrices.
The runtime of all algorithms increase along with $ n $ because  the number of variables in our LFP problem is $ n $.
The proposed Algorithms \ref{algo:cal_bpl},  \ref{algo:calc_precomp} and \ref{algo:cal_bpl_func} significantly outperform \textsf{CPLEX}.
In Figure \ref{fig:ex_runtime}(b), we test precomputation procedures.
We observed that all algorithms are increasing with $ n $, but Algorithm \ref{algo:gen_L_func}  is more susceptible to $ a_m $.
Algorithm  \ref{algo:gen_L_func} with a larger $ a_m $ results in higher runtime because it performs binary search in $ [0, a_m] $.

\cyrev{
	These results are in line with our complexity analysis, which shows the complexity of Algorithms \ref{algo:precomp1} and  \ref{algo:gen_L_func}  are {\scriptsize $ O(n^3) $} and {\scriptsize $ O(n^2\log n + m\log m) $} ($ m $ is the amount of transition points and increasing with $ a_m $),  respectively.
	However, when $ n $ or $ a_m $ is large,  pre-computation Algorithms \ref{algo:precomp1} and  \ref{algo:gen_L_func} are time-consuming because we need to find optimal solutions for $ n*(n-1) $ LFP problems given a  $ n $-dimension transition matrix.
	This can be improved by advanced  computation tools such as parallel computing because here each LFP problem is independently solved, and such computation only needs to be run one time before starting to release private data. 
	Another interesting way to improve the runtime is to find the relationship between the optimal solutions of different LFP problems given a transition matrix (so that we can prune some computations).
	We defer this to future study.
}


\textbf{Runtime vs. $ T $}.
In Figure \ref{fig:ex_runtime}(c), we test the runtime of each privacy quantification algorithm integrated with their precomputations over different length of time points.
We want to know how can we benefit from these precomputations over time.
All algorithms take inputs of $ 100\times 100 $ matrices and {\scriptsize $ \epsilon_t=0.1$} for each time point $ t $.  
The parameters  of Algorithm \ref{algo:gen_L_func} need to be initialized by Algorithm \ref{algo:precomp1}, so we take them as an integrated module along with Algorithm \ref{algo:cal_bpl_func}.
It shows that,  Algorithm \ref{algo:cal_bpl} runs fast  if $ T $ is small. 
Algorithm \ref{algo:calc_precomp} becomes the most preferable if $ T $ is in $ [5,300] $.
However, when $ T $ is larger than 300, Algorithm \ref{algo:cal_bpl_func} with its precomputation (Algorithm \ref{algo:gen_L_func} with {\footnotesize $ a_m=(T+1)*\epsilon_t $} which is the worst case of supremum) is the fast one and its runtime is almost constant with the increase of $ T $.
Therefore, there is no best algorithm in efficiency without a known $ T $, but we can choose appropriate algorithm adaptively.

\textbf{Runtime vs. $ \alpha $}.
In Figure \ref{fig:ex_runtime}(d), we show that, a larger previous BPL (or the next FPL), i.e., $ \alpha $, may lead to higher runtime of Algorithm \ref{algo:cal_bpl}, whereas other algorithms are relatively stable for varying $ \alpha $.
The reason is that, when $ \alpha $ is large, 
Algorithm \ref{algo:cal_bpl} may take more time in 
Lines $ 9 $ and $ 10 $ for updating  each pair of $ q_j  \in \bm{q}^+$ and $ d_j  \in \bm{d}^+$ to satisfy Inequality \eqref{eq:cond1}.
An update in Line $ 10 $ is more likely to occur due to a large $ \alpha $ because {\footnotesize $  \frac{q(e^{{\alpha}} -1) + 1 }{d(e^{{\alpha}} - 1) + 1} $} is increasing with $ \alpha $.
However, such growth of runtime along with $ \alpha $ will not last so long because the update happens $ n $ times in the worse case.
As shown in Figure \ref{fig:ex_runtime}(b), when $ \alpha >10 $, the runtime of Algorithm \ref{algo:cal_bpl} becomes stable.

\vspace{-5pt}
\subsection{Impact of Temporal Correlations on TPL}
\label{subsec:ex_impact}

In this section, for the ease of exposition, we only present the impact of temporal correlations on BPL because
the growth of BPL and FPL are in the same way.

\textbf{The setting of temporal correlations.}
To evaluate if our privacy loss quantification algorithms can perform well under different degrees of temporal correlations, we find a way to {generate} the transition matrices to eliminate the effect of different correlation estimation algorithms or datasets.
First, we generate a transition matrix indicating the ``strongest'' correlation that contains probability $ 1.0 $ in its diagonal cells (this type of transition matrix will lead to an upper bound of TPL).
Then, we perform \textit{Laplacian smoothing}\cite{sorkine_laplacian_2004} 
 to \textit{uniformize} the probabilities of $ P_i $ (the uniform transition matrix will lead to an low bound of TPL). 
Let $ {p_{jk}}  $ be an element at the $ j $th row and $ k $th column of the matrix $ P_i $.
The uniformized probabilities $ \hat{p}_{jk}  $ are generated using Equation \eqref{eq:laplace_smoothing}, where $ s $ is a positive parameter that controls the degrees of uniformity of probabilities in each row.
Hence, a smaller $s$ means stronger temporal correlation.
We note that, different $ s $ are only comparable under the same $ n $. 
\begin{myAlignSSS}
	\hat{p}_{jk} = \frac{ p_{jk} +s }{\sum_{u=1}^{n}{(p_{ju} +s)}} \label{eq:laplace_smoothing}
\end{myAlignSSS}


We examined  $ s $ values ranging from 0.005 to 1 and set $n $ to $ 50 $ and $ 200 $. 
Let $ \varepsilon $ be the privacy budget of $ \mathcal{M}^t $ at each time point. 
We test $ \varepsilon =1$ and 0.1. 
The results are shown in Figure \ref{fig:ex_tpl} and  are summarized as follows.

\textbf{Privacy Leakage vs. $ \bm{s} $.}
Figure \ref{fig:ex_tpl} shows that the privacy leakage caused by a non-trivial temporal correlation will increase over time, and such growth first increases sharply and then remains stable because it is gradually close to its supremum.
The increase caused by a stronger temporal correlations (i.e., smaller $ s $) is steeper, and the time for the increase is longer.
Consequently, stronger correlations result in higher privacy leakage.

\textbf{Privacy Leakage vs. $ \bm{\varepsilon} $.}
Comparing Figures \ref{fig:ex_tpl}(a) and (b), we found that  $0.1$-DP  significantly delayed the growth of privacy leakage.
Taking $ s=0.005 $, for example, the noticeable increase continues for almost 8 timestamps when $ \varepsilon=1 $ (Figures \ref{fig:ex_tpl}(a)), whereas it continues for approximately 80 timestamps when $ \varepsilon=0.1 $ (Figures \ref{fig:ex_tpl}(b)).
However, after a sufficient long time, the privacy leakage in the case of $ \varepsilon=0.1 $ is not substantially lower than that of $ \varepsilon=1 $ under stronger temporal correlations.  
This is because, although the privacy leakage is eliminated at each time point by setting a small privacy budget, the adversaries can eventually learn sufficient information from the continuous releases.

\textbf{Privacy Leakage vs. $ n$.}
Under the same $ s $, TPL is smaller when $ n$ (dimension of the transition matrix) is larger, as shown in the lines $ s=0.005 $ with $ n =50$ and $ n =200$ of Figure \ref{fig:ex_tpl}. 
This is because the transition matrices tend to be uniform (weaker correlations) when the dimension is larger.


\begin{figure}[t]
        \centering
        \includegraphics[width=0.90\linewidth]{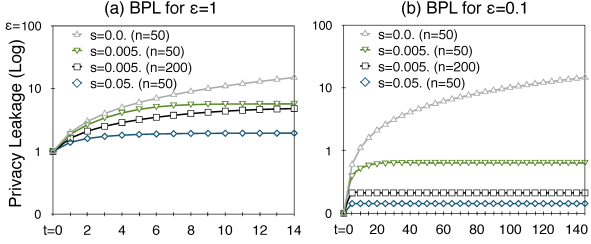} 
        \vspace{-10pt}
        \caption{Evaluation of BPL under different degrees of  correlations.}
        \label{fig:ex_tpl}
\end{figure}

\begin{figure}[t]
        \hspace*{-15pt}\includegraphics[scale=0.28]{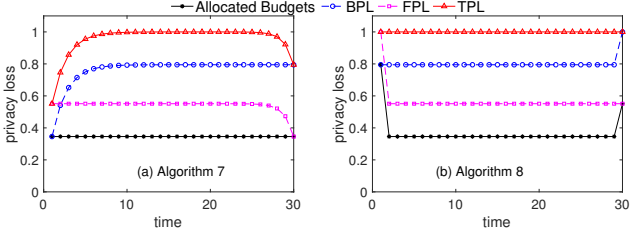} 
        \vspace{-22pt}
        \caption{Privacy Budget Allocation Schemes for  {\small $ 1 $-DP$_\mathcal{T} $}.}
        \label{fig:ex_budgets}
\end{figure}

\subsection{Evaluation of Data Releasing Algorithms}
In this section, we first show a visualization of privacy allocation of  Algorithms \ref{algo:release1} and \ref{algo:release2}, then we compare the data utility in terms of Laplace noise.

Figure \ref{fig:ex_budgets} shows an example of budget allocation, w.r.t. {\scriptsize $ P_i^B=\big(\begin{smallmatrix} 0.8&0.2\\ 0.2&0.8\end{smallmatrix} \big)$} and {\scriptsize $ P_i^F=\big(\begin{smallmatrix} 0.8&0.2\\ 0.1&0.9\end{smallmatrix} \big)$}.
The goal is $ 1 $-DP$ _\mathcal{T} $.
It is easy to see that Algorithm \ref{algo:release2} has better data utility because it exactly achieves the desired privacy level.

Figure \ref{fig:ex_utility} shows the data utility of Algorithms \ref{algo:release1} and \ref{algo:release2} with $ 2 $-DP$ _\mathcal{T} $.
We calculate the absolute value of the Laplace noise with the allocated budgets (as shown in Figure \ref{fig:ex_budgets}).
Higher value of noise indicates lower data utility.
In Figure \ref{fig:ex_utility}(a), we test the data utility under backward and forward temporal correlation both with parameter $ s=0.001 $, which means relatively strong correlation.
It shows that, when $ T $ is short, Algorithm \ref{algo:release2} outperforms Algorithm \ref{algo:release1}.
In Figure \ref{fig:ex_utility}(b), we investigate the data utility under different degree of correlations.
The dash line indicates the absolute value of Laplace noise if no temporal correlation exists (privacy budget is $ 2 $).
It is easy to see that the  data utility significantly decays under strong correlation $ s=0.01 $.

\begin{figure}[t]
        \centering
        \includegraphics[scale=0.26]{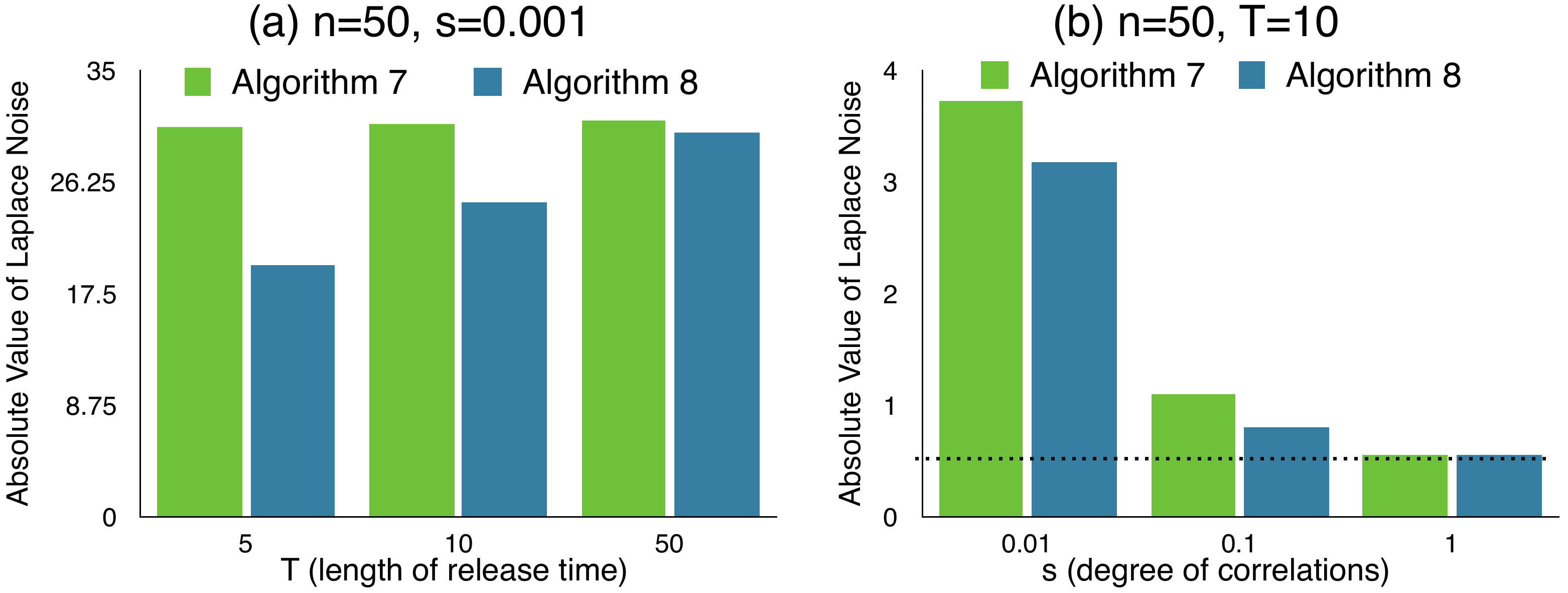} 
        \vspace{-10pt}
        \caption{Data utility of $ 2 $-DP$ _\mathcal{T} $ mechanisms.}
        \label{fig:ex_utility}
\end{figure}

\vspace{-1pt}
\section{Related Work}

Dwork et al. first studied \textit{differential privacy under continual observation} and proposed event-level/user-level privacy\cite{dwork_differential_2010-1}\cite{dwork_differential_2010}.
A plethora of studies have been conducted to investigate different problems in this setting, such as 
high dimensional data\cite{acs_case_2014}
\cite{xiao_dpcube:_2014}, 
infinite sequence\cite{kellaris_differentially_2014}\cite{cao_differentially_2015}\cite{cao_differentially_2016},
and real-time publishing\cite{fan_fast:_2013}
\cite{li_differentially_2015}. 
To the best of our knowledge, no study has reported the risk of differential privacy under temporal correlations for the continuous aggregate release setting.
Although \cite{xiao_protecting_2015} have considered a similar adversarial model in which the adversaries have prior knowledge of temporal correlations represented by Markov chains, they proposed  a mechanism extending differential privacy for releasing a private location, whereas we  focus on the scenario of continuous aggregate release with DP.

Several studies have questioned whether differential privacy is valid for correlated data. 
Kifer and Machanavajjhala\cite{kifer_no_2011} \cite{kifer_rigorous_2012}\cite{kifer_pufferfish:_2014} 
first raised the important issue that differential privacy may not guarantee privacy if  adversaries know the data correlations between tuples.
They\cite{kifer_no_2011} argued that it is not possible to ensure any utility in addition to privacy without making assumptions about the data-generating distribution and the background knowledge available to an adversary.
To this end, they proposed a general and customizable privacy framework called \textit{PufferFish}\cite{kifer_rigorous_2012}\cite{kifer_pufferfish:_2014}, in which the potential secrets, discriminative pairs, and data generation need to be explicitly defined. 
Song et al. \cite{song_pufferfish_2017} proposed Markov Quilt Mechanism when the correlations can be modeled by Bayesian Network.
Yang et al.\cite{yang_bayesian_2015}  investigated differential privacy on correlated tuples described using a proposed Gaussian correlation model.
The privacy leakage w.r.t. adversaries with specified prior knowledge can be efficiently computed.
Zhu et al. \cite{zhu_correlated_2015} proposed correlated differential privacy by defining the sensitivity of queries on correlated data.
Liu et al. \cite{changchang_liu_dependence_2016} proposed dependent differential privacy by introducing dependence coefficients for analyzing the sensitivity of different queries under probabilistic dependence between tuples.
Most of the above works dead with correlations between users in the database, i.e., user-user correlations, in the setting of one-shot data release, whereas we deal with the correlation among single user's data at different time points, i.e., temporal correlations, and focusing on the dynamic change of privacy guarantee in continuous data release.

On the other hand,  it is still controversial \cite{li_differential_2016} what should be the guarantee of DP on correlated data.
Li et al. \cite{li_differential_2016} proposed \textit{Personal Data Principle}  for clarifying the privacy guarantee of DP on correlated data.
It states that an individual's privacy is not violated if no data about the individual is used.
By doing this, one can ignore any correlation between this individual's data and other users' data, i.e.,user-user correlations.
On the other hand, the question of ``what is individual's data'', or ``what should be an appropriate notion of negiboring databases'' is tricky in many application scenarios such as genomic data.  
If we apply Personal Data Principle to the setting of continuous data release,  event-level privacy is not a good fit for protecting individual's privacy because a user's data at each time point is only a part of his/her whole data in the streaming database.
Our work shares the same insight with Personal Data Principle on this point: we show that the privacy loss of event-level privacy may increase over time under temporal correlation, while the guarantee of user-privacy is as expected.
We note that, when the data stream is infinite or the end of data release is unknown, we can only resort to event-level privacy or w-event privacy \cite{kellaris_differentially_2014}.
Our work provides useful tools against TPL in such setting.


\vspace{-10pt}
\section{Conclusions}
In this paper, we quantified the risk of differential privacy under temporal  correlations by formalizing, analyzing and calculating the privacy loss against adversaries who have knowledge of temporal  correlations.
Our analysis shows the privacy loss of event-level privacy may increase over time, while the privacy guarantee of user-level privacy is as expected.
We design fast algorithms for quantifying temporal privacy leakage which enables  private data release in real-time.
This work opens up interesting future research directions, such as investigating the privacy leakage under temporal correlations combining with other type of correlation models, and use our methods to enhance previous studies that neglected the effect of temporal  correlations in continuous data release.

\ifCLASSOPTIONcaptionsoff
  \newpage
\fi



%

\vspace{-1pt}
\bibliographystyle{IEEEtran}
\bibliography{./TKDE18-v5}

%

\vspace{-25pt}
\begin{IEEEbiography}[{\includegraphics[width=1in,height=1.25in,clip,keepaspectratio]{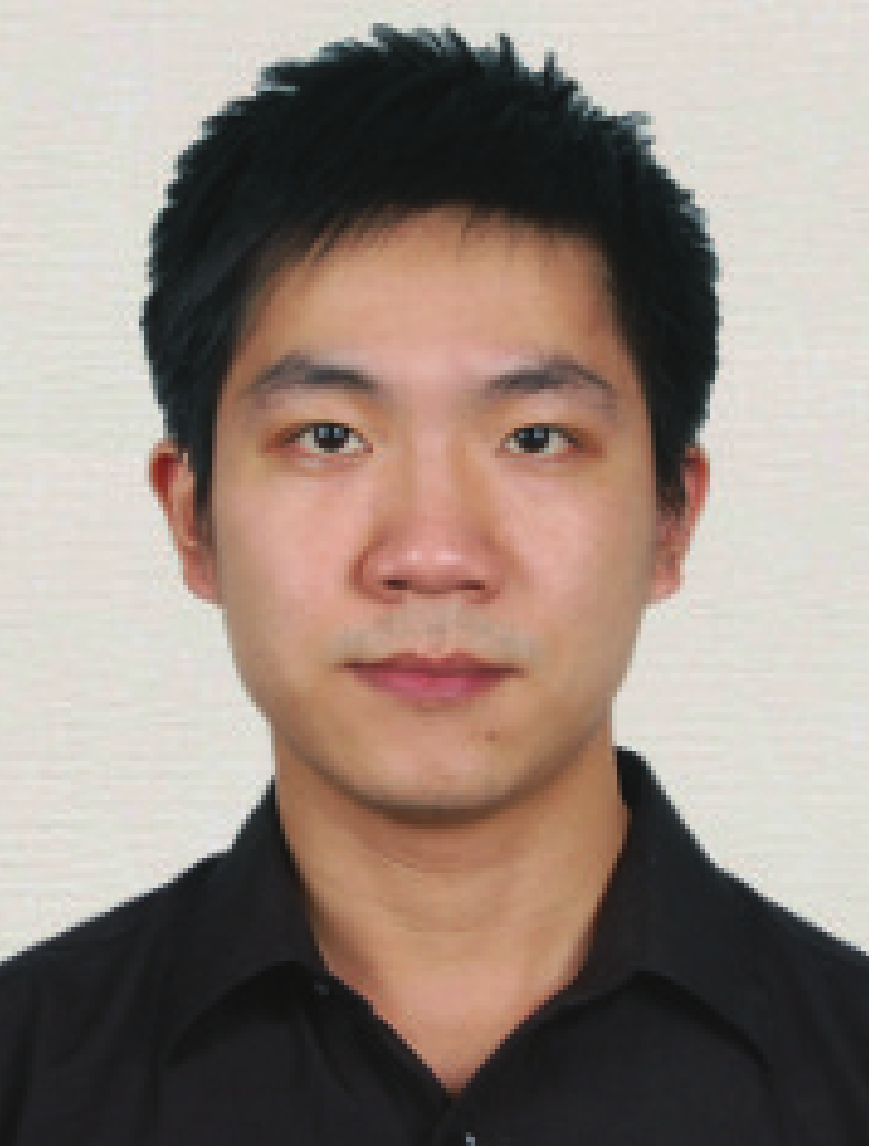}}]{Yang Cao}
  received B.S. degree from the School of Software Engineering, Northwestern Polytechnical University, China, in 2008, the M.S. degree and Ph.D. degree from the Graduate School of Informatics, Kyoto University, Japan, in 2014 and 2017, respectively. He is currently a Postdoctoral Fellow at the department of Math and Computer Science, Emory University, USA. His research interests are privacy preserving data publishing and mining.
\end{IEEEbiography}


\vspace{-25pt}
\begin{IEEEbiography}[{\includegraphics[width=1in,height=1.25in,clip,keepaspectratio]{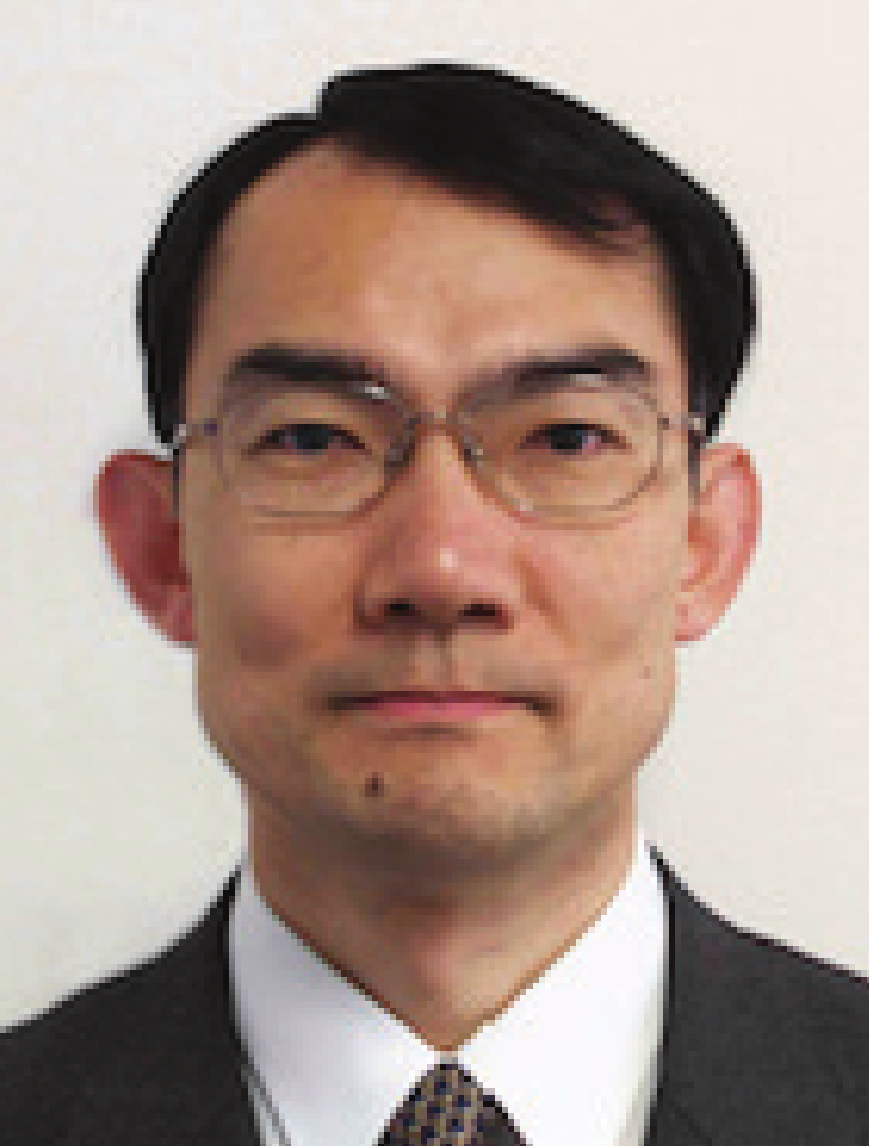}}]{Masatoshi Yoshikawa}
received the B.E., M.E. and Ph.D. degrees from Department of Information Science, Kyoto University in 1980, 1982 and 1985, respectively. Before joining Graduate School of Informatics, Kyoto University as a
professor in 2006, he has been a faculty member of Kyoto Sangyo University,
Nara Institute of Science and Technology, and Nagoya University.
His general research interests are in the area of databases.
His current research interests include multi-user routing algorithms and
services, theory and practice of privacy protection, and medical data
mining. He is a member of ACM and IPSJ.
\end{IEEEbiography}

\vspace{-25pt}
\begin{IEEEbiography}[{\includegraphics[width=1in,height=1.25in,clip]{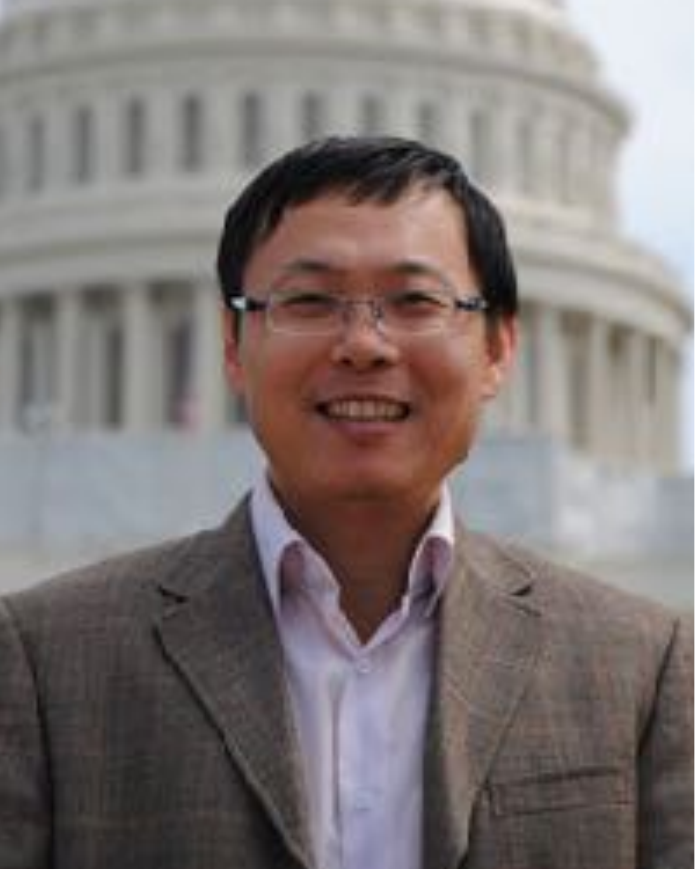}}]{Yonghui Xiao} received 3 B.S. degrees from Xi'an Jiaotong University, China, in 2005. He obtained the M.S. degree from Tsinghua University in 2011 after spending two years working in industry. He obtained the Ph.D. degree in the department of Math and Computer Science at Emory University in 2017. He is currently a software engineer at Google. 
\end{IEEEbiography}

\vspace{-25pt}
\begin{IEEEbiography}[{\includegraphics[width=1in,height=1.25in,clip,keepaspectratio]{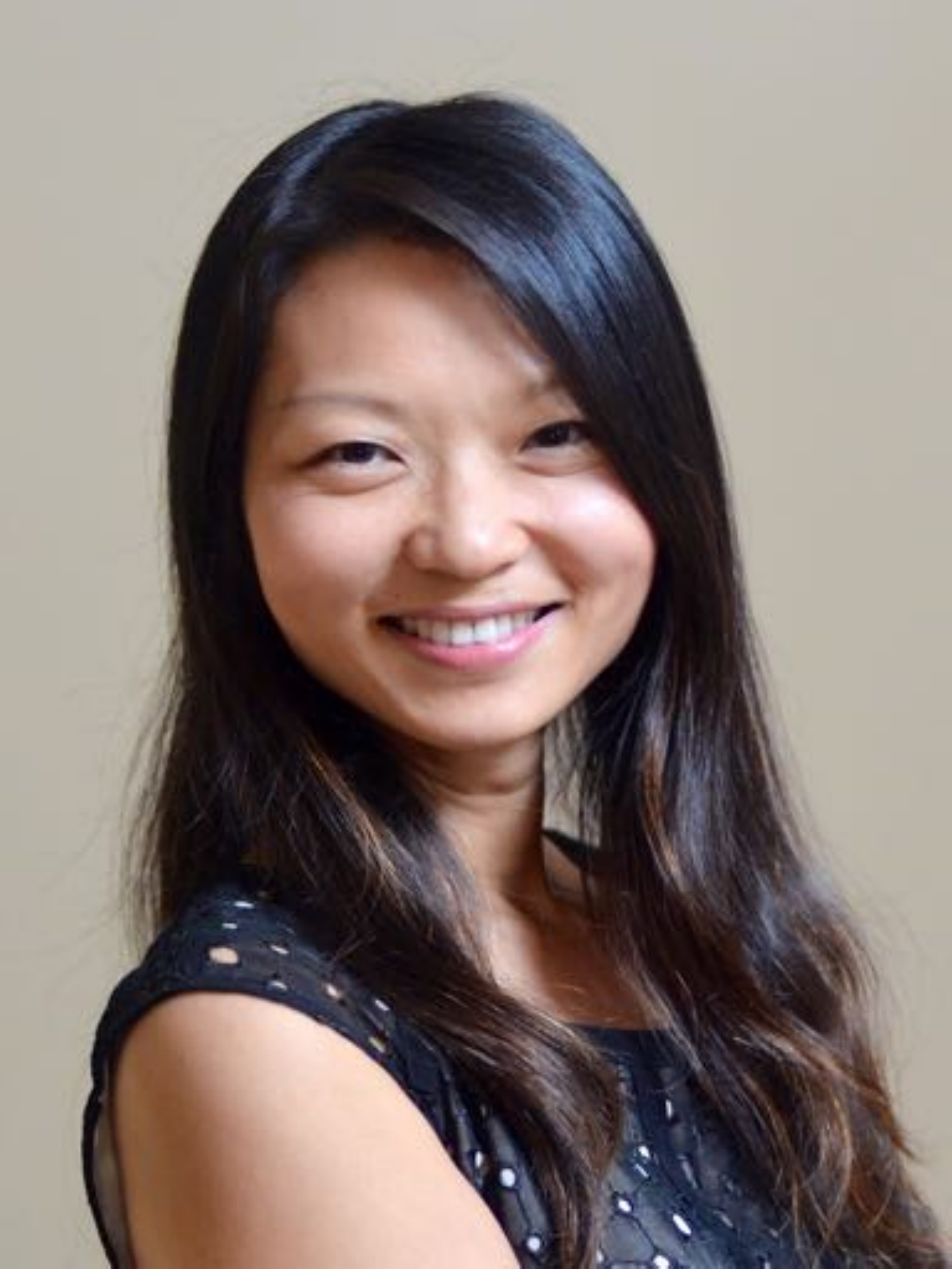}}]{Li Xiong}
	is a Winship Distinguished Research Professor of Computer Science (and Biomedical Informatics) at Emory University. She holds a PhD from Georgia Institute of Technology, an MS from Johns Hopkins University, and a BS from University of Science and Technology of China, all in Computer Science. She and her research group, Assured Information Management and Sharing (AIMS), conduct research that addresses both fundamental and applied questions at the interface of data privacy and security, spatiotemporal data management, and health informatics.
\end{IEEEbiography}




\newpage
\newpage
\newpage

\appendices
\section{Proof of Theorem \ref{thm:analysis_tpl}}
\cyrev{ 
\label{pf:analysis_tpl}
We need to prove
\begin{myAlignSSS}
	&  {\textit{TPL}}(A_i^\mathcal{T} ,\mathcal{M}^t)= \label{pf:eq:tpl1} \\
	&\sup_{ l^t, {l_i^t}' } 
	\left( 
	\sup_{\bm{r}^1}
	\log \frac{\Pr(\bm{r}^1|{l_i^t,D_\mathcal{K}^t})}{\Pr({\bm{r}^1|{{l_i^t}',D_\mathcal{K}^t}})}
	+
	\dots 
	+
	\sup_{\bm{r}^T}
	\log \frac{\Pr(\bm{r}^T|{l_i^t,D_\mathcal{K}^t})}{\Pr({\bm{r}^T|{{l_i^t}',D_\mathcal{K}^t}})} 
	\right)
	\label{pf:eq:tpl_expanded0}  \\
	&=
	\sup_{ l^t, {l_i^t}', \bm{r}^1}
	\log \frac{\Pr(\bm{r}^1|{l_i^t,D_\mathcal{K}^t})}{\Pr({\bm{r}^1|{{l_i^t}',D_\mathcal{K}^t}})}
	+
	\dots 
	+
	\sup_{l^t, {l_i^t}', \bm{r}^T}
	\log \frac{\Pr(\bm{r}^T|{l_i^t,D_\mathcal{K}^t})}{\Pr({\bm{r}^T|{{l_i^t}',D_\mathcal{K}^t}})} 
	\label{pf:eq:tpl_expanded1} \\
	&=
			\sup_{\substack{\bm{r}^1,...,\bm{r}^t,\\{l_i^t},{l_i^t}'}} 
			\log \frac{\Pr(\bm{r}^1,...,\bm{r}^t|{l_i^t,D_\mathcal{K}^t})}{\Pr({\bm{r}^1,...,\bm{r}^t|{{l_i^t}',D_\mathcal{K}^t}})}
		+
			\sup_{\substack{\bm{r}^t,...,\bm{r}^T,\\ {l_i^t},{l_i^t}'}} 
			\log \frac{\Pr(\bm{r}^t,...,\bm{r}^T|{l_i^t,D_\mathcal{K}^t})}{\Pr(\bm{r}^t,...,{\bm{r}^T|{{l_i^t}',D_\mathcal{K}^t}})} \nonumber\\
		&\hspace{2em} -
			\sup_{\bm{r}^t,{l_i^t},{l_i^t}'} 
			\log \frac{\Pr(\bm{r}^t|{l_i^t,D_\mathcal{K}^t})}{\Pr({\bm{r}^t|{{l_i^t}',D_\mathcal{K}^t}})} \label{pf:eq:tpl_expanded2}
\end{myAlignSSS}
Because  $\bm{r}^t $ at different $ t\in[1,T] $ are independent given $D^t=\{l_i^t,D_\mathcal{K}^t  \}$ or ${D^t}'=\{{l_i^t}',D_\mathcal{K}^t  \}$, we can derive Equation \eqref{pf:eq:tpl_expanded0}  from Equation \eqref{pf:eq:tpl1}, and derive Equation \eqref{pf:eq:tpl_expanded2} from Equation \eqref{pf:eq:tpl_expanded1}.
The remaining question is how to prove the derivation of  Equation \eqref{pf:eq:tpl_expanded1} from  Equation \eqref{pf:eq:tpl_expanded0}.
It is true because $ \mathcal{M}^t $ is the same  DP mechanism at different $ t $, and the output domains $ Range(\mathcal{M}^k)=Range(\mathcal{M}^j)$ for any $ k,j \in [1,T] $.
We explain in details in below.
\\
Assume we have $ \{ \bar{l_i^t}, \bar{l_i^{t'}} \} $ maximizing the first item in Equation \eqref{pf:eq:tpl_expanded1} with a certain value of  $ \bm{r}^1$, i.e.,
\begin{myAlignSSS} \{ \bar{l_i^t}, \bar{l_i^{t'}} \} = \argmax_{\bar{l_i^t}, \bar{l_i^{t'}}}  \log \frac{\Pr(\bm{r}^1|{l_i^t,D_\mathcal{K}^t})}{\Pr({\bm{r}^1|{{l_i^t}',D_\mathcal{K}^t}})}. \nonumber
\end{myAlignSSS}
Then,  $ \{ \bar{l_i^t}, \bar{l_i^{t'}} \} $ also maximizes other items in Equation \eqref{pf:eq:tpl_expanded1} such as {\tiny $ \sup\limits_{\substack{\bm{r}^k, {l_i^t},{l_i^t}'}} 
\log \frac{\Pr(\bm{r}^k|{l_i^t,D_\mathcal{K}^t})}{\Pr(\bm{r}^k|{{l_i^t}',D_\mathcal{K}^t})} $} with a certain value of  $ \bm{r}^k$ where $ k\in[1,T] $.
\noindent
\\
Formally, that is to say, there exists $ \bar{\bm{r}^k} \in Range(\mathcal{M}^k)$, for any $ k\in[1,T] $, that we have
\begin{myAlignSSS}   \log \frac{\Pr( \bar{\bm{r}^k}|{\bar{l_i^t},D_\mathcal{K}^t})}{\Pr({ \bar{\bm{r}^k}|{\bar{{l_i^t}'},D_\mathcal{K}^t}})}
	= 
\sup_{l^t, {l_i^t}', \bm{r}^k}
\log \frac{\Pr(\bm{r}^k|{l_i^t,D_\mathcal{K}^t})}{\Pr({\bm{r}^k|{{l_i^t}',D_\mathcal{K}^t}})}. \nonumber
\end{myAlignSSS}
Hence, we can derive Equation \eqref{pf:eq:tpl_expanded1} from Equation \eqref{pf:eq:tpl_expanded0}.
The theorem follows.\footnote{We note that, if the DP mechanisms at each time are significantly different, or the sensitivity of queries at each time are different, Equation \eqref{pf:eq:tpl_expanded1} may not be the supremum of TPL, but an upper bound of TPL.}
}
 
\section{Proof of Theorem \ref{thm:composition}}
\label{pf:composition}

\begin{proof}
        According to Definition \ref{def:tpl} of TPL,
        the overall privacy leakage of $\mathcal{M}^t $ and $\mathcal{M}^{t+1} $ in terms of DP$ _\mathcal{T} $ are
        \vspace{5pt}
        \begin{myAlignSSS}
                &\sup_{ D^t, D_\mathcal{M}^t,\ldots, {D_\mathcal{M}^{t+j}}', \bm{r}^1, \ldots, \bm{r}^T } \log
                {{\Pr ({\bm{r}^1}, \ldots, {\bm{r}^{T}}|{D^t,D^{t+1}})} \over {\Pr ({\bm{r}^{1}}, \ldots ,{\bm{r}^{T}}|{D^t}',{D^{t+1}}')}} \nonumber
                \\
                &= \sup_{\substack{ D^t, {D^t}', \\ \bm{r}^1, \ldots, \bm{r}^t} } \log
                {{\Pr ({\bm{r}^1}, \ldots, {\bm{r}^{t}}|{D^t})} \over {\Pr ({\bm{r}^{1}}, \ldots ,{\bm{r}^{t}}|{D^t}')}} 
                + \sup_{\substack{ D^t, {D^t}', \\ \bm{r}^{t+1}, \ldots, \bm{r}^T} } \log
                {{\Pr ({\bm{r}^{t+1}}, \ldots, {\bm{r}^{T}}|{D^{t+1}})} \over {\Pr ({\bm{r}^{t+1}}, \ldots ,{\bm{r}^{T}}|{D^{t+1}}')}} 
                \nonumber \\
                &= \textit{BPL}(\mathcal{M}^t) +  \textit{FPL}(\mathcal{M}^{t+1}) \nonumber \\
                &=\alpha_{t}^B + \alpha_{t+j}^F. \nonumber
        \end{myAlignSSS}
        
        Therefore, the theorem follows when $ j=1 $.
        
        According to Definition \ref{def:combined_pl},
        the overall privacy leakage of {\small $\mathcal{M}^t ,\ldots,\mathcal{M}^{t+j} $} in terms of DP$ _\mathcal{T} $ is as follows.
        \vspace{5pt}
        \begin{myAlignSSS}
                &\sup_{\substack{ D^t,...,D^{t+j},\\ {D^t}',...,{D^{t+j}}',\\ \bm{r}^1, \ldots, \bm{r}^T} } \log
                {{\Pr ({\bm{r}^1}, \ldots, {\bm{r}^{T}}|{D^t,\ldots,D^{t+j}})} \over {\Pr ({\bm{r}^{1}}, \ldots ,{\bm{r}^{T}}|{D^t}',\ldots, {D^{t+j}}')}} \nonumber
                \\
                &= \sup_{\substack{ D^t, {D^t}', \\ \bm{r}^1, \ldots, \bm{r}^t} } \log
                {{\Pr ({\bm{r}^1}, \ldots, {\bm{r}^{t}}|{D^t})} \over {\Pr ({\bm{r}^{1}}, \ldots ,{\bm{r}^{t}}|{D^t}')}} 
                + \sup_{\substack{ D^{t+j}, {D^{t+j}}', \\ \bm{r}^{t+j}, \ldots, \bm{r}^T} } \log
                {{\Pr ({\bm{r}^{t+j}}, \ldots, {\bm{r}^{T}}|{D^{t+j}})} \over {\Pr ({\bm{r}^{t+1}}, \ldots ,{\bm{r}^{T}}|{D^{t+j}}')}} 
                \nonumber \\
                &+\sup_{\substack{ D^{t+1},\\ {D^{t+1}}',\\ {\bm{r}}^{t+1} }} \log
                {{\Pr ({\bm{r}^{t+1}}|{D^{t+1}})} \over {\Pr ({\bm{r}^{t+1}}|{D^{t+1}}')}} 
                + \cdots +
                \sup_{\substack{ D^{t+j-1},\\ {D^{t+j-1}}',\\ {\bm{r}}^{t+j-1} }} \log
                {{\Pr ({\bm{r}^{t+j-1}}|{D^{t+j-1}})} \over {\Pr ({\bm{r}^{t+j-1}}|{D^{t+j-1}}')}} 
                \nonumber \\
                &= \textit{BPL}(\mathcal{M}^t) +  \textit{FPL}(\mathcal{M}^{t+j}) 
                +\textit{PL}_0(\mathcal{M}^{t+1}) + \cdots \textit{PL}_0(\mathcal{M}^{t+j-1}) \nonumber \\
                &=\alpha_{t}^B + \alpha_{t+j}^F + \sum_{k=1}^{k=j-1}\epsilon_{t+k}. \nonumber
        \end{myAlignSSS}
        
        Therefore, the theorem follows when $ j\geq 2 $.
\end{proof}

\section{Proof of Theorem \ref{thm:lfp}}
\label{appex:thm_tpl}
We need Dinkelbach's Theorem and Lemma \ref{lem:lfp_proof}  in our proof.
\begin{theorem}[Dinkelbach's Theorem\cite{dinkelbach_nonlinear_1967}]
        \label{thm:Dinkelbach}
        In a linear-fractional programming problem, suppose that the variable vector is $\boldsymbol{x}$ and the objective function is represented as {\small $\frac{Q(\boldsymbol{x})}{D(\boldsymbol{x})}$}.
        Vector $\boldsymbol{x^*}$ is an optimal solution if and only if 
        \begin{myAlignS}
                \max\{Q(\boldsymbol{x})-\lambda*D(\boldsymbol{x})\} = 0 
                \text{ where } \lambda= \frac{Q(\boldsymbol{x^*})}{D(\boldsymbol{x^*})}.  \label{eq:dink}
        \end{myAlignS}
\end{theorem}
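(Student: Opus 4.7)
The plan is to apply Dinkelbach's Theorem (Theorem \ref{thm:Dinkelbach}) by exhibiting an explicit candidate optimum $\bm{x}^\ast$ whose ratio equals $\lambda^\ast = \frac{q(e^{\alpha}-1)+1}{d(e^{\alpha}-1)+1}$ (writing $\alpha=\alpha_{t-1}^B$ throughout) and by verifying that $\max_{\bm{x}}\{\bm{q}\bm{x}-\lambda^\ast\bm{d}\bm{x}\}=0$ over the feasible region, with the maximum attained at $\bm{x}^\ast$. Since the objective in Dinkelbach's reformulation is linear in $\bm{x}$, the proof reduces to evaluating a linear functional at a single vertex of the constraint box.

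First, I would exploit the scale-invariance of the ratio $\bm{q}\bm{x}/\bm{d}\bm{x}$: under any positive rescaling of $\bm{x}$ the ratio is unchanged, so the constraints $e^{-\alpha}\leq x_j/x_k\leq e^{\alpha}$ together with positivity \eqref{eq:lfp_end2} let me WLOG normalise each coordinate to the box $x_j\in[1,e^{\alpha}]$. Then the linear functional $g(\bm{x})=\bm{q}\bm{x}-\lambda^\ast\bm{d}\bm{x}=\sum_j(q_j-\lambda^\ast d_j)\,x_j$ attains its max on the box at the vertex $\bm{x}^\ast$ given by $x_j^\ast=e^{\alpha}$ when $q_j-\lambda^\ast d_j>0$ and $x_j^\ast=1$ when $q_j-\lambda^\ast d_j\leq 0$.

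Next I would read the hypothesis as a specification of exactly this vertex. Whenever $d_j>0$, the sign of $q_j-\lambda^\ast d_j$ is governed by comparing $q_j/d_j$ to $\lambda^\ast$; inequality \eqref{eq:cond1} then forces every index in $\bm{q}^+,\bm{d}^+$ to the upper corner $x_j^\ast=e^{\alpha}$, and inequality \eqref{eq:cond2} forces every index in $\bm{q}^-,\bm{d}^-$ to the lower corner $x_k^\ast=1$. Evaluating at $\bm{x}^\ast$ then yields $\bm{q}\bm{x}^\ast=e^{\alpha}q+(1-q)=q(e^{\alpha}-1)+1$ and $\bm{d}\bm{x}^\ast=d(e^{\alpha}-1)+1$, using $\sum_i q_i=\sum_i d_i=1$ because $\bm{q},\bm{d}$ are rows of a stochastic transition matrix. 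Hence $Q(\bm{x}^\ast)/D(\bm{x}^\ast)=\lambda^\ast$ and $\max g=g(\bm{x}^\ast)=Q(\bm{x}^\ast)-\lambda^\ast D(\bm{x}^\ast)=0$, so by Dinkelbach's Theorem $\bm{x}^\ast$ is optimal with optimum ratio $\lambda^\ast$.

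The main obstacle is two technical subtleties. First, the constraints \eqref{eq:lfp_end2} are strict, so the feasible region is open and $\bm{x}^\ast$ sits on its closure; I would handle this by noting that the ratio is continuous and scale-invariant, so the supremum over the open set equals the maximum on the closure and the claimed value is attained in the limit. Second, I must treat $d_j=0$: when $q_j>0$ the coefficient $q_j-\lambda^\ast d_j=q_j>0$ is automatically positive, so $\bm{x}^\ast$ correctly assigns $x_j^\ast=e^{\alpha}$, consistent with placing $j$ in $\bm{q}^+$ by interpreting $q_j/d_j$ as $+\infty$; the case $q_j=d_j=0$ contributes nothing to either inner product and can be dropped. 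Once these edge cases are absorbed, the identity $\max g=0$ closes the argument.
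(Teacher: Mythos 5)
Your proposal does not prove the statement at hand. The statement is Dinkelbach's Theorem itself: the equivalence, for a general linear-fractional program, between ``$\bm{x}^*$ is optimal for $\max Q(\bm{x})/D(\bm{x})$'' and ``$\max_{\bm{x}}\{Q(\bm{x})-\lambda D(\bm{x})\}=0$ with $\lambda=Q(\bm{x}^*)/D(\bm{x}^*)$''. Your very first sentence takes this equivalence as given (``apply Dinkelbach's Theorem'') and then uses it to compute the optimal value of the specific problem \eqref{eq:lfp}$\sim$\eqref{eq:lfp_end2}. As a proof of Dinkelbach's Theorem this is circular; what you have actually written is a proof of Theorem~\ref{thm:lfp}, and it closely parallels the paper's own appendix argument for that theorem (exhibit the vertex $\bm{x}^*$ determined by the sign of $q_j-\lambda^* d_j$, use $\sum_j q_j=\sum_j d_j=1$ to evaluate $Q(\bm{x}^*)$ and $D(\bm{x}^*)$, and check that the parametric linear objective has maximum zero). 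Note that the paper never proves Dinkelbach's Theorem; it imports it from \cite{dinkelbach_nonlinear_1967} as a black box and deploys it exactly as you do.

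If you want a proof of the statement itself, it is a short two-direction argument that uses only positivity of the denominator on the feasible set, and none of the structure you invoke. Forward direction: if $\bm{x}^*$ is optimal, then every feasible $\bm{x}$ satisfies $Q(\bm{x})/D(\bm{x})\le\lambda$, and since $D(\bm{x})>0$ this gives $Q(\bm{x})-\lambda D(\bm{x})\le 0$; equality holds at $\bm{x}=\bm{x}^*$ by the definition of $\lambda$, so the maximum is $0$. Converse: if $\max_{\bm{x}}\{Q(\bm{x})-\lambda D(\bm{x})\}=0$ with $\lambda=Q(\bm{x}^*)/D(\bm{x}^*)$, then every feasible $\bm{x}$ satisfies $Q(\bm{x})\le\lambda D(\bm{x})$, hence $Q(\bm{x})/D(\bm{x})\le\lambda=Q(\bm{x}^*)/D(\bm{x}^*)$, so $\bm{x}^*$ is optimal. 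The box-vertex analysis, the stochasticity of $\bm{q}$ and $\bm{d}$, and the handling of the strict inequalities \eqref{eq:lfp_end2} are all ingredients of the application in Theorem~\ref{thm:lfp}, not of Dinkelbach's Theorem, and importing them here conflates the general tool with the particular problem it is used on.
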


\begin{lemma}
        \label{lem:lfp_proof}
        For the following maximization problem ({\small $ k_1,...,k_n\in \mathbb{R}$}) with the same constraints as the ones in the linear-fractional programming \eqref{eq:lfp}$ \sim $\eqref{eq:lfp_end2},
        \begin{myAlignS}
                \textnormal{maximize } &  \text{ } k_1 x_1+\cdots+k_{n} x_{n}  \nonumber  \\
                \textnormal{subject to  } 
                & \text{ } e^{-{{\alpha}^B_{t-1}}}*{x_k}  \leq {x_j}\leq e^{{{\alpha}^B_{t-1}}}*{x_k} ,   \nonumber \\
                &\text{ }  0 < {x_j}< 1 \text{ and }  0<x_k<1, \nonumber  \\
                & \text{ where } x_j,x_k \in \bm{x}, \text{ } j,k \in  [1,n]. \nonumber
        \end{myAlignS} 
        
        \noindent
        an optimal solution is as follows: if  {\small $ k_i>0 $}, let {\small $ x_i=e^{{{\alpha}^B_{t-1}}} m$} where $ m$ is a positive real number; if  {\small $ k_i\leq 0 $}, let {\small $  x_i=m$}.
\end{lemma}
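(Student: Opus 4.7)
The plan is to exhibit, for every feasible $\boldsymbol{x}$, a point of the form claimed in the lemma whose objective value dominates, then to check feasibility of that candidate. I would start by letting $m := \min_i x_i$, which is positive by the strict positivity constraint. The ratio constraints $e^{-\alpha^B_{t-1}} \leq x_j/x_k \leq e^{\alpha^B_{t-1}}$ would then force every coordinate to lie in the interval $[m,\; e^{\alpha^B_{t-1}} m]$.

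Next I would bound the objective coordinate-wise according to the sign of $k_i$. When $k_i > 0$, the upper bound $x_i \leq e^{\alpha^B_{t-1}} m$ gives $k_i x_i \leq k_i\, e^{\alpha^B_{t-1}} m$; when $k_i \leq 0$, the lower bound $x_i \geq m$ gives $k_i x_i \leq k_i\, m$. Summing yields
\begin{equation*}
\sum_i k_i x_i \;\leq\; \sum_{k_i > 0} k_i\, e^{\alpha^B_{t-1}} m \;+\; \sum_{k_i \leq 0} k_i\, m,
\end{equation*}
which is precisely the objective value at the candidate $\boldsymbol{x}^*$ specified by the lemma. Hence $\boldsymbol{x}^*$ matches the best that any feasible point with minimum coordinate $m$ can achieve.

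To finish, I would verify that $\boldsymbol{x}^*$ is itself feasible. Its pairwise ratios take only the three values $\{1,\; e^{\alpha^B_{t-1}},\; e^{-\alpha^B_{t-1}}\}$, all lying inside $[e^{-\alpha^B_{t-1}},\; e^{\alpha^B_{t-1}}]$, so the ratio constraints hold automatically. The open box $0 < x_i^* < 1$ is satisfied whenever the free scale parameter $m$ is chosen in the nonempty range $(0,\; e^{-\alpha^B_{t-1}})$, which ensures both $m > 0$ and $e^{\alpha^B_{t-1}} m < 1$.

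The main (mild) obstacle is the open box constraint $0 < x_i < 1$, which prevents a true attained maximum if $m$ were treated as fixed and the problem were viewed as a classical maximum-attaining program. I would defuse this by observing that the lemma only claims the \emph{form} of an optimal solution, with $m$ appearing as a scaling parameter; for any admissible $m$, the point $\boldsymbol{x}^*$ attains the coordinate-wise upper bound derived above, and in the downstream use inside Theorem \ref{thm:lfp} the enclosing LFP ratio $\boldsymbol{q}\boldsymbol{x}/\boldsymbol{d}\boldsymbol{x}$ is invariant under $\boldsymbol{x} \mapsto c\boldsymbol{x}$, so the value of $m$ ultimately cancels and only the structural split between the two ratio-extremes matters.
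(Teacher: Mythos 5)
Your proposal is correct and follows essentially the same route as the paper's proof: the paper normalizes by the smallest coordinate $x_n$ (setting $y_j = x_j/x_n \in [1, e^{\alpha^B_{t-1}}]$) and then picks each $y_j$ at the extreme dictated by the sign of $k_j$, which is exactly your coordinate-wise domination argument with $m = \min_i x_i$. Your explicit treatment of feasibility and of the open box constraint (noting that $m$ cancels in the downstream scale-invariant LFP ratio) is a small rigor improvement over the paper's version, which glosses over that point.
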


\begin{proof}
        Without loss of generality, we suppose that the smallest value in the optimal solution is $ x_{n}  $.
        Let $ y_j $ be {\footnotesize $\frac{ x_j}{x_{n} }$} for {\footnotesize $ j\in[1,n-1] $}; then, {\footnotesize $ 1 \leq y_j \leq {e^{{\alpha}^B_{t-1}}} $}.
        Replacing $ x_j $ with $ y_j $ and setting {\footnotesize $ x_{n} =m$}, we have a new objective function {\footnotesize $ \frac{1}{m} * ( k_1 y_1+\cdots+k_{n-1} y_{n-1}+k_{n} ) $} whose solution is equivalent to the original one.
        Because the only constraint is {\footnotesize $ 1 \leq y_j \leq e^{{{\alpha}^B_{t-1}}} $}, the following is an optimal solution for the maximum objective function: if {\footnotesize $ k_j>0 $}, let {\footnotesize $ y_j =e^{{{\alpha}^B_{t-1}}} $}; if {\footnotesize $ k_j \leq 0 $}, let {\small $ y_j=1$}. 
\end{proof}

\begin{proof}[Proof of Theorem \ref{thm:lfp}]
        We first prove that, under the conditions shown in Theorem \ref{thm:lfp}, i.e., Inequalities \eqref{eq:cond1} and \eqref{eq:cond2}, an optimal solution of the problem \eqref{eq:lfp}$ \sim $\eqref{eq:lfp_end2} is:
        
        \begin{myAlignS}
                \bm{x}^* =
                \begin{cases}
                        x_j = e^{{{\alpha}^B_{t-1}}}*m & x_j \in \bm{x}^+ \\
                        x_k = m & x_k \in \bm{x}^- \label{eq:opt_sol}
                \end{cases},
        \end{myAlignS}
where $ m $ is a positive real number.

        In combination with Dinkelbach's Theorem, we rewrite our objective function as {\footnotesize $\frac{Q(\bm{x})}{D(\bm{x})}  $} in which {\footnotesize $ Q(\bm{x})=\bm{q}\bm{x} $} and  {\footnotesize $ D(\bm{x})=\bm{d}\bm{x} $}.
        Substituting {\footnotesize $ \bm{x}^* $} of Equation \eqref{eq:opt_sol} into {\footnotesize $ Q(\bm{x}) $} and {\footnotesize $ D(\bm{x}) $}, we have {\footnotesize $ Q(\bm{x}^*)=q(e^{{{\alpha}^B_{t-1}}}-1)+1 $} and {\footnotesize $ D(\bm{x}^*)=d(e^{{{\alpha}^B_{t-1}}}-1)+1 $} (recall that {\footnotesize $ q=\sum{\bm{q}^+} $} and {\footnotesize $d=\sum{\bm{d}^+} $}.
        Then, we can rewrite Inequalities \eqref{eq:cond1} and \eqref{eq:cond2} in Theorem \ref{thm:lfp} as follows.
        \begin{myAlignS}
                &\frac{q_j}{d_j} > \frac{Q(\bm{x}^*)}{D(\bm{x}^*)}, & \forall j\in[1,n] \text{ where } q_j \in \bm{q}^+, d_j \in \bm{d}^+ \label{eq:1}\\
                &\frac{q_k}{d_k} \leq \frac{Q(\bm{x}^*) }{D(\bm{x}^*)}, & \forall k\in[1,n] \text{ where } q_k \in \bm{q}^-, d_k \in \bm{d}^- \label{eq:2}
        \end{myAlignS}

        Because {\footnotesize $ D(\bm{x}^*)>0 $}, to prove $ \bm{x}^* $ in \eqref{eq:opt_sol} is an optimal solution for {\footnotesize $ \frac{Q(\bm{x})}{D(\bm{x})} $}, we only need to prove the maximum value of the following equation is equal to $ 0 $. 
        \begin{myAlignS}
                \text{maximize } \{D(\bm{x}^*)  Q(\bm{x}) - Q(\bm{x}^*)  D(\bm{x}) \}=0.  \label{eq:thm_max}
        \end{myAlignS}
        
        We expand the above equation as follows.
        \begin{myAlignS}
                &\text{Eqn.}\eqref{eq:thm_max}=D(\bm{x}^*)  (\bm{q}^+ \bm{x}^+ + \bm{q}^- \bm{x}^-) 
                - 
                Q(\bm{x}^*)  (\bm{d}^+ \bm{x}^+ + \bm{d}^- \bm{x}^-)  \nonumber \\
                &= \big(D(\bm{x}^*)  \bm{q}^+ - Q(\bm{x}^*)  \bm{d}^+\big) \bm{x}^+
                + 
                \big(D(\bm{x}^*)  \bm{q}^- - Q(\bm{x}^*)   \bm{d}^-\big) \bm{x}^-   \label{eq:last}  
        \end{myAlignS}

        By Equations \eqref{eq:1} and \eqref{eq:2}), we have
        {\footnotesize $ D(\bm{x}^*)  \bm{q}^+ - Q(\bm{x}^*)  \bm{d}^+ >0$} and {\footnotesize $ D(\bm{x}^*)  \bm{q}^- - Q(\bm{x}^*)   \bm{d}^- \leq 0 $}.
        Hence, according to Lemma \ref{lem:lfp_proof}, we can obtain the maximum value in Equation \eqref{eq:thm_max} by setting {\footnotesize $ \bm{x}^+=[e^{{\alpha}^B_{t-1}}*m] $} and {\footnotesize $ \bm{x}^-=[m] $} where {\footnotesize $ m $} is a positive real number.
        We obtain the maximum value in Equation \eqref{eq:thm_max}.
        \begin{myAlignSS}
                &\big((D(\bm{x}^*) q - Q(\bm{x}^*) d)  e^{\varepsilon} m
                + 
                \big(D(\bm{x}^*)  (1-q) - Q(\bm{x}^*)  (1-d)) \big) m \nonumber \\
                =& \big(D(\bm{x}^*) (q e^\varepsilon+(1-q)) - Q(\bm{x}^*) (d  e^\varepsilon +(1-d))\big)m  \nonumber \\
                =& \big(D(\bm{x}^*) Q(\bm{x}^*) - Q(\bm{x}^*) D(\bm{x}^*)\big) m = 0  \nonumber 
        \end{myAlignSS}

        \noindent
        Therefore, by Dinkelbach's Theorem, $ \bm{x}^* $ is an optimal solution for the problem \eqref{eq:lfp}$ \sim $\eqref{eq:lfp_end2}.
        Substituting them into the objective function \eqref{eq:lfp}, we obtain the maximum value 
        {\footnotesize $ \frac{q(e^{{\alpha}_{t-1}^B} -1) + 1 }{d(e^{{\alpha}_{t-1}^B} - 1) + 1} $}.
\end{proof}

%
\section{Proof of Corollary \ref{col:q_d}}
\label{appex:col_q_d}
\begin{proof}
        The proof is by contradiction.
        \\ \indent
        First, assume that {\footnotesize $ q=d $}, i.e., {\footnotesize $ \sum{\bm{q}^+} = \sum{\bm{d}^+}$}.
        That is, there exist {\footnotesize $ q_j \in  \bm{q}^+$}  and {\footnotesize $ d_j \in  \bm{d}^+$} satisfying {\footnotesize $ \frac{q_j}{d_j} \leq 1$}.
        This leads to a contradiction to Inequality \eqref{eq:cond1} because of  {\footnotesize $ \frac{q_i}{d_i} > \frac{q(e^{{\alpha}_{t-1}^B} -1) + 1 }{d(e^{{\alpha}_{t-1}^B} - 1) + 1} =1 $}.
        Hence, $ q=d $ is false.
        \\ \indent
        Second, assume that {\footnotesize $ q<d $},  i.e., {\footnotesize $ \sum{\bm{q}^+} < \sum{\bm{d}^+}$}.
        Since {\footnotesize $ \sum{(\bm{q}^+ + \bm{q}^-)} = \sum{(\bm{d}^+ + \bm{d}^-)} = 1$}, we have {\footnotesize $ \sum{\bm{q}^-} > \sum{\bm{d}^-}$}.
        That is, there exist {\footnotesize $ q_k \in  \bm{q}^-$}  and {\footnotesize $ d_k \in  \bm{d}^-$} satisfying {\footnotesize $ \frac{q_k}{d_k} >1 $}.
        This leads to a contradiction to Inequality \eqref{eq:cond2} because of  {\tiny $ \frac{q_k}{d_k} < \frac{q(e^{{\alpha}_{t-1}^B} -1) + 1 }{d(e^{{\alpha}_{t-1}^B} - 1) + 1} <1 $}.
        Hence, $ q<d $ is false.
        \\
        Therefore, {\footnotesize $ q>d $} is true.
        Then, we have {\tiny $ \frac{q(e^{{\alpha}_{t-1}^B} -1) + 1 }{d(e^{{\alpha}_{t-1}^B} - 1) + 1} >1 $}.
        By Inequality \eqref{eq:cond1}, it follows that  {\tiny $ \frac{q_j}{d_j} >\frac{q(e^{{\alpha}_{t-1}^B} -1) + 1 }{d(e^{{\alpha}_{t-1}^B} - 1) + 1} >1$} in which {\footnotesize $ q_j \in \bm{q}^+ $} and  {\footnotesize $ d_j \in \bm{d}^+ $}.
\end{proof}

\section{Proof of Corollary \ref{col:L_func}}
\label{appex:cor_L_func}
\begin{proof}
	The corollary is true because of Equation \eqref{eq:lfp_sol_b}.
\end{proof}

\section{Proof of Theorem \ref{thm:algo_extreme1}}
\label{appex:algo_extreme1}
\begin{proof}
	It is clear that, if $ q_i=d_i $ for $ i\in[1,n] $ in $ \bm{q} $ and $ \bm{d} $, Line 9 in Algorithm \ref{algo:cal_bpl} is always true so that the flag \textsf{update} is alway true until  all bits are removed from $ \bm{q}^+ $ and $ \bm{d}^+ $. 
	That is to say, $ \bm{q}^+ $ and $ \bm{d}^+ $ are empty.
	Hence, in this case, Algorithm \ref{algo:cal_bpl} will be terminated with empty $ \bm{q}^+ $ and $ \bm{d}^+ $, which result in $ q=d=0 $.
	Since for any two rows $ \bm{q} $ and $ \bm{d} $ in the transition matrix $ P_i $, we have $ q=d=0 $, it follows $ \mathcal{L}(\cdot)=0 $.
\end{proof}

\section{Proof of Theorem \ref{thm:algo_extreme2}}
\label{appex:algo_extreme2}
\begin{proof}
	For such two rows $  \bm{q }$ and $  \bm{d}$  in $ P_i $ that satisfy $ q_i=1 $ and $ d_i=0$, it is easy to see that Algorithm \ref{algo:cal_bpl} results in $ q=1 $, $ d=0 $ and the flag \textsf{update} is false.
	Hence, we have a maximum solution of $ \alpha $  (i.e., {\scriptsize ${ {{\alpha}^B_{t-1}}}$ or ${ {{\alpha}^F_{t+1}}}$}) for such LFP problem.
	Since the maximum solution of LFP problem w.r.t. any two other rows in $ P_i $ cannot be larger than $ \alpha $, $ \mathcal{L}(\cdot) $ is an identical function, i.e., $ \mathcal{L}(x)=x $.
\end{proof}

\section{Proof of Theorem \ref{thm:lfp_k_case}}
\label{appex:lfp_k_case}
\begin{proof}
	According to Corollary \ref{col:q_d}, we have {\footnotesize $ \frac{q(e^\alpha)+1}{d(e^\alpha)+1}  >1$} because of $ q>d $.
	By the pigeonhole principle, there only exists $ k $ cases as shown in the theorem.
	We only need to pove that the statements about $ q $ and $ d $ for each case are true.
	In case $ k $, it is clear that {\small $ \frac{q_1}{d_1} > \frac{q_1(e^\alpha)+1}{d_1(e^\alpha)+1} $}.
	In case $ k-1 $, the final $ \bm{q}^+ $ and $ \bm{d}^+ $ are impossible to include $ \{q_3, \cdots, q_n\} $ and $ \{d_3, \cdots, d_n\} $, respectively; otherwise, it violates Inequality \eqref{eq:cond1}. 
	Additionally, the final $ \bm{q}^+ $ and $ \bm{d}^+ $ in case $ k-1 $ are impossible to exclude any elements in $ \{q_1, q_2\} $ and $ \{d_1, d_2\} $, respectively; otherwise,  it violates Inequality \eqref{eq:cond2}. 
	Similarly, we can verfiy all other cases are true.
\end{proof}

\section{Proof of Theorem \ref{thm:lfp_seg_func}}
\label{appex:lfp_seg_func}
\begin{proof}
	We now show that the pairs of $ q $ and $ d $ will transit from the case $ 1 $ to case $ k $ in Theorem \ref{thm:lfp_k_case} when   $ \alpha $ is increasing.
	When $ \alpha =0$, it is easy to see that {\scriptsize $  \frac{q(e^0-1)+1}{d(e^0-1)+1} =1$}; hence,  $ q $ and $ d $ fall into  case $ 1 $.
	We can see that {\scriptsize $ \frac{q(e^\alpha-1)+1}{d(e^\alpha-1)+1}$} continues to increase along with $ \alpha $ untill  $ \alpha=\alpha_1 $ which makes {\scriptsize $ \frac{q(e^{\alpha_1}-1)+1}{d(e^{\alpha_1}-1)+1}=\frac{q_k}{d_k} $} (i.e., {\scriptsize $ \alpha_1 = \log (\frac{q_k-d_k}{q*d_k - d*q_k}+1)  $}), where {\scriptsize $q= q^k= \sum\nolimits_{i=1}^{k}q_i$} and {\scriptsize $d= d^k=\sum\nolimits_{i=1}^{k}d_i$}.
	If {\scriptsize $ \alpha = {\alpha_1} $},  $ q $ and $ d $ fall into case $ 2 $.
	Hence, when {\scriptsize $ 0 \leq \alpha <  \alpha_1$}, $ q $ and $ d $ are the ones in case $ 1 $.
	We can also find the \textit{transition point} $ \alpha_2 $ by solving {\scriptsize $ \frac{q(e^{\alpha_2}-1)+1}{d(e^{\alpha_2}-1)+1}=\frac{q_{k-1}}{d_{k-1}} $} (i.e., {\scriptsize $ \alpha_2 = \log (\frac{q_{k-1}-d_{k-1}}{q*d_{k-1} - d*q_{k-1}}+1)  $}), where {\scriptsize $q= \sum\nolimits_{i=1}^{k-1}q_i$} and {\scriptsize {\scriptsize $d= \sum\nolimits_{i=1}^{k-1}d_i$}}.
	If {\scriptsize $ \alpha = {\alpha_2} $},  $ q $ and $ d $ fall into case $ 3 $.
	Hence, when {\scriptsize $ \alpha_1 \leq \alpha <  \alpha_2$}, $ q $ and $ d $ are the ones in case $ 2 $.
	Similarly, we can use the same logic to find {\scriptsize $ \alpha_3, \cdots, \alpha_{k-1} $}, which are called \textit{transition points}.
	If {\scriptsize $ \alpha \geq \alpha_{k-1} $},  $ q $ and $ d $ fall into the case $ k $ because of {\scriptsize $ \frac{q_1}{d_1} > \frac{q_1(e^\alpha-1)+1}{d_1(e^\alpha-1)+1}  $} regardless of how large $ \alpha $ is.
	Therefore, given $ \bm{q} $ and $ \bm{d} $, {\scriptsize $ f_{\bm{q},\bm{d}}(\alpha)=   \frac{q(e^\alpha-1)+1}{d(e^\alpha-1)+1}$} is a piecewise function 
	whose sub-domains and coefficients of $ q $ and $ d $ are shown in Equation \eqref{eq:piecewise}.
\end{proof}

\section{Proof of Theorem \ref{thm:gen_L_func}}
\label{appex:gen_L_func}
\begin{proof}
	It is easy to see that functions $ f(\alpha) =\frac{q(e^\alpha -1)+1}{d(e^\alpha -1)+1}$ and $ f'(\alpha)=\frac{q'(e^\alpha -1)+1}{d'(e^\alpha -1)+1} $ are at most one intersection when $ \alpha>0 $.
	Assume there is an intersection in $ \left[a_1, a_2\right] $.
	Since {\small $ f(a_1) \geq f(a_1) $} and both of $ f$ and $ f'$ are increasing functions, we have {\small $ f(a_2) \leq f'(a_2) $}, which is in contradiction with the given condition {\small $ f(a_2) \geq f'(a_2) $}.
	Hence, the assumption fails and they have no intersection in $ \left[a_1, a_2\right] $.
	In other words, $ f(\alpha) $ is always larger than $ f'(\alpha) $  when $ a_1< \alpha<a_2 $.
\end{proof}

\section{Proof of Theorem \ref{thm:sup}}
\label{appex:thm_sup}
\begin{proof}
        We denote the supremum of BPL or FPL over time by $ \alpha $.
        If $ d=0 $ and $ q=1 $, it is easy to see that $ \mathcal{L}^B(\cdot) = \mathcal{L}^F(\cdot) =1 * \cdot  $.
        In other words, BPL and FPL will increase linearly without limit; hence, $ \mathcal{F} $ does not exist.
        If BPL (or FPL) has a limit, since the current privacy leakage should be calculated based on the previous one (or the next one for FPL), we have $ \alpha = \mathcal{L}^B(\alpha) +\epsilon $ (see Equations \eqref{eq:bpl_f} or \eqref{eq:fpl_f}), namely {\footnotesize $ \alpha = \log \frac{q(e^\alpha-1)+1}{d(e^\alpha-1)+1} + \epsilon$}. 
        By expanding this equation and letting $ \mathcal{F} $ be $ e^\alpha $. we have an equation with a variable $ \mathcal{F} $, which $ \mathcal{F} >1$.
        \begin{equation}
        d\mathcal{F}^2+(qe^\epsilon + d -1)\mathcal{F} + (qe^\epsilon  - e^\epsilon ) = 0 \label{eq:thm_sup_proof} 
        \end{equation}
        
        Hence, the logarithm of the solution (if it exists)  in the above equation is the supremum of BPL or FPL.
        If $ d=0 $ and $ q\neq1 $, we have $\mathcal{F} = \frac{(1-q) e^\epsilon}{1-q e^\epsilon } $.
        To ensure a positive $ \mathcal{F} $, because of $ (1-q) e^\epsilon > 0 $, we need $ 1-q e^\epsilon >0 $, i.e., $ \epsilon <\log (1/q) $.
        Therefore, when $ d=0 $, $ q\neq 1 $ and $ \epsilon \geq \log (1/q) $, the supremum does not exist;
        when $ d=0 $, $ q\neq 1 $ and $ \epsilon <\log (1/q) $, we have the supremum $\log \frac{(1-q) e^\epsilon}{1-q e^\epsilon }  $.
        If $ d \neq 0 $, by solving the quadratic equation \eqref{eq:thm_sup_proof}, we can obtain a positive solution:
        $  {\tiny \frac{{\sqrt { 4d{e^\epsilon }(1-q ) + {{( d + q{e^\epsilon } -1)}^2} }  + d + q{e^\epsilon } - 1}}{{2d}} } $.
        It is easy to verify that  this solution always exist and is greater than $ 1 $.
        Therefore, the theorem follows.
\end{proof}

\end{document}